    \numberwithin{equation}{section}
\def\beq{\begin{equation}}
\def\eeq{\end{equation}}
\def\bdf{\begin{definition}}
\def\edf{\end{definition}}
\def\nn{\nonumber}
\def\bit{\begin{itemize}}
\def\eit{\end{itemize}}
\def\eqalign#1{\null\vcenter{\def\\{\cr}\openup\jot\m@th
  \ialign{\strut$\displaystyle{##}$\hfil&$\displaystyle{{}##}$\hfil
      \crcr#1\crcr}}\,}
\newcommand{\C}[1]{{\mathcal{#1}}}
\newcommand{\be}{\begin{equation}}
\newcommand{\ee}{\end{equation}}
    \def\e{{\epsilon}}
    \def\Re{{\rm Re \,}}
    \def\Im{{\rm Im \,}}
    \def\bigO{{\cal O}}
    \def\P2n{{\rm P}_{{\rm II}}^{(n)}}
    \newtheorem{theorem}{Theorem}[section]
    \newtheorem{lemma}[theorem]{Lemma}
    \newtheorem{Definition}[theorem]{Definition}
    \newenvironment{definition}{\begin{Definition}\rm}{\end{Definition}}
    \newtheorem{Remark}[theorem]{Remark}
    \newenvironment{remark}{\begin{Remark}\rm}{\end{Remark}}
    \newtheorem{Example}[theorem]{Example}
    \newtheorem{Assumptions}[theorem]{Assumptions}
    \newenvironment{assumptions}{\begin{Assumptions}\rm}{\end{Assumptions}}
    \newcommand{\supp}{{\operatorname{supp\,}}}
    \DeclareMathOperator*{\Tr}{Tr}
\begin{document}
\title{A Riemann-Hilbert problem for equations of Painlev\'{e} type in the one matrix model with semi-classical potential}
\author{Max R. Atkin}
\maketitle

\begin{abstract}
We study the hermitian one matrix model with semi-classical potential. This is a general unitary invariant random matrix ensemble in which the potential has a derivative that is a rational function and the measure is supported on some collection of disjoint closed intervals. Such models have attracted much interest both due to their physical applications and relations to integrable systems. An object of central interest in random matrix theory is the correlation kernel, as this encodes the eigenvalue correlation functions. In recent years many results have been obtained proving that the correlation kernel near special points in the spectrum can be expressed in terms of Painlev\'{e} transcendents and their associated Riemann-Hilbert problems. In the present work we build on this success by proposing a model problem that is general enough to describe the limiting kernel at any point in the spectrum. In the most general situation this would include cases of logarithmic singularities and essential singularities in the weight colliding with soft or hard edges, the bulk of the spectrum or even births of a cut. 
\end{abstract}

\newpage
\tableofcontents
\newpage

\section{Introduction}
Let $\C{I}$ be a finite collection of closed pairwise disjoint intervals in $\mathbb{R}$ and $V:\mathbb{R}\rightarrow \mathbb{R}$ a function whose derivative is rational. Then define a {\emph{semi-classical hermitian matrix model}} as a random $n \times n$ hermitian matrix given by $M = U^\dagger \mathrm{diag}(x_1,\ldots,x_n) U$ where $U$ is a Haar distributed unitary matrix and the eigenvalues $x_i$ are distributed according to the probability measure,
\beq
\label{semiclassical mu}
\frac{1}{Z_n} \Delta(x)^2 \prod^n _{i=1}e^{-n V(x_i)} \chi_{\C{I}}(x_i) dx_i.
\eeq
Here $\chi_{\C{I}}$ is the indicator function on $\C{I}$, $\Delta(x)$ is the Vandermonde determinant, the function $V$ is known as the {\em{potential}} and we often refer to $w(x) := e^{-n V(x)}$ as the weight. The normalisation constant $Z_n$,
\beq
Z_n = \int_{\C{I}^n} \Delta(x)^2 \prod^n _{i=1}e^{-n V(x_i)} \chi_{\C{I}}(x_i) dx_i.
\eeq
is known as the {\em{partition function}}. Note that implicit in this definition is that $V$ is such that $Z_n$ exists.

The eigenvalues distributed according to \eqref{semiclassical mu} form a determinantal point process. The correlation functions for this process can therefore all be expressed in terms of a correlation kernel. In turn the correlation kernel is expressed using orthogonal polynomials with respect to the weight $w(x)$ on $\C{I}$. More precisely, let $p_j$, $j=0,1,\ldots$ be the family of monic polynomials of degree $j$ characterised by the relations
\beq\label{ortho p}
\int_{\C{I}} p_j(x) p_m(x) w(x) dx = h_j \delta_{j m}.
\eeq
The correlation kernel can now be written as
\beq
\label{Keqn}
K_n(x,y) = h_{n-1}^{-1}\frac{\sqrt{w(x)w(y)}}{x-y} \left(p_n(x)p_{n-1}(y) - p_n(y)p_{n-1}(x) \right).
\eeq
The asymptotic analysis of the correlation kernel therefore reduces to the asymptotic analysis of the orthogonal polynomials $p_n$. Indeed, the use of the term semi-classical to describe such an ensemble arises from the fact that their associated orthogonal polynomials are of semi-classical type.

In the present work we will obtain $n \rightarrow \infty$ asymptotics for the eigenvalue correlation kernel of such a model under very general conditions. In particular we obtain the asymptotic behaviour of the kernel at points in which edges of $\C{I}$ and singularities in the potential approach the support of the eigenvalues as $n \rightarrow \infty$ in a way that ensures a non-trivial double scaling limit. We will express the asymptotics for the kernel in all such cases in terms of a general model Riemann-Hilbert problem which generalises nearly all other model problems previously introduced in the context of random matrix theory. Some examples of the scenarios not addressed in the analysis here include cases when the potential has isolated discontinuities, which have been studied in \cite{jumpHypergeometric} and lead to confluent hypergeometric kernels and also the interpolating kernel found in the context of the birth-of-a-cut \cite{ClaeysBirthofCut}. Finally, for reasons of space we restrict ourselves to the case of logarithmic singularities in $V$ with positive coefficients.

\subsection{Motivations}
We begin this section with a historical review of the physical applications of random matrix theory, after which we turn to the mathematical motivations for the current study.

The one hermitian matrix model in which $V$ is polynomial and $I = \mathbb{R}$ was the first random matrix model to be extensively studied. These studies culminated in significant applications to problems in particle physics \cite{GMreview}. Further applications quickly followed, with many requiring some modification of the model to allow $V'$ to be rational or to allow $I \neq \mathbb{R}$. To be more precise it was found that for applications of random matrix theory to QCD \cite{RMTQCDreview}, quantum transport problems \cite{BFB1,BFB2} and string theory \cite{FZZTbranesBanksDouglas}, the models necessarily included logarithmic divergences in $V$.

Since that time there has been an explosion in the number of applications of random matrix theory and many of the proposed models fall into the class of semi-classical hermitian matrix models. This is especially true in the context of quantum transport in which recent models incorporating poles in to the potential have been shown to have physical relevance. There the observable of interest is the Wigner-Smith time-delay matrix $Q$, whose eigenvalues $\tau_j$ are related to the ``inverse delay times'', $\gamma_j = \tau_j^{-1}$. The joint probability density for the $\gamma_j$ was first obtained in \cite{BFB1,BFB2} where it was shown it took the form,
\beq
P(\gamma_1, \ldots, \gamma_n) =\frac{1}{Z_{n}} \left |\Delta(\gamma)\right |^\beta \prod_{j=1}^n \gamma_j^{\beta n/2} e^{-\frac{\beta}{2} \gamma_j},
\eeq
where $\beta$ depends on the symmetries of the system, with $\beta=2$ a common case. Since many observables may be expressed in terms of $Q$, the problem of computing expectation values with respect to the above measure is relevant. Let us highlight the recent work \cite{MT} and \cite{GT} in which physically important observables were expressed in terms of $\Tr Q$ and $\Tr Q^2$. The associated moment generating function for these observables is the partition function for a semi-classical hermitian matrix model in which the potential contains a singularity of order $k=1$ and $k=2$ respectively.

Random matrix theory has also found applications in integrable quantum field theory at finite temperature. Here again it was found that models of semi-classical type were relevant \cite{CI}.

Finally, another important application of random matrix theory is in the field of analytic number theory. Here random matrix theory serves as a source of models for the zeros of various $\zeta$-like functions. An example of this relation was given recently by \cite{FyodorovKeating} in which the statistics of large deviations of $|\zeta(1/2+ix)|$ in an interval on the critical line was found to be related to the statistics of large deviations of $|\det(U-e^{i\theta}I)|$, where $U$ is a $n\times n$ Harr-distributed unitary matrix.

Of central importance in the work \cite{FyodorovKeating2} was a conjecture for the asymptotics of certain expectation values using a measure of the form \eqref{semiclassical mu} but where the measure is supported on a circle in the complex plane rather than the real line. Such ensembles are known as Circular Unitary Ensembles (CUE). The CUE is very closely related to the semi-classical hermitian matrix model and we expect the same set of model problems as introduced here to appear in the CUE context. To address the conjecture made in \cite{FyodorovKeating2} would require analysing model problems with an arbitrary number of logarithmic singularities, which is a property of the model problems introduced in this paper. This is something we hope to pursue in the near future.

Another work related to the $\zeta$ function is \cite{BMM}. There the motivating interest was in a random matrix model describing a certain observable with relevance to the Riemann hypothesis. The authors of \cite{BMM} studied a semi-classical hermitian model which contains a simple and second order pole .

We now turn our attention to motivations arising from Riemann-Hilbert analysis. The use of Riemann-Hilbert problems and the associated machinery of the Deift-Zhou steepest descent has enjoyed considerable success in random matrix theory over the last fifteen years. It has been able to give rigorous proofs of many conjectures made in the physics literature. We highlight especially the strong universality results that have been obtained using these methods. For example it has established universality of the Sine, Airy, Bessel and numerous Painlev\'{e} kernels, for a broad class of potentials.

The industry of applying the Deift-Zhou steepest descent to the RH problems arising in the context of random matrix theory is still strong with numerous results in the last few years \cite{BMM,XDZ,XDZ2,ACM}. These results in particular have begun to address the types of new critical behaviour that one finds in the semi-classical hermitian matrix model. The current work was motivated by the desire to continue this program. In doing so we obtain proofs for many small conjectures in the Riemann-Hilbert analysis literature. For instance our main results prove the conjectures made in \cite{ClaeysBirthofCut} and \cite{CIK} concerning the most general behaviour at the birth of a cut and at a critical edge point.

Finally let us note that there exists an extensive literature \cite{TRreview} on an alternative method for computing asymptotics of random matrix integrals, known as topological recursion. Topological recursion gives a uniform structure to the asymptotic expansion of observables in random matrix theory. Furthermore, the recursion needs only the data of the spectral curve together with a geometric object known as the Bergmann-kernel. This work was partly motivated by a desire to better understand the role of these objects in the Riemann-Hilbert approach. In particular our results show that the spectral curve directly determines the double scaling limit in all cases, without the need for the kind of smart approximations made in \cite{claeysPII2005, claeys2008multi, ClaeysBirthofCut}. The Bergmann-kernel is a bit more mysterious in the current formulation but this information likely resides in the global parametrix together with the RH problem for the orthogonal polynomials.

\subsection{Statement of results}
Our results are the following:
\begin{itemize}
\item[1.] We introduce a general model RH problem which we refer to as the {\em{canonical model problem}}. This model problem generalises many of the model problems already used in the random matrix theory literature, such as the Airy, Bessel, Painlev\'{e} I and II model problems. The canonical model problem has jump matrices independent of the parameters of the system and therefore possesses an associated system of linear ODEs which corresponds, when it exists, to the Lax pair of a Painlev\'{e} type equation.
\item[2.] We prove the existence of a solution to the canonical model problem for a certain range of model parameters.
\item[3.] We prove, given a suitable definition of the double scaling limit, that the double scaled correlation kernel of the eigenvalues may always be expressed in terms of a solution to the canonical model problem.
\end{itemize}

We will state our results more precisely in Theorem \ref{existence thm} and Theorem \ref{Kthm}.
This will require that we first introduce some more notation and definitions in order to define
the canonical model problem and the double scaling limit.


\subsubsection{The canonical RH problem for $\Phi$}
Let us begin by motivating the following definitions. The model problem will arise as the local behaviour of a RH problem for orthogonal polynomials associated with the weight in \eqref{semiclassical mu}. Naturally therefore, the model problem must contain the features from the weight in \eqref{semiclassical mu}. In particular it must contain a collection of closed disjoint intervals $I$ which encodes the local behaviour of $\C{I}$ together with a set $B$ which represents the singular points of $V$ and a set of vectors $\{\vec{\tau}_b \}$ giving the coefficients of the singular terms in $V$. With this in mind we now give some definitions.

\begin{figure}[t]
\centering 
\includegraphics[scale=0.4]{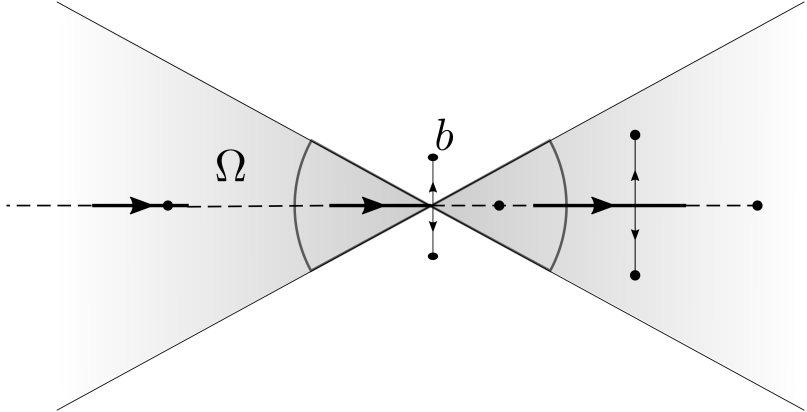}
\caption{An example of a system of contours in a canonical model problem. The thick lines are the intervals in $I$. The thin dashed lines are the intervals in $\bigcup_{b \in B} \Gamma_b \setminus I$. The black dots, one of which is labelled $b$, are locations of points in $B$ and the contours ending at these points are the $\Gamma_b$ contours. The shaded grey regions denoted by $\Omega$ correspond to the potential location of sectors in which the asymptotic behaviour of $\Phi$ differs.}
\label{modelcontoursNoLens}
\end{figure}

\bdf
Let $B$ be a finite set of points $b \in \mathbb{C}$ together with their complex conjugates. To each $b \in B$ we associate a vector $\vec{\tau}_b$ of length $\hat{d}_b$.
\edf
Because the singular points in $b$ may contain logarithmic branch points we must also define the jump contours ending at each $b$.
\bdf
Given a point $b \in \mathbb{C}$, define the contour,
\beq
\Gamma_{b} := (-\infty, \Re b) \cup (\Re b + i[0,\Im b])
\eeq
with the orientation of the contour directed away from $-\infty$.
\edf

\bdf
\label{zalphabr}
Let $[z-b]^{2\alpha}$ denote the function $(z-b)^{2\alpha}$ in which the branch cut is taken on $\Gamma_b$ and $[z-b]^{2\alpha}$ is positive for $z-b \in \mathbb{R}^+$.
\edf

We need a function on $\bigcup_{b \in B} \Gamma_{b}$ to describe the jumps due to the logarithmic divergences. We give this in the following definition together with an extension of the definition into the complex plane. For convenience define $\hat{\alpha}_b := \tau_{b,0} \in [0,\infty) $ for each $b \in B$. Then introduce,
\bdf
\beq
\label{hat alpha def}
\hat{\alpha}_{\Gamma}(z) := \begin{cases} \sum_{b \in B} \chi_{\Gamma_{b}}(z) \hat{\alpha}_{b}, & z \in  \bigcup_{b \in B} \Gamma_{b}\\ \sum_{b \in B} \chi_{\Gamma_{b}}(\Re(z)) \hat{\alpha}_{b}, & z \in  \mathbb{C} \setminus \bigcup_{b \in B} (\Re b + i\mathbb{R}).
\end{cases}
\eeq
\edf
Finally we define the very useful function,
\bdf
\label{theta def} $\theta(z) := \pm 1$ for $\pm \Im z> 0.$
\edf
We are now in a position to define the relevant class of model problems.

\bdf Let $I$ be a collection of pairwise disjoint closed, possibly infinite, intervals. Let $\vec{\tau}_\infty$ be a vector whose length we will specify later. We define a {\em{canonical model problem}} for a function $\Phi_k(z|I,B,\{\vec{\tau}_b\},\vec{\tau}_\infty)$ to be the following RH problem. Note that in the following we suppress any unnecessary arguments of $\Phi$.

\subsubsection*{RH problem for $\Phi$}
\begin{itemize}
\item[(a)] $\Phi:\mathbb{C} \setminus I \setminus \cup_{b \in B} \Gamma_{b} \rightarrow \mathbb{C}^{2\times2}$ is analytic in $z$. See Figure \ref{modelcontoursNoLens}. 
\item[(b)] The jump matrix $j(z) :=\Phi_-(z)^{-1}\Phi_+(z) $ has the following form,
\begin{align}
j(z) &= \left(
\begin{array}{cc}
 e^{2 \pi i  \hat{\alpha}_\Gamma(z)} & \chi_I(z) \\
 0 & e^{-2 \pi i  \hat{\alpha}_\Gamma(z)} \\
\end{array}
\right), &z\in I \cup \bigcup_{b \in B} \Gamma_{b}.
\end{align}
\item[(c)] To describe the asymptotic behaviour of $\Phi$ as $z\to\infty$ we need the function,
\beq
\hat{K}(z; \Omega) := \begin{cases}
\begin{pmatrix}1&0\\-\theta(z)e^{2\pi i \hat{\alpha}_\Gamma \theta(z)}&1\end{pmatrix}, & \mbox{for } z \in \Omega \\
I, & \mbox{for } z \in \mathbb{C} \setminus \Omega.
\end{cases}
\eeq
where $\Omega$ is a sector in $\mathbb{C}$ and $\theta$ is defined in Definition \ref{theta def}.

Define a polynomial of degree $k$ by,
\beq
\hat{P}_{k}(z) :=\sum^{k}_{j=0} \tau_{\infty,j} z^j.
\eeq
We then have three possibilities as $z\rightarrow \infty$,
\begin{itemize}
\item[(i)] we say $\Phi_k$ is {\em{exterior-type}} of order-$k$ with behaviour,
\beq
\label{Phi z to inf ex}
\Phi_k(z)=\left(I+\bigO(z^{-1})\right) z^{(\hat{\alpha}_\mathrm{tot} + \hat{c} ) \sigma_3} e^{-\frac{1}{2k} \hat{P}_{2k}(z) \sigma_3},
\eeq
where $\hat{c} \in \mathbb{N}$, $k \in \mathbb{N}$ and $\hat{\alpha}_\mathrm{tot} = \sum_{b\in B} \hat{\alpha}_b$.

\item[(ii)] we say $\Phi_k$ is {\em{edge-type}} of order-$k$ with behaviour,
\beq
\label{Phi z to inf edge}
\Phi_k(z)=\left(I+\bigO(z^{-1})\right) z^{-\frac{\sigma_3}{4}} N e^{-\frac{2}{2k+3} z^\frac{1}{2}\hat{P}_{k+1}(z) \sigma_3}\hat{K}(z; \Omega^{(\mathrm{edge})})^{-1},
\eeq
where $N=\frac{1}{\sqrt{2}}(I+i\sigma_1)$, $k \in \{-1\} \cup 2\mathbb{N}^0$ and $\tau_{\infty,k+1}  \in \mathbb{R}^+$ for $k \in 2 \mathbb{N}^0$ and $\tau_{\infty,k+1} \in \mathbb{R}^-$ for $k = -1$. We also introduced the sector $\Omega^{(\mathrm{edge})} = \{z : \pi > |\arg z| > \pi -  \pi/(2k+3)\}$ if $k \in 2\mathbb{N}^0$ and $\Omega^{(\mathrm{edge})} = \{z : \pi > |\arg z| > \pi -  \pi/3\}$ when $k = -1$.

\item[(iii)] we say $\Phi_k$ is {\em{interior-type}} of order-$k$ with behaviour,
\beq
\label{Phi z to inf int}
\Phi_k(z)=\left(I+\bigO(z^{-1})\right) Q(z) e^{-\frac{i}{2k+1}\hat{P}_{2k+1}(z) \theta(z) \sigma_3}\hat{K}(z;\Omega^{(\mathrm{int},-)}\cup \Omega^{(\mathrm{int},+)})^{-1}.
\eeq
where $k \in \mathbb{N}^0$, $\tau_{\infty,2k+1} \in \mathbb{R}^+$,  $\Omega^{(\mathrm{int},+)} =  \{z : |\arg z| < \pi/(4k+2)\}$, $\Omega^{(\mathrm{int},-)} =  \{z : \pi > |\arg z| > \pi - \pi/(4k+2)\}$, $\theta$ is defined in Definition \ref{theta def}, and
\beq
\label{Q def}
Q(z) := \begin{cases}I, &\mbox{$\Im z> 0$}\\
\begin{pmatrix}
 0 & -1 \\
 1 & 0 \\
\end{pmatrix}, &\mbox{$\Im z < 0$.} \end{cases}
\eeq
\end{itemize}
In the above the principal branches of $z^{1/2}$, $z^{-1/4}$ and $\log z$ are taken analytic off $(-\infty,0]$ and positive for $z>0$.

\item[(d)] As $z \to b \in B$,
\beq
\Phi(z) = \bigO(1)\exp\left[-\sum^{\hat{d}_{b}-1}_{j=1}\frac{1}{2j} \tau^j_{b,j} (z-b)^{-j}\sigma_3\right][z-b]^{\hat{\alpha}_b \sigma_3}.
\eeq
As $z \to a \in \partial I \setminus B $
\beq
\Phi(z) = \Phi_{a,0}(z) \times \begin{cases} e^{\frac{1}{2\pi i } \log (z-a) \sigma_+}, & \mbox{if $a$ is a right edge of an interval,}\\e^{-\frac{1}{2\pi i }\log (a-z) \sigma_+} , & \mbox{if $a$ is a left edge of an interval.} \end{cases}
\eeq
Here $\Phi_{a,0}$ is analytic with respect to $z$ at $a$.
\end{itemize}
\edf

\begin{remark}
The above model problem has jump matrices which are constant with respect to all parameters appearing in the problem. This has the consequence that Lax matrices may be constructed from the logarithmic derivatives, $\partial \Phi \Phi^{-1}$, with respect to any parameter appearing in the problem. The compatibility of these derivatives then leads to a system of ODEs which will possess the Painlev\'{e} property. Hence the model problem is a Riemann-Hilbert problem for ODEs of Painlev\'{e} type.
\end{remark}

\subsubsection{Definitions for the double scaling limit}

\begin{definition}
Recall the definition of the model \eqref{semiclassical mu}. Let $\C{A} := \partial \C{I}$. This set contains the endpoints of the intervals comprising $\C{I}$.
\end{definition}

\begin{definition}
Given a singular point $\hat{z}$ of $\frac{dV}{dz}$ we write the principal
part of the Laurent expansion of $\frac{dV}{dz}$ at $\hat{z}$, whose inner radius is zero, as
\beq
\sum^{d_{\hat{z}}-1}_{j=0} t_{\hat{z},j} (z-\hat{z})^{-j-1}.
\eeq
Let $\C{B}$ be the set of all singular points of $\frac{dV}{dz}$. Since $V'(z)$ is a rational function we may parameterise it as,
\beq
V(z) := V_\mathrm{reg}(z) + V_\mathrm{sing}(z) + V_\mathrm{br}(z),
\eeq
where,
\begin{align}
V_\mathrm{reg}(z) &:= \sum^{d_\infty+1}_{j=1}\frac{1}{j} t_{\infty,j} z^j,\\
V_\mathrm{sing}(z) &:= -\sum_{b \in \C{B}} \left [\sum^{d_b-1}_{j=1}\frac{1}{j} t^j_{b,j} (z-b)^{-j} \right],\\
V_\mathrm{br}(z) &:= -\sum_{b \in \C{B}} \frac{2}{n}\alpha_b \log |z-b|.
\end{align}
Note that the fact that the probability measure for the model must be real implies $b \in \C{B} \iff b^* \in \C{B}$. For convenience we define some shorthand notation; $w(z):=e^{-n V(z)}$, $w_\mathrm{reg}(z):=e^{-n V_\mathrm{reg}(z)}$, $w_\mathrm{sing}(z):=e^{-n V_\mathrm{sing}(z)}$, $w_\mathrm{br}(z):=e^{-n V_\mathrm{br}(z)}$.
We restrict ourselves to the case that $\alpha_b$ is a positive constant. Finally we define $V_\mathrm{reg}$ such that,
\beq
\lim_{x\to +\infty}\frac{V_\mathrm{reg}(x)}{\log(x^2+1)}=+\infty,
\eeq
and that $V_\mathrm{sing}$ is such that the integral exists.
\end{definition}

\begin{definition} The measure $d\mu(x)$ is defined as the equilibrium measure which minimizes
\beq
I(\mu) = \iint \log \frac{1}{|x-y|} d\mu(x)d\mu(y) + \int V_\mathrm{reg}(y) d\mu(y),
\eeq
among all Borel probability measures $\mu$ on $\C{I}$. The equilibrium measure can be written in terms of
a density $\rho$; $d\mu(x)=\rho(x)dx$. Define $\C{S} := \supp \mu$.
\end{definition}

\begin{remark} The equilibrium measure satisfies the following inequalities,
\begin{align}
&\label{var eq}2\int \log|x-y|\rho(y)dy - V(x) = \ell, \quad x\in \C{S}, \\
&\label{var ineq}2\int \log|x-y|\rho(y)dy - V(x) \leq \ell, \quad x\in \C{I} \setminus \C{S}.
\end{align}
\end{remark}

\begin{definition}
The inequality \eqref{var ineq} is taken to be strict for $x \in \C{I} \setminus \C{S}$  outside a finite number of isolated points. We define $\C{E} \subset  \C{I} \setminus \C{S}$ to be the set of such isolated exterior points and $\C{J} := \C{S}\cup \C{E}$. Finally let $p:= \sup \C{J}$.
\end{definition}

\begin{remark}
\label{point types}
Since the equilibrium measure is defined using $V_\mathrm{reg}$ the behaviour of $\rho$ is well known \cite{DKM,DKMVZ2} and can be expressed as,
\beq
\label{rho h R}
\rho(x) = \frac{1}{2\pi i} y(x)_+,
\eeq
where we have introduced the {\em{spectral curve}},
\beq
y(z):= h(z)\sqrt{R(z)}.
\eeq
Here $h$ is a real polynomial and $R$ is a rational function with only simple poles. The set of zeros and poles of $R$ coincides with $\partial \C{S}$ and the function $\sqrt{R(z)}$ is defined with branch cuts coinciding with $\C{S}$. In \eqref{rho h R} we have taken the value of $\sqrt{R(z)}$ on the positive side of the cut. The behaviour of $\rho$ is best discussed in terms of three distinct types of points:
\begin{itemize}
\item An edge point $x$ of order $k\in \{-1\}\cup 2\mathbb{N}^0$ is a point such that $x \in \partial\C{S}$ and
\beq
y(z) = \bigO((z-x)^{k+\frac{1}{2}})  \qquad \mbox{ as $z \to x$.}
\eeq
A {\em{singular}} edge is one whose order $k > 0$, a {\em {hard-edge}} is one for which $k=-1$ and a {\em{soft-edge}} is one with $k=0$.
\item An interior point $x$ of order $k\geq 0$ is a point such that $x \in \C{S}$ and
\beq
y(z) = \bigO((z-x)^{2k})  \qquad \mbox{ as $z \to x$.}
\eeq
A singular interior point is one whose order $k > 0$.
\item An exterior point $x$ of order $k\geq 1$ is a point such that $x \in \C{E}$, i.e. a point at which the \eqref{var eq} holds outside of $\C{S}$, and
\beq
y(z) = \bigO((z-x)^{2k-1})  \qquad \mbox{ as $z \to x$.}
\eeq
Such points are also known as ``a birth of a cut''.
\end{itemize}
\end{remark}

\bdf
Let $\C{R}$ be the set of zeros and poles of $R$ and $\C{H}$ be the set of zeros of $h$.
\edf

\begin{definition}
Define the ``$g$-function'' \beq g(z) := \int
\log(z-x) d\mu(x), \eeq where the principal branch of the logarithm is
taken, meaning $g$ is analytic on $\mathbb{C} \setminus (-\infty, p]$.
The $g$-function has a number of properties we will make use of, in
particular
\begin{align}
&g_+(x) + g_-(x) - V(x)-\ell = 0 \qquad x\in \C{J} \label{g1},\\
&g_+(x) + g_-(x) - V(x)-\ell < 0 \qquad x\in \C{I} \setminus \C{J},\label{g2} \\
&g_+(x) - g_-(x) = 2\pi i \int^{p}_x \rho(x) dx.\label{g3}
\end{align}
Let us also define,
\beq
\label{xi} \xi(z) = -\frac{ 1}{2}
\int^z_{p}h(s) \sqrt{R(s)} ds \qquad \text{for $z  \in \mathbb{C} \setminus (-\infty, p]$,}
\eeq
where the integration path does not cross the real
axis.  By (\ref{g3}) we have, 
\beq\label{g xi 1}
g_+(x) - g_-(x) = 2\xi_+(x),
\eeq
and using (\ref{g1}), we obtain the identity
\beq
\label{g xi 2} 2\xi(z)=2g(z)-V(z)-\ell. 
\eeq
The jumps of $\xi$ follow from those of $g$,
\begin{align}
&\xi_+(x) + \xi_-(x) = 0 \qquad x\in \C{J} \label{xi1},\\
&\xi_+(x) + \xi_-(x) < 0 \qquad x\in \C{I} \setminus \C{J},\label{xi2} \\
&\xi_+(x) - \xi_-(x) = 2\pi i \int^{p}_x \rho(x) dx.\label{xi3}
\end{align}
It is important to note that \eqref{xi3} has the consequence that,
\beq
\xi_+(x) - \xi_-(x) = 2 \pi i \epsilon(x), \qquad x \in \mathbb{R} \setminus \C{J}
\eeq
where $\epsilon:\mathbb{R} \rightarrow [0,1]$ is constant on each connected component of $\mathbb{R} \setminus \C{J}$ and is known as the ``filling fraction''.

\end{definition}

\begin{assumptions}
\label{ass1}
The sets $\C{A}$, $\C{B}$ etc. are $n$ dependent. For an any finite set $X$ of $n$ dependent points we assume that the limit,
\beq
X_* := \lim_{n\rightarrow \infty} X
\eeq
exists. For some $x \in X$ we also use the notation $x_* := \lim_{n\rightarrow \infty} x$. We extend these definitions to intervals and collections of intervals by defining $[a, b]_* := [a_*, b_*]$. We also assume that $\C{A}_* \cup \C{B}_* \subset \C{J}_*$ i.e all singular points that are due to the weight collide with the support of the eigenvalues as $n \rightarrow \infty$. 
\end{assumptions}

The singular behaviour of the weight in \eqref{semiclassical mu} is not the only way in which singular points of the spectrum arise. Indeed the most common way is when a zero of $h$ approaches $\C{J}$. We therefore need to keep track of these zeros of $h$ which collide with $\C{J}$ as $n \rightarrow \infty$. We do this via the following definition.

\begin{definition}
\label{hatH}
Let us define $\hat{\C{H}}$ to be the largest subset of $\C{H}$ such that $\hat{\C{H}_*} \subset  \C{J}_*$. It will be useful to collect all the points that cause singular behaviour, besides zeros of $\C{R}$, together in a single set,
\beq
\C{P} := \C{A} \cup \C{B} \cup \hat{\C{H}}.
\eeq
\end{definition}

\begin{definition}
\label{Delta def}
For $x \in \C{P} \cup \C{R}$ let $x_* :=  \lim_{n \rightarrow \infty} x$. Given Assumption \ref{ass1} we have $x_* \in \C{J}_*$ and therefore that $x_*$ can be classified according to the scheme in Remark \ref{point types}. We say $x$ {\em{scales appropriately}} as $n \rightarrow \infty$ if $|x - x_*| = \bigO(n^{-\Delta_{x_*}})$ where $\Delta_{x_*}$ is defined by,
\beq
\Delta_{x_*} := \begin{cases}
\frac{2}{2k+3},\; & \mbox{if $x_*$ is a edge point of order $k$,}\\
\frac{1}{2k+1},\; & \mbox{if $x_*$ is an interior point of order $k$,}\\
\frac{1}{2k},\; & \mbox{if $x_*$ is an exterior point of order $k$.}
\end{cases}
\eeq
Furthermore, given $b \in \C{B}$ we say the parameters $t_b$ scale appropriately if $ \lim_{n \rightarrow \infty} n^{\Delta_{x_*}} t_b$ exists. 
\end{definition}

\begin{assumptions}
\label{assumption: scale appropriately}
We assume all points in $\C{P}\cup\C{R}$ scale appropriately.
\end{assumptions}

\bdf
\label{bar x_star def}
Given a set $X \subset \C{P}\cup \C{R}$ and $x_* \in \C{P}_* \cup \C{R}_*$ we define $X|_{x_*} := \{z \in X : z_*  = x_*\}$. For a collection of intervals $I$ we define $I|_{x_*}$ to be the collection of intervals such that $\partial(I|_{x_*}) = (\partial I)|_{x_*}$. As it stands, the definition of $I|_{x_*}$ is ambiguous as it only specifies the endpoints of the intervals. We take care of this by further requiring that if $x_*$ is an exterior point, then $I|_{x_*}$ does not have intervals extending to $\pm\infty$, if $x_*$ is an interior point then $I|_{x_*}$ has intervals extending to both $\pm \infty$ and if $x_*$ is an edge point then $I|_{x_*}$ extends to only $-\infty$.
\edf

\bdf
\label{f def}
For $x_* \in \C{P}_* \cup \C{R}_*$, define $f(z) := n^{\Delta_{x_*}}(z - x_*)$. The definition of $f$ extends to intervals by defining $f([x_0,x_1]) := [f(x_0),f(x_1)]$ with similar definitions for intervals with open ends.
\edf

\bdf
\label{scaling y def}
We define the scaling limit of $y$ at $x_*$ by,
\beq
\label{scaling y}
\hat{y}(\zeta) := \lim_{n \rightarrow \infty} n^{1-\Delta_{x_*}} y(x_* + n^{-\Delta_{x_*}} \zeta)\times \begin{cases} 1, & \mbox{$x_*$ is a exterior or edge point,} \\ \theta(\zeta),& \mbox{$x_*$ is an interior point.}  \end{cases}
\eeq
Define also,
\beq
\hat{\xi}(\zeta) := -\frac{1}{2}\int^\zeta_{\hat{p}} \hat{y}(z) dz,
\eeq
where $\hat{p} := \lim_{n\rightarrow \infty}f(p')$ and $p' := \sup \C{R}|_{x_*}$ if $\sup \C{R}|_{x_*}$ exists else $p' = x_*$.
\edf

\bdf
\label{model problem at xs}
We define {\em{the model problem at}} $x_*$, $\Phi^{(x_*)}(z)$, to be the model problem of the same type and order as the point $x_*$. The vector $\vec{\tau}_{\infty}$ is determined using the scaling limit of $y$ at $x_*$. We choose $\vec{\tau}_\infty$ in the following way,
\begin{itemize}
\item[(i)] If $x_*$ is an interior point of order $k$ we have for large $\zeta$,
\beq
\hat{\xi}(\zeta)  = -\frac{i}{2k+1} \sum^{2k+1}_{j = 0} E_j\zeta^j + \bigO(\zeta^{-1}),
\eeq
where $E_j$ are constants. We set $\tau_{\infty,0} = E_{0} + (2k+1) ( n i \xi_+(x_*) + \pi  \alpha^+_{x_*} )$ and for $j>0$, $\tau_{\infty,j} = E_j$.
\item[(ii)] If $x_*$ is an edge point of order $k$ we have for large $\zeta$,
\beq
\zeta^{-\frac{1}{2}}\hat{\xi}(\zeta)  =  -\frac{2}{2k+3} \sum^{k+1}_{j = 0} E_j \zeta^j +\bigO(\zeta^{-1}),
\eeq
where $E_j$ are constants. We set $\tau_{\infty,j} = E_j$.
\item[(iii)] If $x_*$ is an exterior point of order $k$ we have for large $\zeta$,
\beq
\hat{\xi}(\zeta) = - \hat{c} \log\zeta - \frac{1}{2k}\sum^{2k}_{j = 0}E_j \zeta^j + \bigO(\zeta^{-1}),
\eeq
where $E_j$ are constants. We set $\tau_{\infty,j} = E_j$.
\end{itemize}

The remaining data for the model problem is,
\begin{align}
I = & \lim_{n\rightarrow\infty} f(\C{I}|_{x_*}),\\
B = & \lim_{n\rightarrow\infty} f(\C{B}|_{x_*}), \\
\tau_{b} = & \lim_{n\rightarrow\infty}  n^{\Delta_{x_*}} t_{f^{-1}(b)}.
\end{align}
\edf

\subsubsection{Main theorems}
\begin{theorem}
\label{existence thm}
A solution of the canonical model problem for $\Phi$ exists in the {\em{edge}} and {\em{interior}} case for any real vectors $\tau_b$ subject to the constraint that if $b \in B \cap \mathbb{R}$ there exists a constant $\delta \in \mathbb{R}$ such that,
\beq
\label{singbound}
\left(\frac{\tau_{b,\hat{d}_b}}{ z-b} \right)^{\hat{d}_b} > \delta, \qquad \forall z \in I.
\eeq

When $\Phi$ is of exterior type of order $k$ a solution exists if the above constraint is satisfied and
\beq
\int_I e^{-\frac{1}{2k} P_{k+1}(x)} dx
\eeq
exists.

\end{theorem}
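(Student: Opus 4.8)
The plan is to handle the \emph{exterior} case by reducing the model problem to the classical Fokas--Its--Kitaev Riemann--Hilbert characterisation of orthogonal polynomials, and to handle the \emph{edge} and \emph{interior} cases by a vanishing-lemma argument of Deift--Zhou type; the extra integrability hypothesis that distinguishes the exterior statement is precisely what the orthogonal-polynomial route requires and the vanishing-lemma route does not. In all three cases the first step is the same \emph{undressing}: one multiplies $\Phi$ on the right by the explicit piecewise-analytic matrix prescribed in (c) and (d), that is, near each $b\in B$ the factor $\exp[-\sum_{j=1}^{\hat{d}_b-1}\tfrac1{2j}\tau^j_{b,j}(z-b)^{-j}\sigma_3]\,[z-b]^{\hat{\alpha}_b\sigma_3}$, and near infinity the corresponding factor of \eqref{Phi z to inf ex}, \eqref{Phi z to inf edge} or \eqref{Phi z to inf int} together with the lens matrices $\hat{K}$. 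The sign conditions on $\tau_{\infty,k+1}$ in the edge case and on $\tau_{\infty,2k+1}$ in the interior case — which are part of the definition — together with the $\hat{K}$-factors are exactly what makes the resulting jump matrices decay to $I$ at infinity (the oscillatory exponentials along the real directions being absorbed as in the construction of the Airy and Bessel parametrices). One also checks that the residual singularities at the finitely many special points are integrable: algebraic with exponent exceeding $-\tfrac12$, or logarithmic. For such an RH problem the standard theory (Zhou's theorem) reduces solvability, and uniqueness, to the assertion that the associated homogeneous problem — same jumps and local behaviour, with $\Psi\to 0$ at infinity — has only the trivial solution.

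\textbf{The exterior case.} After undressing the local factors at the points of $B$, the power of $z$ left at infinity is the integer $\hat{c}$, and the residual problem is exactly the $2\times2$ RH problem for the monic orthogonal polynomials of degrees $\hat{c}$ and $\hat{c}-1$ with respect to the weight
\beq
\omega(x)\;=\;e^{-\frac1k\hat{P}_{2k}(x)}\ \prod_{b\in B}|x-b|^{2\hat{\alpha}_b}\ \exp\!\Big[-\sum_{b\in B}\sum_{j=1}^{\hat{d}_b-1}\tfrac1j\,\tau^j_{b,j}\,(x-b)^{-j}\Big]\ \chi_I(x)
\eeq
on $\mathbb{R}$. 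The constraint \eqref{singbound} says exactly that the essential-singularity factor of $\omega$ is bounded below by a positive constant on $I$ near each real $b$, so that $\omega$ is a genuine positive weight there; away from $B$ positivity is automatic, and since $\hat{\alpha}_b>0$ the factors $|x-b|^{2\hat{\alpha}_b}$ only improve integrability near the points of $B$, so the stated integrability hypothesis forces $\omega$ to be integrable over $I$ with finite moments of every order. Hence the moment functional $f\mapsto\int_I f\,\omega$ is positive definite, all of its Hankel determinants are strictly positive, the required orthogonal polynomials exist and are unique, and assembling them in the usual way produces a solution $\Phi$. Uniqueness follows from $\det\Phi\equiv1$ — every jump matrix in (b) has unit determinant, as do all the factors in (c) and (d), including $\det N=1$ — together with Liouville's theorem.

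\textbf{The edge and interior cases.} Here one proves the vanishing lemma directly, using the conjugation symmetry of the problem. The contour $I\cup\bigcup_{b\in B}\Gamma_b$ is invariant under $z\mapsto\bar z$ (because $B=\overline{B}$ and $\Gamma_{\bar b}=\overline{\Gamma_b}$), and once the diagonal jumps $e^{\pm2\pi i\hat{\alpha}_\Gamma}$ have been removed by the further undressing $\Psi\mapsto\Psi\prod_{b\in B}[z-b]^{-\hat{\alpha}_b\sigma_3}$ — which also cancels all jumps of $\Psi$ on the $\Gamma_b$ away from $I$ — the jump on $I$ becomes $\left(\begin{smallmatrix}1&\omega_b(x)\\0&1\end{smallmatrix}\right)$ with $\omega_b(x)\ge0$ the off-diagonal weight (the restriction to $I$ of the factors making up $\omega$ above), so the whole problem is Schwarz-symmetric; in the interior case this is precisely the purpose of the matrix $Q(z)$ and the $\theta(z)$'s, which make the data at infinity in the lower half-plane the Schwarz reflection of that in the upper half-plane. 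Given a solution $\Psi$ of the homogeneous problem, set $H(z):=\Psi(z)\,\overline{\Psi(\bar z)}^{T}$; then $H$ is analytic off $\mathbb{R}$, is $\bigO(z^{-2})$ at infinity, and on $\mathbb{R}$ one has $H_+ + H_- = \Psi_-\,(j+j^\dagger)\,\overline{\Psi_-}^{T}$ on the portions of $\mathbb{R}$ lying in the contour and $H_+=H_-$ elsewhere. On those portions $j+j^\dagger$ has the form $\left(\begin{smallmatrix}2&\omega_b\\\omega_b&2\end{smallmatrix}\right)$, which is positive semidefinite once \eqref{singbound} is invoked (after rescaling so that $\omega_b\le2$ on $I$). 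Since $\int_{\mathbb{R}}(H_++H_-)\,dx=0$ by analyticity and decay of $H$ off $\mathbb{R}$, a positive semidefinite integrand has vanishing integral, which forces $\Psi_-$ to vanish on a subinterval of $I$; the triangular jump structure and analytic continuation then give $\Psi\equiv0$, establishing the vanishing lemma and hence existence and uniqueness.

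\textbf{The main obstacle.} The hard part is the vanishing lemma in the edge and interior cases, and within it two points. The first — the raison d'\^etre of the constraint \eqref{singbound} — is arranging that $j+j^\dagger$ is positive semidefinite along all of $I$, including on the overlaps $I\cap\Gamma_b$ where the off-diagonal weight carries both the algebraic factor $|x-b|^{2\hat{\alpha}_b}$ and the essential-singularity factor; checking that \eqref{singbound} is exactly what keeps this weight bounded on $I$, and is therefore both necessary and sufficient here, is the crux. The second is making the undressing rigorous in the edge and interior cases: verifying that after removing the explicit factors of (c)--(d) — in the interior case in a manner compatible with the genuinely two-sheeted $\theta(z)/Q(z)$ structure — the residual jumps are H\"older continuous, decay to $I$ at infinity, and have only integrable local singularities, so that Zhou's equivalence of solvability with the vanishing lemma actually applies.
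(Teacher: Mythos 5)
Your overall architecture — orthogonal polynomials for the exterior case, a vanishing lemma plus Zhou's Fredholm theory for edge and interior — matches the paper's, and your exterior-case reduction is essentially the paper's observation that the $\Psi$ RH problem for the semi-classical orthogonal polynomials \emph{is} the exterior canonical model problem. But the vanishing-lemma argument has a genuine gap, and it concerns precisely the step you flag as "the crux."

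You claim that after undressing by the $[z-b]^{-\hat{\alpha}_b\sigma_3}$ factors (and implicitly the asymptotic factors at infinity) the residual $\Psi$ decays like $\bigO(z^{-\eta})$ at infinity \emph{and} carries the upper-triangular jump $\bigl(\begin{smallmatrix}1&\omega_b\\0&1\end{smallmatrix}\bigr)$ on $I$, so that $j+j^\dagger=\bigl(\begin{smallmatrix}2&\omega_b\\\omega_b&2\end{smallmatrix}\bigr)$ is positive definite and the integral identity kills all of $\Psi_-$ in one stroke. These two desiderata are incompatible. The exponential factor $\Lambda(z)$ grows/oscillates unboundedly in the open half-planes, so to normalise at infinity one must also conjugate through the piecewise-constant $\hat{Q}(z)$ (as the paper does when it sets $G:=\Phi_0\,e^{-\Lambda\sigma_3}\hat{Q}$). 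After that conjugation the jump is \emph{not} upper triangular — on $\mathbb{R}^+$ in the edge case it is $\bigl(\begin{smallmatrix}\chi_I e^{-2\hat\Lambda}&-e^{2\pi i\hat\alpha_\Gamma}\\ e^{-2\pi i\hat\alpha_\Gamma}&0\end{smallmatrix}\bigr)$ — and one computes $j_G+j_G^\dagger=\bigl(\begin{smallmatrix}2\chi_I e^{-2\hat\Lambda}&0\\0&0\end{smallmatrix}\bigr)$, which has rank at most one. Consequently the positivity argument kills only the first column of $G_-$ (equivalently, one column in each half-plane). The remaining scalar functions $g_1,g_2$ satisfy a nontrivial scalar RH problem whose vanishing must be established separately; the paper does this via Carlson's theorem, following Its--Kuijlaars--\"Ostensson and Claeys et al., after a further change of variable $z\mapsto z^2$ and a fractional-power rescaling adapted to the order $k$. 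That entire second half of the argument is absent from your proposal.

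Relatedly, you place the constraint \eqref{singbound} inside the positivity computation ("arranging that $j+j^\dagger$ is positive semidefinite"). In the paper it enters at a different point: after the vanishing lemma, to invoke Fredholm theory one must transform the RH problem into an equivalent one with no singular points, which requires constructing a bounded scalar function $\tilde f$ solving an additive jump problem whose inhomogeneity is $\chi_I e^{-\sum\frac{1}{j}\tau^j_{b,j}(z-b)^{-j}}|z-b|^{2\hat\alpha_b}$; \eqref{singbound} is exactly what keeps this inhomogeneity bounded on $I$ so that the Cauchy transform of it exists. Your intuition that \eqref{singbound} controls boundedness of the weight on $I$ is correct, but it is deployed in the transformation-to-a-regular-problem step, not in the $H$-integral. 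Finally, the paper also notes (and you do not) that the analyticity of the solution in the parameters $\vec\tau_\infty$ near an admissible vector — needed later for the local parametrix construction — follows from this same regularised RH problem by the argument of Claeys, so this is not a cosmetic omission.
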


\begin{remark} In \eqref{singbound} recall that $\hat{d}_b$ is the order of the pole in $V'$ appearing at $b$. The condition \eqref{singbound} ensures that if an essential singularity of the weight is approached along a contour in $I$, $\Phi$ remains bounded. The extra condition on $\Phi$ in the exterior case is related to the fact that in this case $\Phi$ may be constructed explicitly from orthogonal polynomials.
\end{remark}

The second theorem concerns the behaviour of the kernel near a point $x_* \in  \C{P}_* \cup \C{R}_*$. 

\begin{theorem}\label{Kthm} Consider a semi-classical matrix model with $\alpha_b >0$ for all $b \in B$. Given a point $x_* \in \C{P}_*\cup \C{R}_*$ define the analytic functions $\phi_i :\mathbb{H} \setminus I \setminus \cup_{b \in B} \Gamma_{b}:  \rightarrow \mathbb{C}$ by,
\beq
\label{phi def}
\begin{pmatrix} \phi_1(z) \\ \phi_2(z)\end{pmatrix} := \Phi^{(x_*)}(z) \begin{pmatrix}1 \\ 0 \end{pmatrix},
\eeq
where we have used the model problem at $x_*$ defined in Definition \ref{model problem at xs}. We also define the $\Phi$-kernel,
\begin{equation}
\mathbb K^{\Phi}(u,v) := -e^{-\pi i(\hat{\alpha}_\Gamma(u) + \hat{\alpha}_\Gamma(v))}\frac{\phi_1(u)\phi_2(v) - \phi_1(v)\phi_2(u)}{2 \pi i (u-v)}.
\end{equation}
In the double scaling limit where $n\to\infty$ such that all parameters scale appropriately we have that,
\beq
\lim_{n\to\infty}n^{-\Delta_{x_*}}K_n\left(x_* + n^{-\Delta_{x_*}} u,x_* + n^{-\Delta_{x_*}} v\right) = \mathbb K^{\Phi}(u,v),
\eeq
for $u,v \in  \lim_{n \rightarrow \infty} f(\C{I}|_{x_*})$.
\end{theorem}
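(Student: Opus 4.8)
The plan is to perform a Deift–Zhou steepest descent analysis on the Riemann–Hilbert problem for the orthogonal polynomials $p_n$ with respect to the weight $w(x)\chi_{\C I}(x)$, and to show that the local parametrix at $x_*$ is constructed precisely from the model problem $\Phi^{(x_*)}$ of Definition \ref{model problem at xs}. First I would write down the standard $2\times 2$ RH problem $Y$ whose $(1,1)$ and $(2,1)$ entries are $p_n$ and (a constant times) $p_{n-1}$, with jump $\begin{pmatrix}1 & w\\0 & 1\end{pmatrix}$ on $\C I$ and normalisation $Y(z)=(I+O(z^{-1}))z^{n\sigma_3}$; the kernel \eqref{Keqn} is then $K_n(x,y)=\sqrt{w(x)w(y)}\,(2\pi i(x-y))^{-1}\,\big(Y_+^{-1}(y)Y_+(x)\big)_{21}$. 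The transformations are the usual ones: $Y\mapsto T$ using the $g$-function \eqref{g1}–\eqref{g3} to normalise at infinity and reduce the diagonal jump on $\C S$ to an oscillatory one; $T\mapsto S$ by opening lenses around each band of $\C S$, which replaces the oscillatory jump by exponentially small off-diagonal contributions controlled by $\Re\xi<0$ off $\C J$ via \eqref{xi2}. One must be careful that near the singular points of $V$ the branch/hard-edge structure is handled by the factor $[z-b]^{2\alpha}$ and $\hat\alpha_\Gamma$, and near the exterior points $\C E$ the inequality \eqref{g2} is only non-strict at isolated points, so lenses there do not close — this is exactly what the ``interior'' and ``exterior'' cases of the model problem are built to capture.

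The core step is the construction of the local parametrix $P$ on a shrinking disk $D_{x_*}$ of radius $\sim n^{-\epsilon}$ (some small $\epsilon>\Delta_{x_*}$, or rather of fixed radius with the scaling done inside). Inside $D_{x_*}$ we change variables $\zeta=f(z)=n^{\Delta_{x_*}}(z-x_*)$. The assertion is that $S$ restricted to $D_{x_*}$, conjugated by the appropriate scalar functions — namely $e^{n g\sigma_3}$, $e^{n\ell\sigma_3/2}$, the weight factors $w_{\rm reg}^{\mp 1/2}$, and the branch functions $[z-b]^{\pm\alpha_b\sigma_3}$ carrying the logarithmic/essential-singularity data — satisfies exactly the jump conditions (b) and the endpoint conditions (d) of the canonical model problem with the data $I,B,\{\vec\tau_b\}$ as in Definition \ref{model problem at xs}, because the jump matrices of the canonical problem are constant in all parameters. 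The matching of the $z\to\infty$ behaviour of $\Phi^{(x_*)}$ (cases (i)–(iii) of the RH problem for $\Phi$) with the behaviour of the outer parametrix $N$ on $\partial D_{x_*}$ is controlled by Definition \ref{scaling y def}: the functions $e^{-\frac1{2k}\hat P\sigma_3}$ etc. appearing in \eqref{Phi z to inf ex}–\eqref{Phi z to inf int} are exactly $e^{\mp n\xi(z)\sigma_3}$ expressed in the $\zeta$ variable via the expansions of $\hat\xi$ listed in Definition \ref{model problem at xs}, and the fact that all points scale appropriately (Assumption \ref{assumption: scale appropriately}) guarantees that the subleading terms in these expansions are $o(1)$ on $\partial D_{x_*}$. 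Hence $P N^{-1}=I+O(n^{-\delta})$ uniformly on $\partial D_{x_*}$ for some $\delta>0$. Theorem \ref{existence thm} supplies the solvability of the model problem so that $P$ is well-defined.

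With the parametrices in place, the final transformation $S\mapsto R$, where $R$ equals $S$ times the inverse outer parametrix outside the disks and $S P^{-1}$ inside, has jumps that are $I+O(n^{-\delta})$ on all remaining contours (exponentially small on the lens boundaries away from $x_*$, algebraically small on $\partial D_{x_*}$); by the standard small-norm theorem $R=I+O(n^{-\delta})$ uniformly. Unravelling $Y=R\cdot N\cdot(\text{conjugations})$ inside $D_{x_*}$, substituting into the formula for $K_n$, and setting $x=x_*+n^{-\Delta_{x_*}}u$, $y=x_*+n^{-\Delta_{x_*}}v$, one finds that all the scalar conjugation factors combine with the $\sqrt{w(x)w(y)}$ and the Jacobian $n^{-\Delta_{x_*}}$ from $x-y$ to produce exactly the prefactor $-e^{-\pi i(\hat\alpha_\Gamma(u)+\hat\alpha_\Gamma(v))}/(2\pi i(u-v))$, while the matrix part converges to $\big(\Phi^{(x_*)}(v)^{-1}\Phi^{(x_*)}(u)\big)_{21}=\phi_1(u)\phi_2(v)-\phi_1(v)\phi_2(u)$ using $\det\Phi^{(x_*)}\equiv 1$; this is $\mathbb K^\Phi(u,v)$.

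The main obstacle I expect is the local parametrix matching at $x_*$ in the degenerate/confluent situations: when $\C A_*\cup\C B_*\cup\hat{\C H}_*$ has several points colliding at the same $x_*$, the disk $D_{x_*}$ contains a whole shrinking cluster of edges, singularities and zeros of $h$, and one must verify (a) that the conjugated $S$ really has no other jumps inside $D_{x_*}$ than those of the canonical problem — in particular that the branch cuts $\Gamma_b$ and the bands assemble correctly into the limiting configuration $I$ of Definition \ref{bar x_star def}, including the delicate choice of which intervals run to $\pm\infty$ — and (b) that the $\zeta\to\infty$ asymptotics of $\hat\xi$ genuinely have the polynomial-plus-$O(\zeta^{-1})$ form claimed, with the right reality and sign of the leading coefficient $\tau_{\infty,\cdot}$ so that Theorem \ref{existence thm} applies. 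Establishing (b) amounts to a careful Laurent/Puiseux analysis of $y(z)=h(z)\sqrt{R(z)}$ near the cluster together with the scaling hypotheses, and getting the constant term right in case (i) (the shift by $(2k+1)(ni\xi_+(x_*)+\pi\alpha^+_{x_*})$) is where the bookkeeping of the $g$-function jumps and the branch factors must be done with the most care.
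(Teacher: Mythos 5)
Your proposal is essentially the same Deift--Zhou steepest-descent strategy as the paper: transform $Y$ to constant jumps via the analytic continuation $\bar{w}^{\sigma_3/2}$ of the branch/essential-singularity factors, open lenses, normalise with the $g$/$\xi$ functions, match the canonical model problem $\Phi^{(x_*)}$ as local parametrix in the rescaled variable $\zeta=f(z)$, prove the small-norm property for $R$, and unravel into the kernel formula. Your identification of the main technical burden --- the Laurent/Puiseux analysis of $y=h\sqrt R$ near a shrinking cluster to obtain the $\zeta\to\infty$ form of $\hat\xi$, plus the bookkeeping giving the $\tau_{\infty,0}$ shift by $(2k+1)(ni\xi_+(x_*)+\pi\alpha^+_{x_*})$ in the interior case --- is exactly where the paper concentrates its effort (Lemma \ref{xi asymptotics}). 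One device you would need to discover if you carried this through: in the exterior (birth-of-a-cut) case the matching on $\partial D_{x_*}$ produces an error of size $n^{|2\Delta_{x_*}(\hat\alpha_{\rm tot}-n\tilde\epsilon^\delta_{x_*})|-\Delta_{x_*}}$, which decays only if $|\hat\alpha_{\rm tot}-n\tilde\epsilon^\delta_{x_*}|<1/2$; the paper enforces this by exploiting the freedom $n\epsilon\mapsto n\epsilon+k$, $k\in\mathbb Z$, in the jumps of $T$ to define the modified filling fraction $\tilde\epsilon$ (Definition \ref{tilde epsilon def}) and builds the global parametrix out of Szeg\H{o} functions using $\tilde\alpha_\Gamma$ and $\tilde\epsilon$ rather than $\alpha_\Gamma$ and $\epsilon$. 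Without this adjustment the outer parametrix would not match; your sketch omits this, so flag it as the one non-routine ingredient beyond Lemma \ref{xi asymptotics}.
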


\section{Properties of the canonical model problem $\Phi$}

\subsection{Opening the lens}
In order to compare to known model problems it is useful to transform the model
problem by opening the lens. To this end we introduce some lens-like contours.

\begin{figure}[t]
\centering 
\includegraphics[scale=0.4]{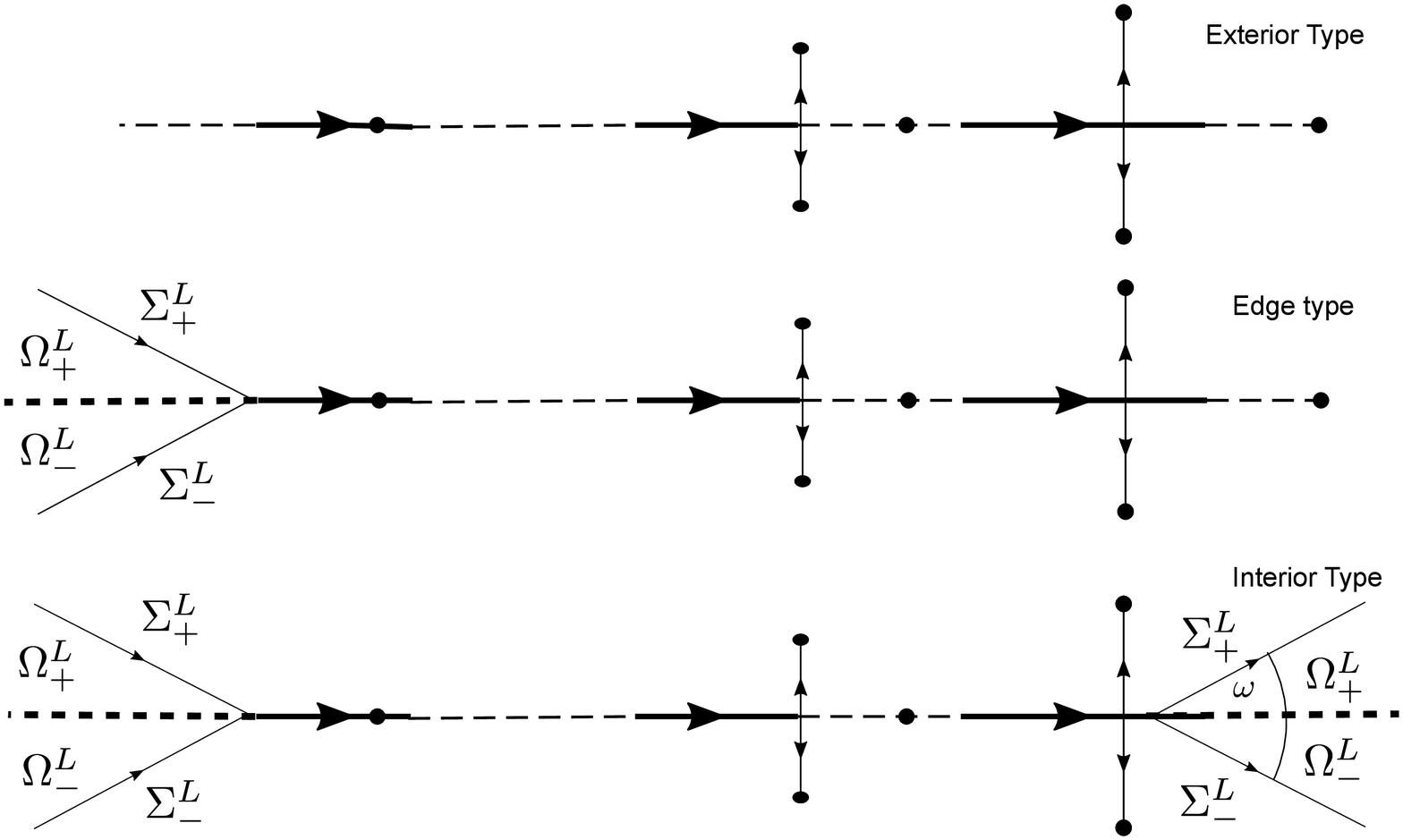}
\caption{Examples of jump contours for the canonical model problem with open lenses. The thick lines are the intervals in $I$. The thin dashed lines are the intervals in $I \cup \bigcup_{b \in {B}} \Gamma_b \setminus L$. The thick dashed line are the intervals in $L$. We have also labelled the lens contours $\Sigma^L_\pm$ and the lens regions $\Omega^L_\pm$. The black dots are locations of points in $B$ and the contours ending at these points are the $\Gamma_b$ contours.}
\label{modelcontours}
\end{figure}

\bdf
If $\Phi$ is of interior type let $L := (-\infty,p_1) \cup (p_2,\infty) \subset I$ for $p_1 \leq \inf \Re B \leq \sup \Re B \leq p_2$. If $\Phi$ is of edge type let $L := (-\infty,p_1) \subset I$ with $p_1 \leq \inf \Re B$.
\edf
The set $L$ will define the parts of the real line enclosed in the lens. Note that in the above definition we required that the lens not include any points from $B$.

\bdf
If $\Phi$ is of interior type let $\Sigma^{(-\infty,p_1)}_\pm$ be rays inside $\Omega^{(\mathrm{int},-)}$ from $p_1$ to $-\infty$ such that $\pm \Im z >0$ on each ray respectively and orientated away from infinity. Let $\Sigma^{(p_2,\infty)}_\pm$ be rays inside $\Omega^{(\mathrm{int},+)}$ from $p_2$ to $\infty$ such that $\pm \Im z >0$ on each ray respectively and orientated towards infinity. Finally let $\Sigma^L_\pm := \Sigma^{(-\infty,p_1)}_\pm \cup \Sigma^{(p_2,\infty)}_\pm$.
\edf

\bdf
If $\Phi$ is of edge type let $\Sigma^L_\pm$ be rays inside $\Omega^{(\mathrm{edge})}$ from $p_1$ to $-\infty$ such that $\pm \Im z >0$ on each ray respectively and with orientation away from infinity.
\edf

\bdf
We define $\Omega^{L}_\pm$ to be the region of $\mathbb{C}$ bounded by $L \cup \Sigma^L_\pm$ such that the $+$ side and $-$ side of $L$ is contained in $\Omega^{L}_+$ and $\Omega^{L}_-$ respectively.
\edf

We now make the transformation,
\beq
\widehat{\Phi}_k(z|I,B,\{\vec{\tau}_b\},\vec{\tau}_\infty) := \Phi_k(z|I,B,\{\vec{\tau}_b\},\vec{\tau}_\infty) \hat{K}(z, \Omega),
\eeq
where $\Omega = \Omega^L_- \cup \Omega^L_+$. This leads to the RH problem,
\subsubsection*{RH problem for $\widehat{\Phi}$}
\begin{itemize}
\item[(a)] $\widehat{\Phi}:\mathbb{C} \setminus I \setminus \Sigma^L_+  \setminus  \Sigma^L_- \setminus  \cup_{b \in B} \Gamma_{b} \rightarrow \mathbb{C}^{2\times2}$ is analytic in $z$ (see Figure \ref{modelcontours}).

\item[(b)] The jump matrix $j(z) := \widehat{\Phi}_-(z)^{-1}\widehat{\Phi}_+(z) $ has the following form,
\begin{align}
j(z) &= \left(
\begin{array}{cc}
 0 & 1 \\
 -1 & 0 \\
\end{array}
\right), &z\in L,\\
j(z) &= \left(
\begin{array}{cc}
 e^{2 i \pi  \hat{\alpha}_\Gamma(z)} & \chi_I(z) \\
 0 & e^{-2 i \pi  \hat{\alpha}_\Gamma(z)} \\
\end{array}
\right), &z\in I \cup \bigcup_{b \in {B}} \Gamma_b \setminus L,\\
j(z) &= \left(
\begin{array}{cc}
 1 & 0 \\
 e^{\pm2 i \pi  \hat{\alpha}_\Gamma(z)} & 1 \\
\end{array}
\right), &z \in \Sigma^L_\pm.
\end{align}
\item[(c)] As $z\rightarrow \infty$,
\begin{itemize}
\item[(i)] If $\widehat{\Phi}$ is of exterior-type,
\beq
\label{Phi z to inf ex lens}
\widehat{\Phi}(z)=\left(I+\bigO(z^{-1})\right) z^{(\hat{\alpha}_\mathrm{tot} + \hat{c} ) \sigma_3} e^{-\frac{1}{2k} \hat{P}_{2k}(z) \sigma_3}.
\eeq

\item[(ii)] If $\widehat{\Phi}$ is of edge-type,
\beq
\label{Phi z to inf edge lens}
\widehat{\Phi}(z)=\left(I+\bigO(z^{-1})\right) z^{-\frac{\sigma_3}{4}} N e^{-\frac{2}{2k+3} z^\frac{1}{2}\hat{P}_{k+1}(z) \sigma_3}.
\eeq

\item[(iii)] If $\widehat{\Phi}$ is of interior-type,
\beq
\label{Phi z to inf int lens}
\widehat{\Phi}(z)=\left(I+\bigO(z^{-1})\right) e^{-\frac{i}{2k+1}\hat{P}_{2k+1}(z) \sigma_3} Q(z).
\eeq
Note that we have made use of the fact that $Q(z)e^{a \theta(z) \sigma_3} = e^{a\sigma_3} Q(z)$. 
\end{itemize}

\item[(d)] As $z \to b \in B$,
\beq
\widehat{\Phi}(z) = \bigO(1)\exp\left[-\sum^{\hat{d}_{b}-1}_{j=1}\frac{1}{2j} \tau^j_{b,j} (z-b)^{-j}\sigma_3\right][z-b]^{\hat{\alpha}_b  \sigma_3} \hat{K}(z,\Omega).
\eeq
As $z \to a \in \partial I \setminus B $
\beq
\widehat{\Phi}(z) = \widehat{\Phi}_{a,0}(z) \times
\begin{cases} e^{\frac{1}{2\pi i } \log (z-a) \sigma_+}\hat{K}(z,\Omega), & \mbox{if $a$ is a right edge of an interval,}\\
e^{-\frac{1}{2\pi i }\log (a-z) \sigma_+} \hat{K}(z,\Omega), & \mbox{if $a$ is a left edge of an interval.} \end{cases}
\eeq
\end{itemize}

\begin{remark}
The $\widehat{\Phi}$ RH problem follows straightforwardly from the definition of $\Phi$. However there is one slightly subtle point when computing the asymptotic behaviour in property (c). For $z$ in $\Omega^L_\pm$ and $z \in \mathbb{C} \setminus \Omega$, where $\Omega$ is taken to be $\Omega^{(\mathrm{edge})}$ or $\Omega^{(\mathrm{int})}$ for edge and interior type problems respectively, 
the asymptotics claimed in (c) follows immediately. However if $z \rightarrow \infty$ for $z \in \Omega \setminus \Omega^L_+ \setminus \Omega^L_-$ one still has the asymptotic behaviour in \eqref{Phi z to inf edge} or \eqref{Phi z to inf int}, with the associated $\hat{K}$ factor. However, the $\hat{K}$ factor can be removed by conjugating $\hat{K}$ through to the left and noting it contributes an error term smaller than $\bigO(z^{-1})$.
\end{remark}

\subsection{Examples}
In this section we will give some examples showing how the canonical model problem reduces to familiar model RH problems used previously in the random matrix theory literature. We make use of the open lens version of the canonical model problem introduced above.

\subsubsection{Relation to the Airy model problem}
Setting $I = \mathbb{R}$, $L = \mathbb{R}^-$, $B = \varnothing$ and $\tau_{\infty,0} = 1$ in a edge-type problem of order $k =0$ we have the following RH problem,
\begin{itemize}
\item[(a)] $\widehat{\Phi} : \mathbb{C} \setminus \mathbb{R} \setminus \Sigma^{\mathbb{R}^-}_+\setminus \Sigma^{\mathbb{R}^-}_- \rightarrow \mathbb{C}^{2 \times 2}$ is analytic in $z$.
\item[(b)] $\widehat{\Phi}$ has the jumps shown in Figure \ref{airyjumps}
\item[(c)] As $z \rightarrow \infty$,
\beq
\widehat{\Phi}(z)=\left(I+\bigO(z^{-1})\right) z^{-\frac{\sigma_3}{4}} N e^{-\frac{2}{3} z^\frac{3}{2}\sigma_3}.
\eeq
\end{itemize}
This is exactly the Airy model problem \cite{DKMVZ2}.

\begin{figure}[t]
\centering 
\includegraphics[scale=0.4]{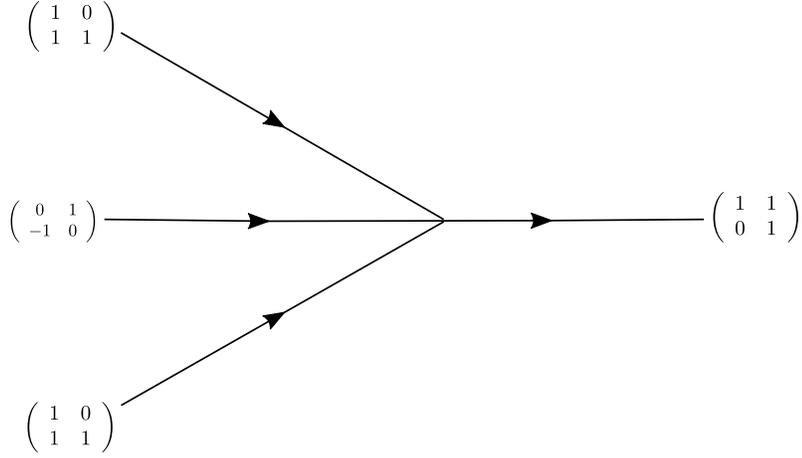}
\caption{Jumps for the canonical model problem in the Airy case.}
\label{airyjumps}
\end{figure}

\begin{figure}[t]
\centering 
\includegraphics[scale=0.4]{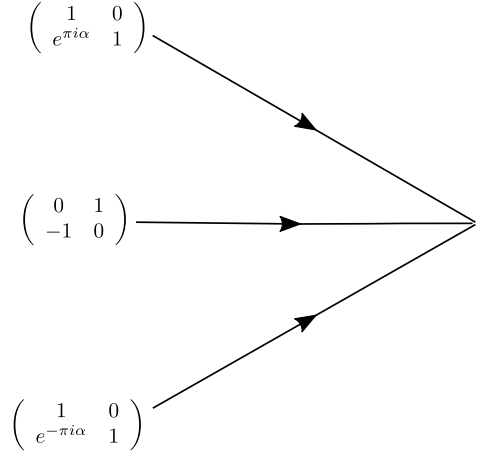}
\caption{Jumps for the canonical model problem in the Bessel case at a hard edge with a logarithmic singularity at the edge.}
\label{besseljumps1}
\end{figure}

\subsubsection{Relation to the Bessel model problem}
Setting $I = \mathbb{R}^{-}$, $L = \mathbb{R}^-$, $B = \{0\}$, $\hat{\alpha}_0 = \alpha/2$ and $\tau_{\infty,0} = -1$ in a edge-type problem of order $k=-1$, we have the following RH problem,
\begin{itemize}
\item[(a)] $\widehat{\Phi} : \mathbb{C} \setminus \mathbb{R}^- \setminus \Sigma^{\mathbb{R}^-}_+\setminus \Sigma^{\mathbb{R}^-}_- \rightarrow \mathbb{C}^{2 \times 2}$ is analytic in $z$.
\item[(b)] $\widehat{\Phi}$ has the jumps shown in Figure \ref{besseljumps1}
\item[(c)] As $z \rightarrow \infty$,
\beq
\widehat{\Phi}(z)=\left(I+\bigO(z^{-1})\right) z^{-\frac{\sigma_3}{4}} N e^{2 z^\frac{1}{2}\sigma_3}.
\eeq
\item[(d)] As $z \rightarrow 0$,
\beq
\widehat{\Phi}(z) = \bigO(1)(z-b)^{\frac{\alpha}{2}  \sigma_3} \hat{K}(z).
\eeq
This can be written as,
\beq
\widehat{\Phi}(z)=\bigO \begin{pmatrix}z^\frac{\alpha}{2} & z^{-\frac{\alpha}{2}}\\  z^\frac{\alpha}{2} & z^{-\frac{\alpha}{2}} \end{pmatrix},
\eeq
for $z$ outside the lens and
\beq
\widehat{\Phi}(z)=\bigO \begin{pmatrix}z^{-\frac{\alpha}{2}} & z^{-\frac{\alpha}{2}}\\  z^{-\frac{\alpha}{2}} & z^{-\frac{\alpha}{2}} \end{pmatrix}.
\eeq
for $z$ inside the lens.
\end{itemize}
This is exactly the Bessel model problem.

\begin{figure}[t]
\centering 
\includegraphics[scale=0.4]{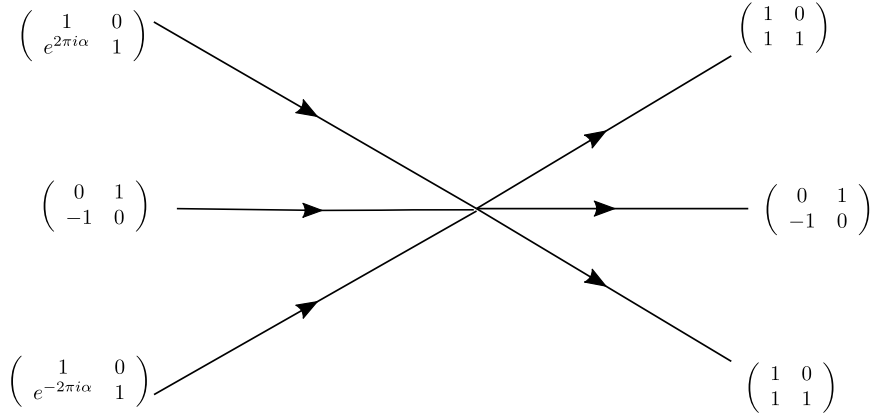}
\caption{The jumps for the canonical model problem in case of Bessel or Painlev\'{e} II in the bulk with a logarithmic singularity.}
\label{besseljumps2}
\end{figure}

\subsubsection{Relation to Bessel in the bulk}
Setting $I = \mathbb{R}$, $L = \mathbb{R}\setminus\{0\}$, $B = \{0\}$, $\hat{\alpha}_0 = \alpha$, $\tau_{\infty,1} = 1$ and $\tau_{\infty,0} = -\frac{1}{2}\pi \alpha$ in an interior-type problem of order $k=0$ we have the following RH problem,
\begin{itemize}
\item[(a)] $\widehat{\Phi} : \mathbb{C} \setminus \mathbb{R} \setminus \Sigma^{\mathbb{R}\setminus\{0\}}_+\setminus \Sigma^{\mathbb{R}\setminus\{0\}}_- \rightarrow \mathbb{C}^{2 \times 2}$ is analytic in $z$.
\item[(b)] $\widehat{\Phi}$ has the jumps shown in Figure \ref{besseljumps2}
\item[(c)] As $z \rightarrow \infty$,
\beq
\widehat{\Phi}(z)=\left(I+\bigO(z^{-1})\right) e^{-i(z +  \frac{1}{2}i\pi \alpha)\sigma_3} Q(z).
\eeq
\end{itemize}
The above RH problem is related to that in \cite{Vanlassen0305044} by,
\beq
\Psi(z) = \widehat{\Phi}(z)\times \begin{cases} e^{-\pi i \alpha \theta(z) \sigma_3}, &\mbox{if $\Re z > 0$,} \\ 1, &\mbox{if $\Re z < 0$.} \end{cases}
\eeq

\subsubsection{Relation to the general Painlev\'{e} II equation}
Setting $I = \mathbb{R}$, $L = \mathbb{R}\setminus\{0\}$, $B = \{0\}$, $\hat{\alpha}_0 = \alpha$, $\tau_{\infty,3} = 4$, $\tau_{\infty,2} = 0$ and $\tau_{\infty,1} = 3s$ in an interior-type problem of order $k=1$ we have the following RH problem,
\begin{itemize}
\item[(a)] $\widehat{\Phi} : \mathbb{C} \setminus \mathbb{R} \setminus \Sigma^{\mathbb{R}\setminus\{0\}}_+\setminus \Sigma^{\mathbb{R}\setminus\{0\}}_- \rightarrow \mathbb{C}^{2 \times 2}$ is analytic in $z$.
\item[(b)] $\widehat{\Phi}$ has the jumps shown in Figure \ref{besseljumps2}
\item[(c)] As $z \rightarrow \infty$,
\beq
\widehat{\Phi}(z)=\left(I+\bigO(z^{-1})\right) e^{-\frac{i}{3}(4z^3 +  3s z + \tau_{\infty,0} )\sigma_3} Q(z).
\eeq
\end{itemize}
The above RH problem is related to that in \cite{claeys2008multi} by,
\beq
\label{PIIrel}
\Psi^{PII}(z) = e^\frac{\pi i \alpha \sigma_3}{2} e^\frac{i \tau_{\infty,0}}{3}\widehat{\Phi}(z)Q(z)^{-1} e^{-\frac{\pi i \alpha \sigma_3}{2}}.
\eeq
\subsubsection{Relation to GUE at a birth of a cut}
Setting $I = \mathbb{R}$, $B = \varnothing$, $\tau_{\infty,2} = 1$, $\tau_{\infty,1} = 0$ and $\tau_{\infty,0} = 0$ in an exterior-type problem of order $k=1$ we have the following RH problem,
\begin{itemize}
\item[(a)] $\widehat{\Phi} : \mathbb{C} \setminus \mathbb{R} \rightarrow \mathbb{C}^{2 \times 2}$ is analytic in $z$.
\item[(b)] $\widehat{\Phi}$ has the jump,
\beq
\widehat{\Phi}^{-1}(z) \widehat{\Phi}(z) = \begin{pmatrix}1 & 1\\ 0&1 \end{pmatrix}
\eeq
for $z \in \mathbb{R}$.
\item[(c)] As $z \rightarrow \infty$,
\beq
\widehat{\Phi}(z)=\left(I+\bigO(z^{-1})\right) z^{\hat{c} \sigma_3} e^{-\frac{z^2}{2} \sigma_3}.
\eeq
This RH problem is related to the standard RH problem for Hermite polynomials of order $\hat{c}$ by,
\beq
\widehat{\Phi}(z) = Y(z) e^{-\frac{z^2}{2} \sigma_3}.
\eeq
\end{itemize}

\subsection{Solvability of $\Phi$ RH problem: Proof of theorem \ref{existence thm}}
We follow the standard argument used in \cite{IKO}. In particular the existence of a solution to the model Riemann Hilbert problem depends crucially on showing a ``vanishing lemma''. To state it, we define a function $\Lambda$ so that we may write the asymptotics \eqref{Phi z to inf edge} and \eqref{Phi z to inf int} in a uniform manner. In particular let,
\beq
\Lambda(z) := \begin{cases} -\frac{2}{2k+3} z^\frac{1}{2}\hat{P}_{k+1}(z), & \mbox{if $\Phi$ is of edge type,}\\ -\frac{i}{2k+1}\hat{P}_{2k+1}(z) \theta(z)  & \mbox{if $\Phi$ is of interior type,} \end{cases}
\eeq
we then have the following result.

\begin{lemma}[Vanishing Lemma]
Let $\Phi_0$ be a function satisfying the conditions (a), (b) and (d) of the RH problem for $\Phi$ together
with the asymptotic behaviour as $z \rightarrow \infty$ of,
\beq
\Phi_0(z)=\bigO(z^{-\eta}) e^{\Lambda(z) \sigma_3} \hat{K}(z, \Omega)^{-1}
\eeq
where $\Omega$ takes the form specified in the RH problem for $\Phi$ in the edge and interior cases, $\eta = 3/4$ if $\Phi$ is of edge type and $\eta=1$ if $\Phi$ is of interior type. Then $\Phi_0 \equiv 0 $.
\end{lemma}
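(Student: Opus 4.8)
The plan is to follow the standard Deift--Zhou-style vanishing lemma argument: form the auxiliary matrix $H(z) := \Phi_0(z)\overline{\Phi_0(\bar z)}^{\,T}$ (or the appropriate variant adapted to the jump structure), show it extends to an entire function that vanishes at infinity, conclude $H\equiv 0$, and then use positivity on the contour to force $\Phi_0\equiv 0$. First I would record the precise behaviour of $\Phi_0$ near the singular points: near $b\in B$ the factor $[z-b]^{\hat\alpha_b\sigma_3}$ with $\hat\alpha_b\ge 0$ together with the bounded prefactor and the essential-singularity bound \eqref{singbound} (which is exactly what prevents blow-up along $I$), and near $a\in\partial I\setminus B$ the factor $e^{\pm\frac{1}{2\pi i}\log(\cdot)\sigma_+}$, which is only mildly (logarithmically) singular. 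One checks these local behaviours are integrable/finite-energy, so the Cauchy-type contour integrals defining $H$ make sense and no spurious poles appear at $\partial I$, $B$, or the self-intersection points of the contours.

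Next I would verify the analytic continuation of $H$ across each portion of the contour. On $I$ the jump for $\Phi_0$ is the triangular matrix $\begin{pmatrix} e^{2\pi i\hat\alpha_\Gamma}&\chi_I\\0&e^{-2\pi i\hat\alpha_\Gamma}\end{pmatrix}$; on $\bigcup_b\Gamma_b\setminus I$ the off-diagonal entry is $0$; and after opening the lens (using the transformation to $\widehat\Phi_0$ as in Section~2.1) the relevant jumps become $\begin{pmatrix}0&1\\-1&0\end{pmatrix}$ on $L$ and lower/upper triangular unipotent on $\Sigma^L_\pm$. For $H(z)=\Phi_0(z)\overline{\Phi_0(\bar z)}^T$ one computes $H_+ = \Phi_{0,+}\overline{\Phi_{0,+}(\bar z)}^T$ and, using $\Phi_{0,+}=\Phi_{0,-}j$, relates $H_+$ to $H_-$ via $j\,\bar{j}^{\,T}$ evaluated appropriately; the point is that for these specific $j$ one gets $H_+\equiv H_-$ across $L$ and across the contours where $j$ is unipotent, while on $I$ one must be slightly more careful because of the diagonal phases $e^{\pm 2\pi i\hat\alpha_\Gamma}$ — here one uses that on $I$ these phases are real-valued exponentials of $2\pi i$ times a real constant, so $\overline{j(\bar z)}^{\,T} = j(z)^{-1}$ on the relevant pieces and continuity of $H$ still follows. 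Combined with the $\bigO(z^{-2\eta})$ decay at infinity coming from $\Phi_0(z)=\bigO(z^{-\eta})e^{\Lambda\sigma_3}\hat K^{-1}$ (the exponential factors cancel between $\Phi_0$ and $\overline{\Phi_0(\bar z)}^T$ because $\Lambda$ is real on the real axis in the relevant directions and $\hat K$ cancels its conjugate), Liouville gives $H\equiv 0$.

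Then, to pass from $H\equiv 0$ to $\Phi_0\equiv 0$, I would restrict to $I$ (or to $L$): writing $\Phi_{0}=(\phi_1\ \ \phi_2)$ in columns, the identity $H\equiv 0$ together with the explicit jump relation yields, on the cut, a relation of the form $\phi_{1,+}\overline{\phi_{1,+}} + (\text{cross terms}) = 0$ forcing the boundary values of one column to vanish; feeding this back into the jump condition propagates the vanishing to the other column, and analyticity off the contour plus the (integrable) behaviour at the endpoints $\partial I$ and at the points of $B$ — where \eqref{singbound} guarantees $\phi_i$ cannot carry a nonzero singular contribution — allows one to conclude $\Phi_0\equiv 0$ by an application of Morera/Liouville to the individual analytic entries.

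The main obstacle, I expect, is the bookkeeping at the singular points $b\in B$: the weight factors $[z-b]^{\hat\alpha_b\sigma_3}$ and especially the essential-singularity exponentials $\exp[-\sum \frac{1}{2j}\tau^j_{b,j}(z-b)^{-j}\sigma_3]$ mean that the entries of $\Phi_0$ can grow as $z\to b$ along some directions, and one must show that along $I$ (where condition \eqref{singbound} is imposed) the growth is controlled well enough that the contour integrals and the positivity argument survive — i.e. that $H$ has at worst a removable singularity at each $b$ rather than a pole. A secondary subtlety is handling the corners where $\Sigma^L_\pm$, $I$, and the $\Gamma_b$ meet, and the behaviour in the sectors $\Omega^L_\pm$ versus their complement, where the $\hat K$ factor is present in one region and absent in the other; one has to check the analytic-continuation-of-$H$ computation is consistent across all these pieces, exactly as in \cite{IKO} but now with the extra $\hat\alpha_\Gamma$-dependent phases and multiple singular points.
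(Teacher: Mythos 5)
Your broad strategy (form the auxiliary Hermitian quantity, use Cauchy/positivity, conclude vanishing) is the right flavor, but there are two genuine gaps, the second of which is fatal.

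First, you work directly with $H(z)=\Phi_0(z)\overline{\Phi_0(\bar z)}^T$ and claim the exponential factors ``cancel because $\Lambda$ is real on the real axis in the relevant directions.'' That is not how the cancellation works, and the paper's detour through $G(z):=\Phi_0(z)e^{-\Lambda(z)\sigma_3}\hat Q(z)$ is not cosmetic. The boundary values of $\Lambda$ satisfy $\Lambda_+=\Lambda_-=-\hat\Lambda$ on $\mathbb{R}^+$ but $\Lambda_+=-\Lambda_-=-i\hat\Lambda$ on $\mathbb{R}^-$ in the edge case, and $\Lambda_+=-\Lambda_-=-i\hat\Lambda$ everywhere in the interior case, with $\hat\Lambda$ real. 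So on part of the contour the boundary value is purely imaginary, on another part it is real, and one needs the extra $\hat Q(z)$ conjugation to put the jump into a form where $j_G\,\overline{j_G(\bar z)}^T$ behaves well and where the eventual integral $\int_\mathbb{R}H_+\,dx=0$ has a genuinely semi-definite integrand. Without the $G$-step, your claimed cancellation is simply not correct on $\mathbb{R}^-$ (edge) or $\mathbb{R}$ (interior). Also, $H$ does not ``extend to an entire function'': it is analytic in $\mathbb{C}\setminus\mathbb{R}$, decays at infinity, and one uses Cauchy's theorem to get $\int_\mathbb{R}H_+\,dx=0$; this is weaker than entire-plus-Liouville but is what the argument actually delivers.

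Second, and more seriously, the step ``feeding this back into the jump condition propagates the vanishing to the other column'' does not work and is exactly where the real work of the lemma lies. The positivity argument kills only \emph{one} column of $G$ in each half-plane (first column below, second column above). The surviving entries of $G$ satisfy a nontrivial \emph{scalar} Riemann--Hilbert problem with a jump of the form $g_+=g_-e^{-2i\hat\Lambda\mathcal{V}_1-2\pi i\hat\alpha_\Gamma}$, and one must show this scalar RH problem has only the trivial solution. The paper does this by peeling off the singular factors at each $b\in B$ to get $\hat g$, then in the edge case making the square-root and M\"obius changes of variable from \cite{IKO} (suitably generalized to arbitrary order $k$) and invoking \textbf{Carlson's theorem}; in the interior case it appeals to the analogous argument of \cite{claeys2008multi}. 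Your proposal contains no analogue of this: no Carlson/Phragm\'en--Lindel\"of input, no scalar reduction, and no mechanism to exclude a nonzero bounded holomorphic function in the remaining entries. This is a missing idea, not mere bookkeeping, so as written the proposed proof does not establish the lemma.

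Your instinct about the endpoint and essential-singularity bookkeeping (use of \eqref{singbound}, log singularities at $\partial I$) is on target and matches what the paper has to check for the Cauchy integral to make sense, but that is the easier part.
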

\begin{proof}
Let us first draw the readers attention to the following properties of $\Lambda$. Let $\hat{\Lambda}(x)$ be a function which is real for $x \in \mathbb{R}$. When $\Phi$ is edge type we have, 
\begin{align}
\Lambda_+(x) = \Lambda_-(x) &= -\hat{\Lambda}(x), & x > 0, \\
\Lambda_+(x) = -\Lambda_-(x) &= -i \hat{\Lambda}(x), & x < 0,
\end{align}
and when $\Phi$ is of interior type we have 
\beq
\Lambda_+(x) = -\Lambda_-(x) = - i \hat{\Lambda}(x), \qquad x \in \mathbb{R}.
\eeq
In the above we have used the fact that the vector $\tau_{\infty}$ is real. 

We now let,
\beq
G(z) := \Phi_0(z) e^{-\Lambda(z) \sigma_3} \hat{Q}(z)
\eeq
where we have used,
\beq
\hat{Q}(z) := \begin{cases} \begin{pmatrix}0 & -1 \\ 1& 0 \end{pmatrix}, & \Im z > 0,\\ I & \Im z < 0. \end{cases}.
\eeq
The function $G$ satisfies the RH problem,

\subsubsection*{RH problem for $G$}
\begin{itemize}
\item[(a)] $G:\mathbb{C} \setminus \mathbb{R}\setminus \cup_{b \in B} \Gamma_{b} \rightarrow \mathbb{C}^{2\times2}$ is analytic. 
\item[(b)]  The jump matrix $j_G(z) := G_-(z)^{-1}G_+(z) $ has the following form,
\begin{align}
j_G &= \left(
\begin{array}{cc}
\chi_{I}(z) e^{\Lambda_+(z)+\Lambda_-(z)} & -e^{\Lambda_-(z)-\Lambda_+(z)+2 \pi i  \hat{\alpha}_\Gamma(z)} \\
 e^{-\Lambda_-(z)+\Lambda_+(z)-2 \pi i  \hat{\alpha}_\Gamma(z)} & 0 \\
\end{array}
\right),\quad \mbox{if $z \in \mathbb{R}$}\\
j_G &= e^{\mp 2 \pi i \hat{\alpha}_{\Gamma}(z) \sigma_3},\quad \mbox{if $\pm\Im z > 0$ and  $z \in \bigcup_{b \in B} \Gamma_b$}
\end{align}
These jumps specialise in the edge and interior cases to:
\begin{itemize}
\item[(i)] $\Phi$ is of edge type:
\begin{align}
j_G&= \left(
\begin{array}{cc}
 e^{-2 \hat{\Lambda}(z)} \chi_I(z) & -e^{2 \pi i  \hat{\alpha}_\Gamma(z)} \\
 e^{-2 \pi i  \hat{\alpha}_\Gamma(z)} & 0 \\
\end{array}
\right), \quad z \in \mathbb{R}^+&\\
j_G&=\left(
\begin{array}{cc}
 \chi_I(z) & -e^{2 i \hat{\Lambda}(z)+2 \pi i  \hat{\alpha}_\Gamma(z)} \\
 e^{-2 i \hat{\Lambda}(z)-2 \pi i  \hat{\alpha}_\Gamma(z)} & 0 \\
\end{array}
\right),  \quad z \in \mathbb{R}^- &\\
j_G &= e^{\mp 2\pi i \hat{\alpha}_{\Gamma}(z) \sigma_3},\quad \mbox{if $\pm\Im z > 0$ and  $z \in \bigcup_{b \in B} \Gamma_b$}.&
\end{align}
\item[(ii)] $\Phi$ is of interior type:
\begin{align}
j_G&=\left(
\begin{array}{cc}
 \chi_I(z) & -e^{2 i \hat{\Lambda}(z)+2 \pi i  \hat{\alpha}_\Gamma(z)} \\
 e^{-2 i \hat{\Lambda}(z)-2 \pi i  \hat{\alpha}_\Gamma(z)} & 0 \\
\end{array}
\right), \quad z \in \mathbb{R}\\
j_G &= e^{\mp 2\pi i \hat{\alpha}_{\Gamma}(z) \sigma_3},\quad \mbox{if $\pm\Im z > 0$ and  $z \in \bigcup_{b \in B} \Gamma_b$}.&
\end{align}
\end{itemize}

\item[(c)] As $z\rightarrow \infty$,
\beq
G(z)=\bigO(z^{-\eta}),
\eeq
where $\eta = 3/4$ if $\Phi$ is of edge type and $\eta=1$ if $\Phi$ is of interior type. Note that this behaviour depends crucially on the constraint $\tau_{\infty,2k+1}>0$ in the interior case, $\tau_{\infty,k+1}>0$ in the edge case with $k \in 2 \mathbb{N}^0$ and $\tau_{\infty,k+1} < 0$ in the edge case with $k =-1$. 

\item[(d)] As $z \to b \in B$,
\beq
G(z) = \bigO(1) \exp\left[-\sum^{\hat{d}_{b}-1}_{j=1}\frac{1}{2j} \tau_{b,j}^j (z-b)^{-j}\sigma_3\right] [z-b]^{\hat{\alpha}_b \sigma_3}\hat{Q}(z)
\eeq
As $z \to a \in \partial I \setminus B$ for $\Im z< 0$ we have,
\beq
G(z) = \ \bigO\begin{pmatrix}1 & \log|z-a| \\ 1 & \log|z-a|\end{pmatrix},
\eeq
while for $z \to a \in \partial I \setminus B$ for $\Im z > 0$,
\beq
G(z) = \ \bigO\begin{pmatrix}\log|z-a| & 1 \\ \log|z-a| & 1\end{pmatrix}.
\eeq
\end{itemize}

We now introduce the function,
\beq
H(z) := G(z) G(z^*)^\dagger, \qquad z \in \mathbb{C} \setminus \mathbb{R}.
\eeq
Using the jumps of $G$ it is indeed possible to demonstrate that $H$ has no jump on $\cup_{b \in B} \Gamma_b \setminus \mathbb{R}$ and no singularities at $b \in B$ and is therefore analytic in $\mathbb{C} \setminus \mathbb{R}$. The only points at which some elements of $H$ are not bounded are in $\partial I \setminus B$, at which there may be $\log$ divergences. Lastly, we have that as $z\rightarrow \infty$,
\beq
H(z) = \bigO(z^{-2\eta}).
\eeq
Using the above properties together with Cauchy's theorem allows us to conclude,
\beq
\int_{\mathbb{R}} H_+(x)dx  = 0.
\eeq
By adding the hermitian conjugate of the above equation to itself we have,
\beq
\int_{I\cap(-\infty,0]} G_-(x)\begin{pmatrix}1& 0\\0 &0  \end{pmatrix} G_-(x)^\dagger dx + \int_{I\cap[0,\infty)} G_-(x)\begin{pmatrix}\C{V}(x)& 0\\0 &0  \end{pmatrix} G_-(x)^\dagger dx = 0,
\eeq
where $\C{V}(x) := e^{-2\hat{\Lambda}(x)}$ in the edge case and $\C{V}(x) := 1$ in the interior case. In both cases we are able to conclude that the first column and second column of $G(z)$ is zero in the lower and upper half plane respectively.

To analyse the other entries of $G$ we use the standard argument based on Carlson's theorem \cite{IKO}. We define the scalar functions,
\beq
g_k(z) := \begin{cases}G_{k,1}(z)& \Im z> 0\\ G_{k,2}(z)& \Im z< 0 \end{cases} 
\eeq
and it is then easy to show that both $g_1$ and $g_2$ satisfy the following RH problem:
\subsubsection*{RH problem for $g$}
\begin{itemize}
\item[(a)] $g:\mathbb{C} \setminus \mathbb{R} \cup_{b \in B} \Gamma_{b} \rightarrow \mathbb{C}$ is analytic. 
\item[(b)] $g$ has jumps,
\begin{align}
g_+(z) &= g_-(z) e^{-2 i \hat{\Lambda}(z) \C{V}_1(z) -2 \pi i \hat{\alpha}_\Gamma(z)}, & z \in & \mathbb{R} \\
g_+(z) &= g_-(z) e^{-2 \pi i \hat{\alpha}_{\Gamma}(z)}, & z \in & \bigcup_{b \in B} \Gamma_b \setminus \mathbb{R}
\end{align}
where $\C{V}_1(z) = \chi_{\mathbb{R}^-}(z)$ in the edge case and  $\C{V}_1(z) = 1$ in the interior case.
\item[(c)] As $z \rightarrow \infty$, $g(z) = \bigO(z^{-\eta})$.
\item[(d)] As $z \rightarrow b \in B$ we have,
\beq
g(z) = \bigO(1) \exp\left[\sum^{\hat{d}_{b}-1}_{j=1}\frac{1}{2j} \tau_{b,j}^j (z-b)^{-j}\right] [z-b]^{-\hat{\alpha}_b}.
\eeq
As $z \rightarrow a \in \partial I$ we have $g(z) = \bigO(1)$.
\end{itemize}
\begin{remark}
In condition (d) we have $g(z) = \bigO(1)$ as $z \rightarrow a \in \partial I$. In principle we could have had $g(z) = \bigO(\log(z -a))$, however it is easy to see by deleting the jumps near $a$ that we must have $\bigO(1)$ behaviour at $a$.
\end{remark}
We now transform $g$ in order to put it into a known form. Define,
\beq
\hat{g}(z) := g(z) z^{-\hat{\alpha}_\mathrm{tot}}\prod_{b\in B} \left([z-b]^{\hat{\alpha}_b} \exp\left[-\sum^{\hat{d}_{b}-1}_{j=1}\frac{1}{2j} \tau_{b,j}^j (z-b)^{-j}\right]\right).
\eeq
We then have,
\subsubsection*{RH problem for $\hat{g}$}
\begin{itemize}
\item[(a)] $\hat{g}:\mathbb{C} \setminus \mathbb{R}$ is analytic. 
\item[(b)] $\hat{g}$ has jumps,
\beq
\hat{g}_+(z) = \hat{g}_-(z) e^{-2 i \hat{\Lambda}(z) \C{V}_1(z) - 2 \pi i \hat{\alpha}_\mathrm{tot} \chi_{\mathbb{R}^-}(z) },\quad z \in \mathbb{R} 
\eeq
\item[(c)] As $z \rightarrow \infty$, $\hat{g}(z) = \bigO(z^{-\eta})$.
\item[(d)] As $z \rightarrow 0$, $\hat{g}(z) = \bigO(z^{-\hat{\alpha}_\mathrm{tot}})$.
\end{itemize}
At this point we note that for the edge case a problem of this form has been solved in \cite{IKO} and for the interior case this form of problem has been solved in \cite{claeys2008multi}. In both cases one finds, via Carlson's theorem, that $g \equiv 0$.  The only difference in the current case compared to \cite{IKO, claeys2008multi} is that here $\hat{\Lambda}$ is not explicit. Nevertheless, the form of proof used in \cite{claeys2008multi} does not rely on the precise form of $\hat{\Lambda}$ and therefore the result follows immediately in the interior case. For the edge case we have essentially the same situation, however we feel it is useful to highlight some of the changes necessary in the proof from \cite{IKO}. We first follow \cite{IKO} by making a change of variable,
\beq
h(z) := \begin{cases}\hat{g}(z^2), & \Re z > 0 \\ \hat{g}(z^2) e^{-2\pi i \hat{\alpha}_\mathrm{tot} \theta(z)} e^{-2\Lambda(z^2)}, & \Re z < 0. \end{cases}
\eeq
It is straight-forward to verify that $h$ is analytic in the region $\Re z > 0$. The difference compared to \cite{IKO} arises in the next transformation,
\beq
\hat{h}(z) := \left(\frac{z}{1+z} \right)^{\frac{4k+8}{2k+3} \alpha} h\left(z^\frac{2k+4}{2k+3} \right).
\eeq
Note that the above equation reduces to equation (2.29) in \cite{IKO} in the case $k=0$. One can then verify that $\hat{h}$ is analytic for $\Re z>0$, is bounded for $\Re z \geq 0$ and for $z \rightarrow \infty$ on the line $\Re z = 0$ we have,
\beq
|\hat{h}(ix)| \leq C e^{-c |x|^{2k+4}}
\eeq
where $C, c > 0$ are constants. By Carlson's theorem we are able to conclude that $\hat{h} \equiv 0$. This completes the proof of the vanishing lemma.
\end{proof}

The proof of Theorem \ref{existence thm} follows from the description of the RH problem in terms of singular integral
equations of Cauchy-type whose corresponding operator is a Fredholm operator of index zero. It can be shown that the kernel of this operator is trivial if and only if the vanishing lemma holds and therefore the vanishing lemma implies the integral equation is solvable. A detailed description of these points can be found in \cite{IKO}. One subtlety of this argument is that the RH problem under consideration must be equivalent to one with no singular points which is not the case here. On this point we follow the same argument given in the existence proof found in \cite{ACM}.

Noting the asymptotic behaviour of $\Phi$ as $z \rightarrow b \in B$, we make the following transformation $\Phi \mapsto \Phi_{b,0}(z)$,
\beq
\Phi(z) = \Phi_{b,0}(z)\exp\left[-\sum^{\hat{d}_{b}-1}_{j=1}\frac{1}{2j} \tau^j_{b,j} (z-b)^{-j}\sigma_3\right][z-b]^{\hat{\alpha}_b \sigma_3},
\eeq
where $\Phi_{b,0}(z)$ is bounded as $z \rightarrow b \in B$. Substituting the above relation into the jump conditions yield that $\Phi_{b,0}$ has no jumps in a neighbourhood of $b$ if $b \notin \mathbb{R}$. This implies that $\Phi_{b,0}$ is analytic near such a point. For $b \in \mathbb{R}$ we instead find that $\Phi_{b,0}$ must satisfy the jump $j_0:=\Phi_{{b,0},-}(z)^{-1}\Phi_{{b,0},+}(z)$,
\begin{align}
j_0 &= \left(
\begin{array}{cc}
 e^{2 \pi i  (\hat{\alpha}_\Gamma(z) - \chi_{\Gamma_b}(z)\hat{\alpha}_b)} & \chi_I(z) e^{-\sum^{\hat{d}_{b}-1}_{j=1}\frac{1}{j} \tau^j_{b,j} (z-b)^{-j}}|z-b|^{2\hat{\alpha}_b} \\
 0 & e^{-2 \pi i (\hat{\alpha}_\Gamma(z) - \chi_{\Gamma_b}(z)\hat{\alpha}_b)} \\
\end{array} \right), &z\in I \cup \bigcup_{b' \in B} \Gamma_{b'}.
\end{align}
We now let,
\beq 
\Phi_{b,0}(z) = \widetilde{\Phi}_{b,0}(z) \begin{pmatrix}1& \tilde{f}(z)\\ 0 & 1\end{pmatrix} e^{\pi i (\hat{\alpha}_\Gamma(z) - \chi_{\Gamma_b}(z)\hat{\alpha}_b) \theta(z) \sigma_3}
\eeq
and note that if we choose $\tilde{f}$ such that,
\beq
\label{tf jump}
\tilde{f}_+(z) = \tilde{f}_-(z) + \chi_I(z) e^{-\sum^{\hat{d}_{b}-1}_{j=1}\frac{1}{j} \tau^j_{b,j} (z-b)^{-j}}|z-b|^{2\hat{\alpha}_b}
\eeq
for $z$ in any fixed interval around $b$ we have that $\widetilde{\Phi}_{b,0}$ has no jumps in a fixed neighbourhood of $b$ and is hence analytic there. It is important to note at this point that a solution of \eqref{tf jump} only exists if $\tau_{b,j}$ is such that $e^{-\sum^{\hat{d}_{b}-1}_{j=1}\frac{1}{j} \tau^j_{b,j} (z-b)^{-j}}|z-b|^{2\hat{\alpha}_b}$ is bounded on $I$. This leads directly to the constraints \eqref{singbound} stated in the theorem.

Let $D_b$ be a fixed disc centered on $b \in B$. We now define,
\beq
\widetilde{\Phi}(z) = \begin{cases} \Phi(z), & \mbox{$z \in \mathbb{C} \setminus \bigcup_{b \in B} D_b $}, \\ \widetilde{\Phi}_{b,0}(z)& \mbox{$\bigcup_{b \in B} D_b$}  \end{cases}
\eeq
and note that $\widetilde{\Phi}$ has no singular points. This completes the proof for the edge and interior case. In the exterior case we postpone a proof until Remark \ref{ext proof}.

\begin{remark}\label{admissible vec}
We have shown that the canonical RH problem is solvable for a certain set of admissible real vectors $\vec{\tau}_\infty$. The transformation to a RH problem with no singular points also lets us use the same argument as used in \cite{claeysPIsol} to show that the canonical RH problem is solvable for complex parameters in a neighbourhood of such an admissible vector. 
\end{remark}

\section{Asymptotic analysis of the RH problem for orthogonal polynomials}
\label{DZsteep}

\subsection{The RH problem for orthogonal polynomials}
An effective way to characterise orthogonal polynomials appearing \eqref{Keqn} is via a well known RH problem due to
Fokas-Its-Kitaev \cite{FokasItsKitaev}.

\subsubsection*{RH problem for $Y$}
\begin{itemize}
\item[(a)] $Y: \mathbb{C}\setminus \C{I} \rightarrow \mathbb{C}^{2 \times 2} $ is analytic.
\item[(b)] The limits of $Y$ as $z$ approaches $\mathbb{R}$ from above and below exist, are continuous on $\mathbb{R}$ and are denoted by $Y_+$ and $Y_-$ respectively. Furthermore they are related by
\beq
Y_+(x) = Y_-(x) \begin{pmatrix}1&w(x)\\0&1\end{pmatrix},\qquad x\in \C{I}.
\eeq
\item[(c)] $Y(z) = (I+\bigO(z^{-1}))z^{n\sigma_3}$ as $z \rightarrow \infty$.
\item[(d)] As $z \rightarrow a$ for $a \in \C{A} \setminus \C{B}$,
\beq
\label{hard edge asymp}
Y(z) =  Y_a(z) \times \begin{cases} e^{\frac{1}{2\pi i } w(z) \log (z-a) \sigma_+}, & \mbox{if $a$ is a right edge of an interval,}\\e^{-\frac{1}{2\pi i } w(z) \log (a-z) \sigma_+}, & \mbox{if $a$ is a left edge of an interval.} \end{cases}
\eeq
where $Y_a$ is analytic in a neighbourhood of $a$.
As $z \rightarrow b$ for $b \in \C{B}$ we have,
\beq
Y(z)= \bigO\begin{pmatrix}1 & 1 \\ 1 & 1 \end{pmatrix}.
\eeq
\end{itemize}
\begin{remark}
The behaviour \eqref{hard edge asymp} follows from the fact that $Y$ has at most a $\log$ divergence at $a$ and $Y$ has the same jumps as $\exp(\frac{1}{2\pi i }w(z) \log (z-a) \sigma_+)$ at a right edge. A similar statement holds at a left edge.
\end{remark}
This RH problem has a unique solution,
\beq
\label{Ysol}
Y(z) = \begin{pmatrix}p_n(z)&q_n(z)\\ -\frac{2\pi i}{h_{n-1}} p_{n-1}(z)& -\frac{2\pi i}{h_{n-1}} q_{n-1}(z)\end{pmatrix},
\eeq
where $p_j$ is the degree $j$ monic orthogonal polynomial defined in \eqref{ortho p} and
\beq
q_j(z) := \frac{1}{2\pi i} \int_{\C{I}} \frac{p_j(x) w(x)}{x-z} dx.
\eeq

\subsection{Transformation to constant jumps}

To construct the first transformation we introduce an analytic continuation of $w_\mathrm{br}$.
\bdf
We define,
\beq
\bar{w}_\mathrm{br}(z):= \prod_{b\in \C{B}} [z-b]^{2\alpha_b}. 
\eeq
Let $\bar{w}(z) := w_\mathrm{reg}(z)w_\mathrm{sing}(z)\bar{w}_\mathrm{br}(z)$.
\edf
\begin{remark}
We have that $\bar{w}_\mathrm{br}$ has the following jump properties,
\begin{align}
\bar{w}_\mathrm{br}(z)_+ &= \bar{w}_\mathrm{br}(z)_- e^{4\pi i \alpha_\Gamma}, & z \in  \bigcup_{b \in {\C{B}}} \Gamma_{b}  \\
\bar{w}_\mathrm{br}(z)_+ \bar{w}_\mathrm{br}(z)_- &= \prod_{b\in \C{B}} |z-b|^{4\alpha_b} & z \in \mathbb{R}.
\end{align}
Note that the second jump property relies on the fact that $\C{B}$ contains conjugate pairs of points.
\end{remark}

\bdf
\label{alpha def}
Define $\alpha_{\Gamma}(z)$ in an identical way to Definition \ref{hat alpha def} with $\C{B}$ replacing $B$ and $\alpha_b$ replacing $\hat{\alpha}_b$. \edf

Defining $\Psi(z) := Y(z) \bar{w}(z)^\frac{\sigma_3}{2}$, we have,
\subsubsection*{RH Problem for $\Psi$}
\begin{itemize}
\item[(a)] $\Psi: \mathbb{C} \setminus \C{I} \setminus \bigcup_{b \in {\C{B}}} \Gamma_{{b}} \rightarrow \mathbb{C}^{2 \times 2}$ is analytic.
\item[(b)] Let $j_\Psi(z) := \Psi_-(z)^{-1}\Psi_+(z)$. Then,
\begin{align}
j_\Psi(z) &=\left(
\begin{array}{cc}
 e^{2 i \pi  \alpha_\Gamma(z)} & \chi_{\C{I}}(z) \\
 0 & e^{-2 i \pi  \alpha_\Gamma(z)} \\
\end{array}
\right), &z\in \C{I} \cup \bigcup_{b \in {\C{B}}} \Gamma_{{b}}.
\end{align}
\item[(c)] $\Psi(z) = (I+\bigO(z^{-1}))z^{(n+\alpha_\mathrm{tot})\sigma_3}w_{\mathrm{reg}}(z)^\frac{\sigma_3}{2}$ as $z \rightarrow \infty$, where $\alpha_\mathrm{tot} := \sum_{b \in \C{B}} \alpha_b$.
\item[(d)] As $z \to b \in \C{B}$,
\beq
\Psi(z) = \bigO(1)\exp\left[-\sum^{d_{b}-1}_{j=1}\frac{1}{2j} t_{b,j} (z-b)^{-j}\sigma_3\right] [z-b]^{\alpha_b  \sigma_3}
\eeq
As $z \to a \in \C{A} \setminus \C B $
\beq
\Psi(z) = \Psi_{a,0}(z)\times \begin{cases} e^{\frac{1}{2\pi i } \log (z-a) \sigma_+}, & \mbox{if $a$ is a right edge of an interval,}\\e^{-\frac{1}{2\pi i }\log (a-z) \sigma_+}, & \mbox{if $a$ is a left edge of an interval.} \end{cases}
\eeq
Here $\Psi_{a,0}$ is analytic with respect to $z$ at $a$.
\end{itemize}

\begin{remark}\label{ext proof}
Observe that the RH problem for $\Psi$ exactly matches the definition of the canonical model problem of exterior type and we therefore are able to construct a solution to a model problem in terms of orthogonal polynomials.
\end{remark}
\subsection{Opening the lens}
We now perform the standard step of opening the lens. The difference here is that,
since the jumps on the lens contours are constant, the lens contours are unconstrained.
In a subsequent transformation we will use the $g$-function to normalise at infinity, at which
point the lens contours will be required to stay within a region in which they converge to the
identity as $n \rightarrow \infty$.

Due to the rather general nature of the problem we need some additional definitions in order
to define the lens contours and associated regions.

\subsubsection{Definitions of contours}

\begin{definition}
\label{lens contours}
For an interval $\sigma$ of $\mathbb{R}$ define the contours $\Sigma^{\sigma}_\pm$ to be smooth contours from $\inf \sigma$ to $\sup \sigma$ in the regions $\sigma \pm i \mathbb{R}^+$ respectively. A {\em{lens contour for}} $\sigma$ is the contour $\Sigma^\sigma_+\cup \Sigma^\sigma_-$ with all contours orientated from $\inf \sigma$ to $\sup \sigma$ .
\end{definition}

\bdf
\label{lens regions}
Given a lens contour for an interval $\sigma$ define $\Omega^{\sigma}_\pm$ to be the region of $\mathbb{C}$ bounded by $\sigma\cup \Sigma^\sigma_\pm$ such that the $+$ side and $-$ side of $\sigma$ is contained in $\Omega^{\sigma}_+$ and $\Omega^{\sigma}_-$ respectively.
\edf

\bdf
\label{collection of lens}
For a collection of pairwise disjoint intervals $\C{I}$ define a {\em{full lens contour}} $\Sigma^{\C{I}}$ as all intervals in $\C{I}$ treated as contours together with their lens contours. Denote the lens contours of $\Sigma^{\C{I}}$ by $\Sigma^{\C{I}}_\pm$ and define $\Omega^\C{I}_\pm := \bigcup_{\sigma \in \C{I}} \Omega^{\sigma}_\pm$.
\edf

\begin{figure}[t]
\centering 
\includegraphics[scale=0.4]{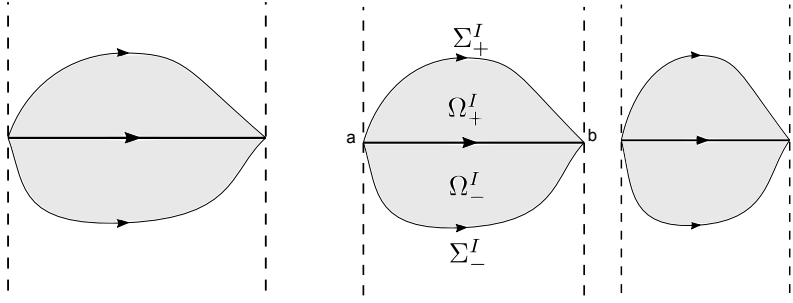}
\caption{An example of contours constructed using Definitions \ref{lens contours}, \ref{lens regions} and \ref{collection of lens}.}
\label{contourdefs}
\end{figure}

Finally we surround all singular points with discs in which we will later construct the local parametricies.

\begin{definition}
For all $x_* \in \C{P}_*\cup\C{R}_*$ define, $D_{x_*}(\delta)$ to be a disc of diameter delta centered at $x_*$ whose
boundary is orientated clockwise. Let $U = \bigcup_{x_*\in \C{P}_*\cup\C{R}_*} D_{x_*}(\delta_{x_*})$. We will always work
with $n$ large enough that the discs contain all points in $\C{P} \cup \C{R}$.
\end{definition}

Let $\bar{\C{S}} \subset \C{S} \cap U$ be the intervals of $\C{S}$ which are fully contained in $U$, i.e. they don't intersect $\partial U$. Let $\gamma = \C{S} \setminus \bar{\C{S}} \setminus \C{P}_*$. We now open the lens of $\gamma$ by defining,
\beq
S(z) := \Psi(z) K(z),
\eeq
where $K$ is a piecewise function designed to open the lens,
\beq
K(z):= \left\{
\begin{array}{lr}
I,  & \mbox{for } z \in \mathbb{C} \setminus \Omega^{\gamma}_+ \setminus \Omega^{\gamma}_-, \\
\begin{pmatrix}1&0\\-e^{2\pi i \alpha_\Gamma(z)}&1\end{pmatrix}, & \mbox{for } z \in \Omega^{\gamma}_+,\\
\begin{pmatrix}1&0\\e^{-2\pi i \alpha_\Gamma(z)}&1\end{pmatrix}, & \mbox{for } z \in \Omega^{\gamma}_-.\\
\end{array}
\right.
\eeq
The function $S$ satisfies the following RH problem.
\subsubsection*{RH Problem for $S$}
\begin{itemize}
\item[(a)] $S: \mathbb{C} \setminus \C{I}  \setminus \Sigma^\gamma \setminus \bigcup_{b \in {\C{B}}} \Gamma_{{b}} \rightarrow \mathbb{C}^{2 \times 2}$ is analytic.
\item[(b)] Let $j_S(z) := S_-(z)^{-1}S_+(z)$ then,
\begin{align}
j_S &= \left(
\begin{array}{cc}
 0 & 1 \\
 -1 & 0 \\
\end{array}
\right), &z\in \gamma,\\
j_S &= \left(
\begin{array}{cc}
 e^{2 i \pi  \alpha_\Gamma(z)} & \chi_{\C{I}}(z) \\
 0 & e^{-2 i \pi  \alpha_\Gamma(z)} \\
\end{array}
\right), &z\in \C{I} \cup \bigcup_{b \in {\C{B}}} \Gamma_{{b}} \setminus \gamma,\\
j_S &= \left(
\begin{array}{cc}
 1 & 0 \\
 e^{\pm 2 i \pi  \alpha_\Gamma(z)} & 1 \\
\end{array}
\right), &z \in \Sigma^\gamma_\pm.
\end{align}
\item[(c)] $S(z)$ has the same asymptotic behaviour as $\Psi(z)$ as $z \rightarrow \infty$.
\item[(d)] $S(z)$ has the same asymptotic behaviour as $\Psi(z)$ as $z \rightarrow b$ for $b \in \C{A} \cup \C{B}$ unless
we also have $b \in \C{S}$, in which case,
\beq
S(z) = \Psi(z) \times \begin{cases}
I  & \mbox{for } z \in \mathbb{C} \setminus \Omega^{\gamma}_+ \setminus \Omega^{\gamma}_- \\[1em]
\begin{pmatrix}1&0\\-e^{2\pi i \alpha_\Gamma(z)}&1\end{pmatrix}, & \mbox{for } z \in \Omega^{\gamma}_+\\[1em]
\begin{pmatrix}1&0\\e^{-2\pi i \alpha_\Gamma(z)}&1\end{pmatrix}, & \mbox{for } z \in \Omega^{\gamma}_-\\
\end{cases}
\eeq

\end{itemize}

\subsection{Normalisation at infinity}

The next transformation takes the form,
\beq
T(z) := e^{-\frac{n l \sigma_3}{2}} S(z) \times \begin{cases} e^{-n\xi(z) \sigma_3},&\mbox{if $z \in \mathbb{C} \setminus U,$}\\
I  ,&\mbox{if $z \in U.$ }\end{cases}
\eeq
The above transformation has the effect of normalising the problem at infinity.

\subsubsection*{RH Problem for $T$}
\begin{itemize}
\item[(a)] $T: \mathbb{C} \setminus \C{I} \setminus \Sigma^\gamma \setminus \partial U \setminus \bigcup_{b \in {\C{B}}} \Gamma_{{b}} \rightarrow \mathbb{C}^{2 \times 2}$ is analytic.
\item[(b)] Let $j_T(z) := T_-(z)^{-1}T_+(z)$. Then,
\begin{align}
j_T &= j_S, & z &\in (\C{I}  \setminus \Sigma^\gamma \setminus \bigcup_{b \in {\C{B}}} \Gamma_{{b}}) \cap U,\\ 
j_T &= \left(
\begin{array}{cc}
 0 & 1 \\
 -1 & 0 \\
\end{array}
\right), &z &\in \gamma \setminus U,\\
j_T &= \left(
\begin{array}{cc}
e^{2 \pi i(\alpha_\Gamma(z)- n \epsilon(z))}& \chi_{\C{I}}(z)e^{n \left(\xi_+(z)+\xi_-(z)\right)} \\
 0 & e^{-2 \pi i (\alpha_\Gamma(z)-n \epsilon(z))} \\
\end{array}
\right), &z&\in \C{I} \cup \bigcup_{b \in {\C{B}}} \Gamma_{{b}} \setminus \gamma\setminus U,\\
j_T &= \left(
\begin{array}{cc}
 1 & 0 \\
 e^{\pm 2 \pi i  \alpha_\Gamma z-2 n \xi(z)} & 1 \\
\end{array}
\right), &z &\in \Sigma^\gamma_\pm \setminus U\\.
j_T &= e^{-n  \xi(z) \sigma_3}, & z &\in \partial U.
\end{align}

\item[(c)] As $z \rightarrow \infty$,
\beq
T(z) = (1+ \bigO(z^{-1})) z^{\alpha_\mathrm{tot}}.
\eeq
\item[(d)] As $z \rightarrow b \in \C{A} \cup \C{B}$, $T(z)$ has the same asymptotics as $ e^{-\frac{n l \sigma_3}{2}}S(z)$. This is because for sufficiently large $n$ all points in $\C{A} \cup \C{B}$ are in $U$.
\end{itemize}

\subsection{Final transformation}
For the final transformation we need a global approximation to the RH problem for $T$. To define this we first need to introduce some new objects.
\bdf
Define $\epsilon^\pm_{x_*} := \epsilon(x_*\pm \frac{1}{2}\delta_{x_*})$ and $\alpha^\pm_{x_*} := \alpha_\Gamma(x_*\pm \frac{1}{2}\delta_{x_*})$; these are the values of $\epsilon$ and $\alpha_\Gamma$ at the points at which $\partial D_{x_*}$, intersects $\mathbb{R}$. We also define $\epsilon^\delta_{x_*} := \epsilon^-_{x_*} -  \epsilon^+_{x_*}$ and $\alpha^\delta_{x_*} := \alpha^-_{x_*} -  \alpha^+_{x_*}$. Note that $\epsilon^\pm_{x_*}$ can be undefined when $x_*$ is of edge or interior type.
\edf

\bdf \label{tilde alpha def}
Let,
\beq
\tilde{\alpha}_\Gamma(z) = \begin{cases}\alpha_\Gamma(z), & \mbox{for $z \in \mathbb{R} \setminus U$,}\\
\alpha^+_{x_*}, & \mbox{for $z \in D_{x_*} \cap (x_*,\infty)$,}\\
\alpha^-_{x_*}, & \mbox{for $z \in D_{x_*} \cap (-\infty,x_*)$.} \end{cases}
\eeq
\edf

\bdf \label{tilde epsilon def}
Let,
\beq
\tilde{\epsilon}_1(z) = \begin{cases}\epsilon(z), & \mbox{for $z \in \mathbb{R} \setminus \C{S}_* \setminus U$,}\\
\epsilon^+_{x_*}, & \mbox{for $z \in D_{x_*} \cap (x_*,\infty)$ with $x_*$ an edge or exterior point,}\\
\epsilon^-_{x_*}, & \mbox{for $z \in D_{x_*} \cap (-\infty,x_*)$ with $x_*$ an exterior point.} \end{cases}
\eeq

Note that $\tilde{\epsilon}_1$ is constant on connected components of $\mathbb{R} \setminus \C{J}_*$. Now we define $\tilde{\epsilon}(z) := \tilde{\epsilon}_1(z) - n^{-1}\C{N}(z)$ where $\C{N} : \mathbb{R}\setminus \C{J}_* \rightarrow \mathbb{N}^0$ is a function which is constant on each connected component of $\mathbb{R}\setminus \C{J}_*$. We chose $\C{N}(z)$ such that when $x_*$ is an exterior point we have $ n|\tilde{\epsilon}^-_{x_*} -  \tilde{\epsilon}^+_{x_*}|< \frac{1}{2}$ and $\tilde{\epsilon}(z) = 0$ if $n \tilde{\epsilon}_1(z) \in \mathbb{Z}$.
\edf

\bdf
Define $\tilde{\epsilon}^\pm_{x_*} := \tilde{\epsilon}(x_*\pm \frac{1}{2}\delta_{x_*})$ and $\tilde{\epsilon}^\delta_{x_*} := \tilde{\epsilon}^-_{x_*} -  \tilde{\epsilon}^+_{x_*}$.
\edf

\begin{figure}[t]
\centering 
\includegraphics[scale=0.6]{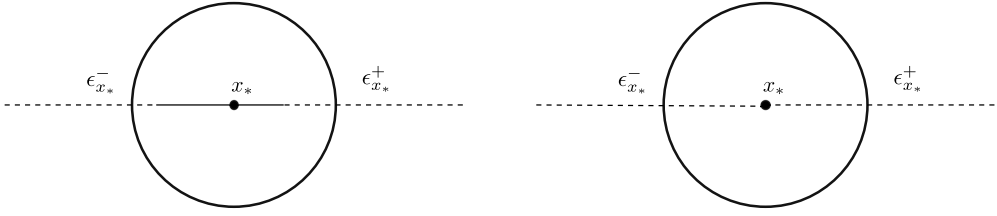}
\caption{On the left the figure shows the behaviour of $\epsilon$ on the connected components of $\mathbb{R} \setminus \C{J}$ (shown as dashed lines) in the case that $x_*$ is an exterior point. The value of $\epsilon$ at the boundary of the disc are labelled $\epsilon^\pm_{x_*}$. In the figure on the right we show the behaviour of $\tilde{\epsilon}_1$ introduced in Definition \ref{tilde epsilon def}, in which the values of $\epsilon$ on the boundary of the disc are extended all the way to $x_*$.}
\label{epsilontildedef}
\end{figure}

\begin{remark}
The idea in the definition of $\tilde{\alpha}_\Gamma$ and $\tilde{\epsilon}$ is that we want them to match $\alpha_\Gamma$ and $\epsilon$ where they are defined but extend them to the domain obtained by shrinking the discs in $U$ to points (see Figure \ref{epsilontildedef}). The reason for this is that we can then use a simpler global parametrix built on a Riemann surface which is not degenerate as $n \rightarrow \infty$. Furthermore, in the definition of $\tilde{\epsilon}$ we have used the freedom that the jumps of $T$ are invariant under $n\epsilon \mapsto n\epsilon + k$ where $k$ is integer, to minimise the difference in $\tilde{\epsilon}$ across singular points $x_*$. 
\end{remark}

Finally we define the {\em{global parametrix}} as a function $P^{(\infty)}$ satisfying the following RH problem,
\subsubsection*{RH problem for $P^{(\infty)}$}
\begin{itemize}
\item[(a)] $P^{(\infty)}: \mathbb{C} \setminus (-\infty,\sup (\C{B}_*\cup \C{J}_*)] \rightarrow \mathbb{C}^{2 \times 2}$ is analytic.
\item[(b)] $P^{(\infty)}$ has the jump relations
\begin{align}
&P^{(\infty)}_+(z) = P^{(\infty)}_-(z) \begin{pmatrix}0 &1 \\ -1 &0\end{pmatrix},&  z \in \C{S}_*\\
&P^{(\infty)}_+(z) = P^{(\infty)}_-(z) e^{2\pi i(\tilde{\alpha}_\Gamma(z) - n \tilde{\epsilon}(z)) \sigma_3} ,&  z \in \mathbb{R} \setminus \C{J}_*,
\end{align}
Recall that $\tilde{\alpha}_\Gamma : \mathbb{R}\setminus \C{J}_* \rightarrow \mathbb{R}$ and $\tilde{\epsilon} : \mathbb{R}\setminus \C{J}_* \rightarrow \mathbb{R}$ are functions which are constant on each connected component of $\mathbb{R}\setminus \C{J}_*$. Furthermore $\tilde{\epsilon}$ is zero on infinite intervals and $\tilde{\alpha}_\Gamma$ is zero on $\mathbb{R} \setminus (-\infty,\sup \C{B}_*)$.
\item[(c)] As $z \rightarrow \infty$,
\beq
P^{(\infty)}(z) = (I + \bigO(z^{-1}))z^{\alpha_\mathrm{tot} \sigma_3}.
\eeq
\item[(d)] As $z \rightarrow \hat{z} \in \partial \C{S}_*$
\beq
P^{(\infty)}(z) = \bigO((z-\hat{z})^{-\frac{1}{4}}).
\eeq
As $z \rightarrow x_* \in \C{E}_*$
\beq
\label{z to xstar}
P^{(\infty)}(z) = \bigO(1) (z-\hat{z})^{(\alpha^\delta_{x_*} - n \tilde{\epsilon}^\delta_{x_*})\sigma_3}.
\eeq
\end{itemize}

We now need to define a {\em{local parametrix}} $P^{(x_*)}$ for $x_* \in \C{P}_*\cup \C{R}_*$. We define $P^{(x_*)}$ as the solution to the following RH problem,
\subsubsection*{RH problem for $\C{P}^{(x_*)}$}
\label{local parametrix}
\begin{itemize}
\item[(a)] $P^{(x_*)} : D_{x_*}(\delta_{x_*}) \setminus \C{I} \setminus \Sigma^\gamma \setminus \bigcup_{b \in {\C{B}}} \Gamma_{{b}}  \rightarrow \mathbb{C}^{2 \times 2}$ is analytic.
\item[(b)] Let $j_{P^{(x_*)}} := P^{(x_*)}_-(z)^{-1}P^{(x_*)}_+(z)$ then $j_{P^{(x_*)}} = j_T$.
\item[(c)] As $z \rightarrow b \in \C{B}|_{x_*}$, $P^{(x_*)}(z) = T(z)$ where the equality is understood to mean equality between asymptotics series and the notation $\C{B}|_{x_*}$ was introduced in Definition \ref{bar x_star def}.
\item[(d)] As $n\rightarrow \infty$, $P^{(x_*)}(z) = (I + o(1)) P^{(\infty)}(z) e^{n \xi(z)\sigma_3}$ uniformly for $z \in \partial D_{x_*}(\delta_{x_*})$.
\end{itemize}

We now define $R(z)$ as,
\begin{equation}
R(z)=\begin{cases}
T(z)P^{(\infty)}(z)^{-1},&\mbox{ for $z \in \mathbb{C} \setminus U$, }\\
T(z)P^{(x_*)}(z)^{-1},&\mbox{ for $z\in D_{x_*}(\delta_{x_*})$, where $x_* \in \C{P}_*\cup \C{R}_*$.}
\end{cases}
\end{equation}
Using the above definition together with Definition \ref{tilde epsilon def} we can derive the following RH problem for $R$.
\subsubsection*{RH problem for $R$}

\begin{itemize}
\item[(a)] $R: \mathbb{C} \setminus \C{I} \setminus (\Sigma^\gamma \setminus U) \setminus  \partial U \rightarrow \mathbb{C}^{2 \times 2}$ is analytic.

\item[(b)] Let $j_R(z) := R_-(z)^{-1}R_+(z)$ then,
\begin{align}
\label{jumps of R: lens}
j_R(z) &= P^{(\infty)}(z) \begin{pmatrix} 1 & 0 \\
 e^{-2 n \xi(z)\pm 2\pi i \alpha_\Gamma(z)} & 1 \end{pmatrix} P^{(\infty)}(z)^{-1}, &z \in \Sigma^\gamma_\pm\setminus U,\\ 
\label{jumps of R: gamma}
j_R(z) &=  P^{(\infty)}_+(z) \begin{pmatrix}  1 & e^{n \left(\xi_-(z)+\xi_+(z)\right) +2 \pi i n \epsilon (z)-2\pi i \alpha_\Gamma(z) }  \\ 0 & 1\end{pmatrix} P^{(\infty)}_+(z)^{-1}, &z\in \C{I} \setminus \gamma \setminus U ,\\ 
\label{jumps of R: disc}
j_R(z) &= P^{(x_*)}_-(z) \begin{pmatrix} e^{-n \xi(z)} & 0 \\
 0 & e^{n \xi(z)} \end{pmatrix} P^{(\infty)}_+(z)^{-1}, &z\in\partial D_{x_*},
\end{align}
\item[(c)] As $z \rightarrow \infty$, $R(z) = I + \bigO(z^{-1})$.
\end{itemize}

\subsection{Proof that $R$ is a small norm RH problem}
To prove that $R$ is a small norm RH problem we will show that each jump matrix of $R$ tends towards
the identity as $n\rightarrow \infty$. The first observation is that for sufficiently large $n$ and any fixed point $x \in \C{S} \setminus U$ we have that $\rho(x) > 0$ \cite{DKMVZ2}. By standard arguments this implies that there exists a choice of lens contours $\Sigma^\gamma_\pm$ such that $\Re \xi(z) > 0$ for all $z \in \Sigma^\gamma_\pm \setminus U$ which together with the form of the jump matrices implies the jump matrices converge uniformly to the identity on $\Sigma^\gamma_\pm \setminus U$. 

 We now turn our attention to \eqref{jumps of R: gamma}. In this case we make use of the variational condition that $\xi_+(x) + \xi_-(x) < 0$ for $z \in \mathbb{R} \setminus \C{S}\setminus U$ to again conclude that the jump matrices converge uniformly to the identity on $\mathbb{R} \setminus \C{S}\setminus U$. 

Finally we turn our attention to the jump on the disc boundary \eqref{jumps of R: disc}. That this jump tends to the identity as $n \rightarrow \infty$ follows trivially from property (d) of the local parametrix.

We therefore conclude that all the jumps for $R$ decay to the identity as $n \rightarrow \infty$.

\section{Construction of global and local parametrices}\label{section: 3}

\subsection{Construction of $P^{(\infty)}$}

Given a collection of closed pairwise disjoint intervals $I$ we first define a scalar Szeg\"{o} function $D(z;\alpha,I)$ that solves the following RH problem,
\subsubsection*{RH problem for $D$}
\begin{itemize}
\item[(a)] $D(z;\alpha,I)$ is non-zero and analytic on $\mathbb{C}\setminus \mathbb{R}$ with respect to $z$.
\item[(b)] $D(z;\alpha,I)$ satisfies the following jump relations:
\begin{align}
& D_+(x;\alpha,I)D_-(x;\alpha,I)=|x|^{2\alpha},& \qquad\mbox{for $x\in I$,} \\
& D_+(x;\alpha,I)=e^{2\pi i \eta_\alpha(x)} D_-(x;\alpha,I),& \qquad\mbox{for $x\in \mathbb{R} \setminus I$,}
\end{align}
where the function $\eta_\alpha : \mathbb{R} \setminus I  \rightarrow \mathbb{R}$ is constant on each connected component of $\mathbb{R} \setminus I$.
\item[(c)] $D$ and $D^{-1}$ remain bounded as $z \rightarrow \hat{z} \in \partial I$ and
\beq
D_\infty := \lim_{z\to\infty}D(z;\alpha,I)
\eeq
exists and is non-zero.
\item[(d)] As $z \rightarrow 0$ we have,
\beq
D(z) = \begin{cases} z^\alpha(1+\bigO(z)), &\mbox{if $0\in I$} \\ \bigO(1), &\mbox{if $0\notin I$} \end{cases}.
\eeq
\end{itemize}
\begin{remark}
The function $\eta_\alpha$ in the definition of $D$ is not arbitrary and is fully determined by the requirement (c) in the RH problem. It is shown in \cite{Vanlassen0305044} how (c) determines $\eta_\alpha$ explicitly, together with an explicit construction of $D$.
\end{remark}

\bdf
We now define two auxiliary functions built using $D$;
\begin{align}
\hat{D}(z;x_*,\alpha) &:= (z-x_*)^{-\alpha} D(z-x_*;\alpha,\C{S}_*-x_*)\\
\bar{D}(z;x_*,\alpha) &:= \frac{\hat{D}(z;b_0,\alpha)}{\hat{D}(z;x_*,\alpha)}, \qquad x_* \notin \C{S}_*,
\end{align}
where $b_0 := \sup(\C{S}_*\cap (-\infty,x_*))$ and principal branches are taken for multi-valued functions. The notation $\C{S}_*-x_*$ means the intervals of $\C{S}_*$ shifted to the right by $x_*$. 
\edf

It is straightforward to verify that $\hat{D}$ and $\bar{D}$ satisfy the following RH problems,

\subsubsection*{RH problem for $\hat{D}$}
\begin{itemize}
\item[(a)] The function $\hat{D}:\mathbb{C}\setminus \mathbb{R} \rightarrow \mathbb{C}$ is analytic in $z$.
\item[(b)] $\hat{D}$ has jumps,
\begin{align}
&\hat{D}_+(z)\hat{D}_-(z) = 1 &  z \in \C{S}_*\\
&\hat{D}_+(z) = \hat{D}_-(z) e^{2\pi i \eta(z)} ,&  z \in (x_*,\infty) \cap (\mathbb{R} \setminus \C{S}_*),\\
&\hat{D}_+(z) = \hat{D}_-(z) e^{2\pi i (\eta(z)-\alpha)} ,&  z \in (-\infty,x_*) \cap (\mathbb{R} \setminus \C{S}_*),
\end{align}
where $\eta: \mathbb{R} \setminus \C{S}_* \rightarrow \mathbb{R}$ is a function constant on each connected component of $\mathbb{R} \setminus \C{S}_*$ and zero on infinite intervals.
\item[(c)] As $z\rightarrow \infty$,
\beq
\hat{D}(z) = z^{-\alpha} D_\infty(1+o(1)).
\eeq
\item[(d)] As $z \rightarrow x_*$,
\beq
\hat{D}(z) = \begin{cases} (z-x_*)^{-\alpha} \bigO(1), & \mbox{if $x_* \notin \C{S}_*$} \\  \bigO(1) & \mbox{if $x_* \in \C{S}_*$}  \end{cases}
\eeq
\end{itemize}
\subsubsection*{RH problem for $\bar{D}$}
\begin{itemize}
\item[(a)] The function $\bar{D}:\mathbb{C}\setminus \mathbb{R} \rightarrow \mathbb{C}$ is analytic.
\item[(b)] $\bar{D}$ has jumps,
\begin{align}
&\bar{D}_+(z)\bar{D}_-(z) = 1 &  z \in \C{S}_*\\
&\bar{D}_+(z) = \bar{D}_-(z) e^{2\pi i \eta(z)} ,&  z \in  \mathbb{R} \setminus \C{S}_* \setminus [b_0,x_*],\\
&\bar{D}_+(z) = \bar{D}_-(z) e^{2\pi i (\eta(z) + \alpha)} ,&  z \in [b_0,x_*] \cap (\mathbb{R} \setminus \C{S}_*),
\end{align}
where $\eta: \mathbb{R} \setminus \C{S}_* \rightarrow \mathbb{R}$ is a function constant on each connected component of $\mathbb{R} \setminus \C{S}_*$ and zero on infinite intervals.
\item[(c)] As $z\rightarrow \infty$,
\beq
\bar{D}(z) = \bar{D}_\infty(1+\bigO(z^{-1})).
\eeq
\item[(d)] As $z \rightarrow x_*$,
\beq
\bar{D}(z) = (z-x_*)^{\alpha} \bigO(1).
\eeq
As $z \rightarrow b_0$,
\beq
\bar{D}(z) = \bigO(1).
\eeq
Note that the above behaviour is obtained under the constraint appearing in the definition of $\bar{D}$ that $x_* \notin \C{S}_*$.
\end{itemize}
We now seek $P^{(\infty)}$ in the form,
\begin{align}
P^{(\infty)}(z) :=&  \prod_{x_* \in \C{E}_*} \bar{D}_\infty(x_*,n\epsilon^\delta_{x_*})^{-\sigma_3}  \prod_{b \in \C{B}_*} \hat{D}_\infty(b,\alpha_b)^{-\sigma_3} \times \nn \\
&\tilde{P}^{(\infty)}(z)  \prod_{b \in \C{B}_*} \hat{D}(z;b,\alpha_b)^{\sigma_3} \prod_{x_* \in \C{E}_*} \bar{D}(z;x_*, n\tilde{\epsilon}^\delta_{x_*})^{\sigma_3}.
\end{align}
Note that all points in $\C{B}_*$ and $\C{E}_*$ are real. Furthermore, using the definition of $\tilde{\alpha}_\Gamma(z)$ (Definition \ref{tilde alpha def}) together with the fact that $\tilde{\epsilon} - \sum_{x_* \in \C{E}_*} \tilde{\epsilon}^\delta_{x_*} \chi_{[b_0, x_*]}(z)$ is constant on connected components of $\mathbb{R} \setminus \C{J}_*$ we obtain that the function $\tilde{P}^{(\infty)}$ must solve a RH problem of the form,
\subsubsection*{RH problem for $\tilde P^{(\infty)}$}
\begin{itemize}
\item[(a)] $\tilde P^{(\infty)}:\mathbb{C}\setminus \mathbb{R} \to\mathbb{C}^{2\times 2}$ is analytic in $z$.
\item[(b)] $\tilde P^{(\infty)}$ has the following jump relations:
\beq
\tilde P^{(\infty)}_+(x)=\tilde P^{(\infty)}_-(x)\begin{pmatrix}0 & 1 \\ -1 & 0 \end{pmatrix},\qquad\mbox{for $x\in \C{S}_*$,}
\eeq
\beq
\tilde P^{(\infty)}_+(x)=\tilde P^{(\infty)}_-(x)
\begin{pmatrix}
e^{2\pi i \upsilon(z)} & 0 \\
0 & e^{-2\pi i\upsilon(z)}
\end{pmatrix}, \qquad \mbox{for $x\in \mathbb{R} \setminus \C S_*$.}
\eeq
where $\upsilon(z): \mathbb{R} \setminus \C{S}_* \rightarrow \mathbb{R}$ is a function constant on each connected component of $\C{S}_* \setminus \mathbb{R}$ and zero on infinite intervals.
\item[(c)] As $z \rightarrow \infty $, $\tilde P^{(\infty)}(z)=I+O(1/z)$.
\end{itemize}
The solution to the RH problem for $\tilde P^{(\infty)}$ in terms of theta functions is well known and can be found in \cite{Vanlassen0305044}.

\subsection{Asymptotic behaviour of $\xi$ and $P^{(\infty)}$ near singular points}
Before constructing the local parametrix we first prove some useful lemmas regarding the behaviour of $P^{(\infty)}$ and $\xi$ near $x_*$. We will make heavy use of these lemmas in the construction of $P^{(x_*)}$. In the following we consider the case of $x_*$ being an interior, exterior or right edge point. The case of a left edge point we omit as it is similar to the right edge case.

\begin{lemma}
\label{Pinf asymptotics}
Consider $P^{(\infty)}(z)$ inside the disc $D_{x_*}$. It will be convenient to change to the variable $z = x_*+ \zeta \in \partial D_{x_*}$. Let $E(\zeta)$ be an analytic function in a fixed neighbourhood of $\zeta = 0$. 

\begin{itemize}
\item[(i)] If $x_* \in \C{S}_*$, i.e. an interior point, then we have,
\beq
P^{(\infty)}(x_* + \zeta) = E(\zeta) Q(\zeta),
\eeq
where we have used $Q$ as defined in \eqref{Q def}.
\item[(ii)] If $x_* \in \partial\C{S}_*$ with $x_*$ forming the right edge of an interval in $\partial\C{S}_*$, i.e. a right edge point, then we have,
\beq
P^{(\infty)}(x_* + \zeta) =  E(\zeta) \zeta^{-\frac{\sigma_3}{4}}N  e^{\pi i (\alpha^+_{x_*} - n \epsilon^+_{x_*})\theta(\zeta) \sigma_3},
\eeq
where we recall $N := 2^{-\frac{1}{2}}(I + i \sigma_1)$.
\item[(iii)] If $x_* \in \C{E}_*$ then we have,
\beq
P^{(\infty)}(x_* + \zeta) =  E(\zeta)e^{\pi i (\alpha^+_{x_*} - n \epsilon^+_{x_*})\theta(\zeta) \sigma_3} \zeta^{\beta \sigma_3},
\eeq
where we have introduced $\beta := \alpha^\delta_{x_*} - n\tilde{\epsilon}^\delta_{x_*}$.
\end{itemize}
\end{lemma}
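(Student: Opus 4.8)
The plan is to unwind the definition of $P^{(\infty)}$ layer by layer, tracking the local behaviour near $x_*$ of each factor; the only genuinely non-trivial ingredient is the local behaviour of $\tilde P^{(\infty)}$, which is the standard theta-function parametrix on a hyperelliptic Riemann surface. First I would recall that
\[
P^{(\infty)}(z) = C_n\, \tilde P^{(\infty)}(z) \prod_{b \in \C{B}_*} \hat D(z;b,\alpha_b)^{\sigma_3} \prod_{x_* \in \C{E}_*} \bar D(z;x_*, n\tilde\epsilon^\delta_{x_*})^{\sigma_3},
\]
where $C_n$ is the constant diagonal prefactor. Inside a fixed disc $D_{x_*}$, every Szeg\H{o}-type factor $\hat D(\,\cdot\,;b,\alpha_b)$ with $b \neq x_*$ and every $\bar D(\,\cdot\,;y_*,\cdot)$ with $y_*\neq x_*$ is analytic and non-vanishing (by property (c)/(d) of their RH problems), so these, together with $C_n$, are absorbed into the analytic prefactor $E(\zeta)$. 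Thus the local structure is governed by $\tilde P^{(\infty)}$ times the single relevant Szeg\H{o} factor at $x_*$ (present only in cases (ii) via the implicit branch structure of the $\eta$-functions, and in case (iii) via $\bar D(z;x_*,n\tilde\epsilon^\delta_{x_*})^{\sigma_3}$, which contributes $\zeta^{\beta\sigma_3}$ by property (d) of the $\bar D$ problem with $\beta = \alpha^\delta_{x_*} - n\tilde\epsilon^\delta_{x_*}$).

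Next I would treat the three cases by specialising the known local form of $\tilde P^{(\infty)}$ near a point of $\mathbb{R}$. In case (i), $x_*$ is an interior point of $\C{S}_*$, so $x_*$ lies in the interior of a branch cut of the underlying Riemann surface; the theta-function parametrix has the standard form in which, crossing the cut, one picks up the jump $\begin{pmatrix}0&1\\-1&0\end{pmatrix}$. Writing $\tilde P^{(\infty)}(x_*+\zeta) = \tilde E(\zeta) Q(\zeta)$ with $\tilde E$ analytic and non-vanishing near $\zeta=0$ (the scalar and theta-quotient ingredients are analytic across a cut interior) gives the claim after absorbing the remaining analytic/diagonal factors into $E$. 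In case (ii), $x_*$ is a right endpoint of a cut, so the parametrix exhibits the universal edge behaviour $\zeta^{-\sigma_3/4} N$ coming from the $\bigl(\tfrac{z-b_+}{z-a_+}\bigr)^{\pm 1/4}$-type factors, multiplied by a bounded analytic piece; the $e^{\pi i(\alpha^+_{x_*}-n\epsilon^+_{x_*})\theta(\zeta)\sigma_3}$ factor arises from the jump of $P^{(\infty)}$ on $\mathbb{R}\setminus\C{J}_*$ just to the right of $x_*$, i.e. from the constant values $\tilde\alpha_\Gamma = \alpha^+_{x_*}$, $\tilde\epsilon = \epsilon^+_{x_*}$ there (Definitions \ref{tilde alpha def}, \ref{tilde epsilon def}), since the corresponding scalar function $e^{R(z)^{1/2}\,\text{(Cauchy integral)}}$ has exactly that diagonal jump and behaves like a power of $\theta(\zeta)$ locally. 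In case (iii), $x_*\in\C{E}_*$ lies off $\C{S}_*$, so $\tilde P^{(\infty)}$ is analytic and non-vanishing near $x_*$; the only local contribution is the $\zeta^{\beta\sigma_3}$ from $\bar D(z;x_*,n\tilde\epsilon^\delta_{x_*})^{\sigma_3}$ together with the diagonal jump factor $e^{\pi i(\alpha^+_{x_*}-n\epsilon^+_{x_*})\theta(\zeta)\sigma_3}$ from the jump on $(x_*,\infty)$, exactly as in case (ii). In all three cases one checks consistency with property (d) of the $P^{(\infty)}$ RH problem: $\bigO(\zeta^{-1/4})$ at $\partial\C{S}_*$ and $\bigO(1)\zeta^{(\alpha^\delta_{x_*}-n\tilde\epsilon^\delta_{x_*})\sigma_3}$ at $x_*\in\C{E}_*$.

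The main obstacle is case (ii): one must carefully separate the part of the jump of $P^{(\infty)}$ near a right edge that produces the universal $\zeta^{-\sigma_3/4}N$ factor from the part — living on $\mathbb{R}\setminus\C{J}_*$ just outside the cut — that produces the diagonal exponential with $\theta(\zeta)$, and verify that the residual factor $E(\zeta)$ is genuinely analytic and invertible at $\zeta=0$ rather than merely bounded. This requires using the explicit construction of $\tilde P^{(\infty)}$ and of the Szeg\H{o} function $D$ from \cite{Vanlassen0305044}, and checking that the half-integer power structure of the Riemann surface at the branch point combines with the scalar Szeg\H{o} factor to leave exactly $\zeta^{-\sigma_3/4}N\, e^{\pi i(\alpha^+_{x_*}-n\epsilon^+_{x_*})\theta(\zeta)\sigma_3}$ up to an analytic left factor. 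The interior and exterior cases are comparatively routine once the edge case is handled.
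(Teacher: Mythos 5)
Your proposal is correct in spirit but takes a genuinely different route from the paper. You work \emph{bottom-up}, unwinding the explicit construction $P^{(\infty)} = C_n\, \tilde P^{(\infty)} \prod \hat D^{\sigma_3} \prod \bar D^{\sigma_3}$ and invoking the known local behaviour of the theta-function parametrix $\tilde P^{(\infty)}$ and the Szeg\H{o} factors near a cut interior, a cut edge, and a gap point. The paper works \emph{top-down}: in each case it simply sets $E(z) := P^{(\infty)}(z)\cdot(\text{claimed local factor})^{-1}$, checks directly from the jump conditions in the $P^{(\infty)}$ RH problem that $E$ has no jumps in a punctured neighbourhood of $x_*$, and then invokes property (d) of the $P^{(\infty)}$ problem (boundedness, $\bigO(\zeta^{-1/4})$, or $\bigO(1)\zeta^{\beta\sigma_3}$ respectively) to conclude that the isolated singularity at $x_*$ is removable, so $E$ is analytic. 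This sidesteps the obstacle you flagged in case (ii): rather than disentangling which part of the factorization produces $\zeta^{-\sigma_3/4}N$ and which produces $e^{\pi i(\alpha^+_{x_*}-n\epsilon^+_{x_*})\theta(\zeta)\sigma_3}$, one simply verifies the single compound product $\zeta^{-\sigma_3/4}N\,e^{\pi i a\theta(\zeta)\sigma_3}$ has jump $\left(\begin{smallmatrix}0&1\\-1&0\end{smallmatrix}\right)$ on the cut side (the two $\theta$-exponentials conjugate through the off-diagonal matrix and cancel) and $e^{2\pi i a\sigma_3}$ on the gap side, matching the jumps of $P^{(\infty)}$ in $D_{x_*}$. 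Your route would work but requires more: in particular, the steps ``$\tilde P^{(\infty)}(x_*+\zeta)=\tilde E(\zeta)Q(\zeta)$ with $\tilde E$ analytic'' and the separation in the edge case are exactly what needs proving, and to establish them you would essentially be rerunning the jump-plus-removable-singularity argument on $\tilde P^{(\infty)}$. Your approach buys explicitness (one sees where $E$ comes from); the paper's buys brevity and avoids any appeal to the explicit theta-function formula beyond its RH-problem characterisation.

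One small imprecision in your sketch: in case (ii) you attribute the $e^{\pi i(\alpha^+_{x_*}-n\epsilon^+_{x_*})\theta(\zeta)\sigma_3}$ factor entirely to the jump on $(x_*,\infty)$ and the $\zeta^{-\sigma_3/4}N$ factor entirely to the cut. In fact the $\theta$-exponential has a nontrivial jump on both sides of $x_*$; it is only the conjugation by the off-diagonal jump matrix produced by $\zeta^{-\sigma_3/4}N$ that makes its contribution on the cut side cancel. The two pieces cannot be cleanly separated, which is another reason the paper verifies the combined product in one stroke rather than factor by factor.
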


\begin{proof}
We consider each case in turn.
\begin{itemize}
\item[(i)] If $x_* \in \C{S}_*$ then $E(z) := P^{(\infty)}(z) Q(z)^{-1}$ has no jump in a neighbourhood of $x_*$. Furthermore $E(z)$ is bounded at $x_*$ and therefore we conclude $E$ is an analytic function in this neighbourhood. Redefining $E$ by shifting the argument we obtain the result (i).
\item[(ii)] If $x_* \in \partial\C{S}_*$ with $x_*$ forming the right edge of an interval in $\partial\C{S}_*$, then,
\beq
E(z) := P^{(\infty)}(z)  e^{-\pi i (\alpha^+_{x_*} - n \epsilon^+_{x_*})\theta(\zeta) \sigma_3} N^{-1}(z-x_*)^\frac{\sigma_3}{4}
\eeq has no jumps in a neighbourhood $M$ of $x_*$ and therefore $E$ is analytic in $M \setminus \{x_*\}$. Furthermore, given that the behaviour of $P^{(\infty)}(z)$ is $\bigO((z-x_*)^{-\frac{1}{4}})$ near $x_*$ we see that $E$ can be analytically continued to all of $M$. Redefining $E$ by shifting the argument we obtain the result (ii).
\item[(iii)] If $x_* \in \C{E}_*$ then, 
\beq
E(z) := P^{(\infty)}(z) e^{-\pi i(\alpha^+_{x_*} - n\tilde{\epsilon}^+_{x_*})\theta(z)\sigma_3}(z-x_*)^{-\beta \sigma_3},
\eeq
where $\beta$ is defined in the statement of the theorem, has no jumps in a fixed neighbourhood of $x_*$. Given the behaviour of $P^{(\infty)}(z)$ as $z \rightarrow x_*$, \eqref{z to xstar}, we see that $E$ is analytic at $x_*$. Redefining $E$ by shifting the argument we obtain the result (iii).
\end{itemize}
\end{proof}

\begin{remark}
For a left edge one finds,
\beq
P^{(\infty)}(x_* + \zeta) =  E(\zeta) \sigma_3 (-\zeta)^{-\frac{\sigma_3}{4}}N \sigma_3  e^{\pi i (\alpha^-_{x_*} - n \epsilon^-_{x_*})\theta(\zeta) \sigma_3}.
\eeq
\end{remark}

\begin{lemma}
\label{xi asymptotics}

Consider $\xi(z)$ on the boundary of a disc $D_{x_*}$. The points on the boundary of the disc may be parameterised by letting $z = x_* + \zeta$ with $|\zeta|$ fixed. Let $E_l$ for $l \in \mathbb{Z}$ be functions analytic in a neighbourhood of $\zeta = 0$ which behave as $\bigO(1)$ as $n\rightarrow \infty$ uniformly for $\zeta$ in a neighbourhood of zero. We then have,
\begin{itemize}
\item[(i)] If $x_* \in \C{S}_*$ is an interior point of order $k$, then as $n \rightarrow \infty$,
\beq
\label{2 cut asymp}
n \xi(x_* + \zeta) = n\xi_+(x_*)\theta(\zeta) - \frac{i}{2k+1}\sum^\infty_{l = 0} (n^{\Delta_{x_*}} \zeta)^{2k+1-l} E_{2k+1-l}(\zeta) \theta(\zeta),
\eeq
where, for sufficiently large $n$, $E_{2k+1}(0) > 0$ and $E_{2k+1-l}(0)$ are real.
\item[(ii)] If $x_* \in \partial\C{S}_*$ is an edge point of order $k$, with $x_*$ forming the right edge of an interval in $\partial\C{S}_*$, then as $n \rightarrow \infty$,
\beq
\label{1 cut asymp}
n \xi(x_* +\zeta) = \pi i n \theta(\zeta) \epsilon^+_{x_*} - \frac{2}{2k+3}\sum^\infty_{l = 0} (n^{\Delta_{x_*}} \zeta)^{k+\frac{3}{2}-l} E_{k+1-l}(\zeta),
\eeq
where, for sufficiently large $n$, $E_{k+1-l}(0)$ are real and $E_{k+1}(0) > 0$, if $k \in 2\mathbb{N}^0$ and $E_{k+1}(0) < 0$ if $k = -1$.
\item[(iii)] If $x_* \in \C{E}_*$ is an exterior point of order $k$, then as $n \rightarrow \infty$,
\beq
\label{0 cut asymp}
n \xi(x_* +\zeta) = \pi i n \theta(\zeta) \epsilon^+_{x_*} + n \epsilon^\delta_{x_*} \log\left(n^{\Delta_{x_*}} \zeta\right) - \frac{1}{2k}\sum^\infty_{l = 0} (n^{\Delta_{x_*}} \zeta)^{2k-l} E_{2k-l}(\zeta),
\eeq
where for sufficiently large $n$, $E_{2k-l}(0)$ are real.
\end{itemize}
\end{lemma}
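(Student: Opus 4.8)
The plan is to work from the integral representation $\xi(z)=-\tfrac12\int_{p}^{z}y(s)\,ds$ in \eqref{xi}, the jump relations \eqref{xi1}--\eqref{xi3}, the local classification of $x_*$ in Remark~\ref{point types}, and the scaling hypotheses in Assumptions~\ref{assumption: scale appropriately}; the three cases are handled in parallel, only the reference point fixing the additive constant of $\xi$ and the branch factor extracted from $y$ changing. First I would fix $z=x_*+\zeta$ on $\partial D_{x_*}$ and write $\xi(x_*+\zeta)=\xi(x')-\tfrac12\int_{x'}^{x_*+\zeta}y(s)\,ds$, where the boundary value of $\xi$ at $x'$ is taken on the side of $\mathbb R$ containing $\zeta$ and $x'$ is the endpoint of $\C{S}_*$ adjacent to $x_*$ on the band side (or $x_*$ itself in the interior case). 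On a band $\xi_++\xi_-=0$, which yields the constant $n\theta(\zeta)\xi_+(x_*)$ in the interior case and, combined with $\xi_+-\xi_-=2\pi i\epsilon^+_{x_*}$ on the adjacent gap coming from \eqref{xi3}, gives $\xi_+(x_*)=\pi i\epsilon^+_{x_*}$ at a right edge. At an exterior point the filling fractions on the two sides of $x_*$ may differ by $\epsilon^\delta_{x_*}$; the unique function with the prescribed constant jumps $2\pi i\epsilon^\pm_{x_*}$ on the two half-gaps differs from an analytic one by $\epsilon^\delta_{x_*}\log(z-x_*)$ together with a constant $\pm\pi i\epsilon^+_{x_*}$, and the rescaling below converts these into the terms $n\epsilon^\delta_{x_*}\log(n^{\Delta_{x_*}}\zeta)$ and $\pi i n\theta(\zeta)\epsilon^+_{x_*}$ of the statement.

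For the residual integral, inside $D_{x_*}$ I would factor $y(x_*+\zeta)=w(\zeta)\,B_n(\zeta)$, with $w(\zeta)=\theta(\zeta)$ (interior, using that $\theta\sqrt{R}$ is analytic across $\C{S}_*$), $w(\zeta)=(\zeta-\zeta_{\mathrm{br}})^{1/2}$ with $\zeta_{\mathrm{br}}=\bigO(n^{-\Delta_{x_*}})$ the colliding branch point (edge), or $w(\zeta)=1$ (exterior), and $B_n$ analytic in $D_{x_*}$: it is the product of the $\bigO(n^{-\Delta_{x_*}})$-close zeros of $h$ and colliding zeros/poles of $R$ times the slowly varying, uniformly $\bigO(1)$ factor from the far ones. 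Substituting $s=x_*+n^{-\Delta_{x_*}}u$ and using $(2k+1)\Delta_{x_*}=1$, $\tfrac12(2k+3)\Delta_{x_*}=1$, $2k\Delta_{x_*}=1$ respectively, the residual $-\tfrac n2\int_{x'}^{x_*+\zeta}y\,ds$ becomes, after absorbing an $n$-dependent $\bigO(1)$ constant and up to the $\theta(\zeta)$ factor, $-\tfrac12\int_{\hat p}^{\,n^{\Delta_{x_*}}\zeta}\hat y_n(u)\,du$, i.e.\ the finite-$n$ version $\hat\xi_n$ of the function $\hat\xi$ of Definition~\ref{scaling y def} evaluated at the large argument $n^{\Delta_{x_*}}\zeta$ (whose leading large-argument structure is exactly the one recorded in Definition~\ref{model problem at xs}). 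Expanding $\hat y_n(u)$ for large $u$ as $u^{d-1}\bigl(1+\bigO(1/u)\bigr)$ times the slowly varying factor written as a power series in $n^{-\Delta_{x_*}}u$ — here $d=2k+1,\ k+\tfrac32,\ 2k$ — and integrating term by term regroups everything into sums of the stated form $\sum_{l\ge 0}(n^{\Delta_{x_*}}\zeta)^{\,d-l}E_{d-l}(\zeta)$, where each $E_j(\zeta)$ is the resulting power series in $\zeta$, convergent hence analytic at $0$, and uniformly $\bigO(1)$ because the elementary symmetric functions of the rescaled colliding points and the far factor are $\bigO(1)$ by Assumptions~\ref{assumption: scale appropriately}. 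Letting $n\to\infty$ with $\zeta$ fixed identifies the $E_j(0)$ with the constants $E_j$ of Definition~\ref{model problem at xs}, which is what the local-parametrix matching needs.

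It remains to establish the reality and sign statements. Reality of $E_j(0)$ for the relevant indices follows since $y$ is real on $\mathbb R$ once the branch factor $w$ is divided out, so the Taylor coefficients of $B_n$ are real up to the power of $i$ absorbed in the interior prefactor. For the sign of the top coefficient I would use the variational relations: near an interior point or a soft or singular edge point, $\rho=\tfrac1{2\pi i}y_+$ is nonnegative on the band side and vanishes to exactly the order recorded in Remark~\ref{point types}, so the leading term of $\hat y$ is a strictly positive multiple of $2\pi i$, giving $E_{2k+1}(0)>0$ in the interior case and $E_{k+1}(0)>0$ for an edge with $k\in2\mathbb N^0$; at a hard edge ($k=-1$), $R$ has a simple pole rather than a zero at $x_*$, reversing the relevant square-root branch and giving $E_{k+1}(0)<0$.

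I expect the main obstacle to be in the second step: showing that the expansion of $-\tfrac n2\int_{x'}^{x_*+\zeta}y\,ds$ in descending powers of $n^{\Delta_{x_*}}$ holds \emph{uniformly} for $\zeta\in\partial D_{x_*}$ with coefficients that are genuinely analytic at $\zeta=0$ and uniformly bounded, not just pointwise. This needs a clean separation of the $\bigO(n^{-\Delta_{x_*}})$-close singular points, controlled by Assumptions~\ref{assumption: scale appropriately}, from the far ones, and care that the integration contour stays inside $D_{x_*}$ so that no spurious branch point is crossed; the exterior-case logarithm additionally rests on the contour computation of the period of $\xi$ around $x_*$ sketched in the first step. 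The rest — matching normalisations, fixing the branches of $\zeta^{1/2}$ and $\log\zeta$ consistently with the principal branches of Definition~\ref{model problem at xs}, and checking the additive constants against \eqref{xi1}--\eqref{xi3} — is routine.
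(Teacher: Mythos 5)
Your proposal follows essentially the same strategy as the paper's proof. The paper likewise combines (i) the jump relations of $\xi$, which it uses to establish a priori the exact Laurent/branch structure of $\xi$ on $\partial D_{x_*}$ in each of the three cases (fixing the constant $n\theta\,\xi_+(x_*)$, resp.\ $\pi in\theta\,\epsilon^+_{x_*}$, and forcing the log in the exterior case with coefficient identified by \eqref{xi3}), with (ii) the integral representation $\xi=\xi_\pm(p')-\tfrac12\int_{p'}^z y$, in which $y$ is factored on $D_{x_*}$ into a colliding polynomial/rational part and a non-vanishing analytic factor, rescaled by $n^{-\Delta_{x_*}}$, and integrated term by term. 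Your $w\cdot B_n$ split is the paper's $h_1\sqrt{R_1}\,E$ split in slightly different bookkeeping, and your reality/sign argument from $y_+=2\pi i\rho\in i\mathbb{R}^+$ on the band and the branch of $\zeta^{k+1/2}$ is the same computation that produces $\hat q_0\in i\mathbb{R}^+$ (interior), $\hat q_0\in\mathbb{R}^+$ (soft edge), and $\hat q_0\in\mathbb{R}^-$ (hard edge) in the paper.

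One point worth making explicit, which the paper states but your sketch leaves implicit: after term-by-term integration of $Y_j$ one must rule out spurious log terms (interior) and fix the constant of integration (edge), and the paper does this precisely by appealing to uniqueness of the Laurent form established from the jump structure in step (i). You invoke the jump structure only to pin down the leading constants and the exterior log, so to close the argument you would need to add that same cross-check that the integral expansion cannot produce terms outside the allowed Laurent structure. Aside from that, the two proofs are the same in substance.
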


\begin{proof}

The proof will proceed by establishing the existence of a Laurent series representation for $\xi$ on $\partial D_{x_*}$. We will do this in two ways; firstly using the jump properties of $\xi$ and then by using \eqref{xi}. Both methods give us some, but not all, of the information required and we obtain the result by combining these properties using the uniqueness of the Laurent series. We now establish the basic form that the asymptotics of $\xi$ must take in each of the following cases,

\begin{itemize}
\item[(i)] $x_*$ is an interior point. Define $\hat{\xi}(z) := \xi(z)\theta(z)$ for $z \in D_{x_*}$. Note that for sufficiently large $n$, $\hat{\xi}$ has no jumps that intersect the boundary of the disc which together with Assumption \ref{assumption: scale appropriately} implies there exists some fixed annulus centered at $x_*$ in which $\hat{\xi}$ is analytic. Choosing $\partial D_{x_*}$ to lie within this annulus we have that on $\partial D_{x_*}$, $\hat{\xi}$ can be written as a convergent Laurent expansion centered at $x_*$. This gives,
\beq
\label{2 cut laurent}
\xi(x_* + \zeta) = \theta(\zeta) \sum^{\infty}_{j=-\infty} \xi_j \zeta^j
\eeq
where $\xi_j$ are $\zeta$-independent constants. We have yet to specify the asymptotic behaviour of $\xi_j$ as $n \rightarrow \infty$ but \eqref{2 cut laurent} implies that the large $n$ asymptotics of $\xi(x_* + \zeta)$ must have this functional dependence on $\zeta$.

\item[(ii)] $x_*$ is a right edge point. Define $\hat{\xi}(z) := \xi(z) - \pi i \epsilon^+_{x_*} \theta(z)$; this removes one of the jumps that extends to $\partial D_{x_*}$. The other jump extending to $\partial D_{x_*}$ can be removed by defining $\hat{\xi}_1(z) := \hat{\xi}(z) (z-x_*)^{-1/2}$ where the principal branch is taken with a cut on $\mathbb{R}^-$. Again, because $\partial D_{x_*}$ lies in the annulus in which $\hat{\xi}_1$ is analytic we conclude that,
\beq
\label{1 cut laurent}
\xi(x_* + \zeta) = \pi i \epsilon^+_{x_*} \theta(\zeta) + \zeta^\frac{1}{2}\sum^{\infty}_{j=-\infty} \xi_j \zeta^j.
\eeq
\item[(iii)] $x_*$ is an exterior point. Define $\hat{\xi}(z) := \xi(z) - \pi i \epsilon^+_{x_*} \theta(z)$; this removes one of the jumps that extends to $\partial D_{x_*}$. The other jump extending to $\partial D_{x_*}$ can be removed by defining $\hat{\xi}_1(z) := \hat{\xi}(z) - (\epsilon^-_{x_*} - \epsilon^+_{x_*})\log (z-x_*)$. As before we can then write,
\beq
\label{0 cut laurent}
\xi(x_* + \zeta) = \pi i \epsilon^+_{x_*} \theta(\zeta) + (\epsilon^-_{x_*} - \epsilon^+_{x_*})\log \zeta + \sum^{\infty}_{j=-\infty} \xi_j \zeta^j.
\eeq
\end{itemize}

We note at this point that we are able to rearrange \eqref{2 cut laurent}, \eqref{1 cut laurent} and \eqref{0 cut laurent} into the form \eqref{2 cut asymp}, \eqref{1 cut asymp} and \eqref{0 cut asymp} respectively. What is missing is the $n$ dependence of the coefficients in the Laurent expansion and the properties of $E_l$. To obtain these details we now use \eqref{xi}. First we write it as,
\beq
\xi(z) = \xi_\pm(p')  - \frac{1}{2} \int_{p'}^z y(s) ds, \qquad \pm \Im z > 0,
\eeq
where the integration contour does not intersect $\mathbb{R}$ and $p' = \sup(\C{R}|_{x_*})$, i.e. the right most interval edge in the disc, if $\sup(\C{R}|_{x_*})$ exists, otherwise we set $p' = x_*$. Note that when $x_*$ is of right edge or exterior type we have $\xi_\pm(p') = \pm\pi i\epsilon^+_{x_*}$, so we may write,
\beq
\xi(z) = \pi i \theta(z) \epsilon^+_{x_*} - \frac{1}{2} \int_{p'}^z y(s) ds,
\eeq
whereas when $x_*$ is an interior point we have, $\xi_\pm(p') = \pm \xi_+(p')$ so we may write,
\beq
\xi(z) = \theta(z) \xi_+(p') - \frac{1}{2} \int_{p'}^z y(s) ds.
\eeq
Now note that $y$ has a jump relation,
\beq
y_+(x) + y_-(x) = 0, \qquad x \in \C{S}.
\eeq
By mimicking the arguments used previously for $\xi$ we may delete the jumps inside $D_{x_*}$ and write,
\beq
\label{y h1 R1}
y(z) = h_1(z) \sqrt{R_1(z)} E(z) \times \begin{cases} 1, & \mbox{$x_*$ is a exterior or edge point,} \\ \theta(z),& \mbox{$x_*$ is an interior point.}  \end{cases}
\eeq
Here $h_1$ is a polynomial and $R_1$ is a rational function with only simple poles, with both functions only having poles and roots in $D_{x_*}$. The function $E(z)$ is analytic in a fixed neighbourhood of $z = x_*$ with no zeros in $D_{x_*}$. 

\begin{remark}
\label{norm rem}
We note that the factors $h_1$, $R_1$ and $E$ are only defined up to overall constants and we have some freedom how we normalise each factor. We choose $E(x_*) = 1$.
\end{remark}

In light of Assumption \ref{assumption: scale appropriately}, we have that if $h_1$ and $R_1$ have $m_h$ and $m_R$ zeros respectively and $R_1$ has $m_p$ poles we can write,
\begin{align}
h_1(x_* + n^{-\Delta_{x_*}} \sigma) &= n^{-m_h \Delta_{x_*}} \hat{h}(\sigma) \\
R_1(x_* + n^{-\Delta_{x_*}} \sigma) &= n^{-(m_R-m_p) \Delta_{x_*}} \hat{R}(\sigma)
\end{align}
where $\hat{R}(z) = \hat{r}_1(z)/\hat{r}_2(z)$ and $\hat{r}_1$, $\hat{r}_2$ and $\hat{h}$ are polynomials in which each root behaves as $\bigO(1)$ as $n\rightarrow \infty$. Note that since the behaviour of $\lim_{n \rightarrow \infty} y(z)$ near $x_*$ takes the form in Remark \ref{point types} we must have $(m_h+\frac{m_R-m_p}{2}+1)\Delta_{x_*} = 1$.

This all leads us to define,
\beq
\hat{y}_1(z) := \hat{h}(z) \sqrt{\hat{R}(z)},
\eeq
which we note is related to \eqref{scaling y} by,
\begin{align}
\lim_{n\rightarrow \infty} &n^{1 -\Delta_{x_*}}y(x_* + n^{-\Delta_{x_*}} \zeta) \times \begin{cases} 1, & \mbox{$x_*$ is a exterior or edge point,} \\ \theta(\zeta),& \mbox{$x_*$ is an interior point.}  \end{cases} \nn \\
&= \lim_{n \rightarrow \infty} \hat{y}_1(\zeta) = \hat{y}(\zeta).
\end{align}

Define $\hat{p} = n^{\Delta_{x_*}}(p'-x_*)$ and note that this also behaves as $\bigO(1)$ as $n\rightarrow \infty$. Using the taylor series for $E(z)$ about $x_*$ we then have,
\beq
\label{xi integral rep}
\xi(x_* + \zeta) = \xi_\pm(p') - \frac{1}{2}\sum^\infty_{j=0} e_j n^{-j \Delta_{x_*}} \int_{\hat{p}}^{n^{\Delta_{x_*}}\zeta} \hat{y}_1(\sigma)\sigma^j d\sigma \times \begin{cases} 1, & \mbox{$x_*$ is a exterior or edge point,} \\ \theta(\zeta),& \mbox{$x_*$ is an interior point.}  \end{cases},
\eeq
where the coefficients $e_j$ of the taylor series of $E$ have $\bigO(1)$ behaviour as $n \rightarrow \infty$.

To compute large $n$ asymptotics of the expression \eqref{xi integral rep} we need large $\zeta$ asymptotics for the function,
\beq
Y_j(\zeta) := -\frac{1}{2} \int_{\hat{p}}^{\zeta} \hat{y}_1(\sigma)\sigma^j d\sigma.
\eeq
Such asymptotics for $Y_j$ can be obtained simply by integrating the large $\zeta$ behaviour of $\hat{y}_1$ term-by-term and adding an arbitrary constant; this is valid because $\hat{y}_1(\zeta)$ can be written as a convergent series for large $\zeta$. 

We now again consider each type of critical point in turn.
\begin{itemize}
\item[(i)] When $x_*$ is an interior point, we have as $\zeta \rightarrow \infty$,
\beq
\label{int y large z} 
\hat{y}_1(\zeta) = \zeta^{2k}\sum_{l=0}^\infty \hat{q}_l \zeta^{-l},
\eeq
for $k \in \mathbb{N}^0$, where $\hat{q}_l$ are constants. We therefore have as $\zeta \rightarrow \infty$,
\begin{align}
Y_j(\zeta) = -\frac{i}{2k+1} \zeta^{2k+1+j} \sum_{l=0}^\infty q_{j,l} \zeta^{-l},
\end{align}
where $q_{j,l} := (2k+1)/(2i(2k+j-l+1)) \hat{q}_l$ are constants. Note that the above expression could have in principle contained a logarithm, however this would have been inconsistent with \eqref{2 cut laurent}. Using the above expression we obtain,
\begin{align}
n \xi(&x_*  + \zeta) =\\
&n \xi_+(x_*) \theta(\zeta)-\frac{i}{2k+1}\theta(\zeta) \sum_{l=0}^\infty (n^{\Delta_{x_*}} \zeta)^{2k+1-l} \sum^\infty_{j=0} e_j q_{j,l} \zeta^{j}\nn,
\end{align}
where we have changed $\xi_+(p')$ to $\xi_+(x_*)$ by shifting the value of $q_{2k+1,0}$. Define $E_{2k+1-l}(\zeta) := \sum^\infty_{j=0}e_j q_{j,l} \zeta^{j}$ and note that, using the expression for $q_{j,l}$, it is absolutely convergent in a neighbourhood of zero and therefore analytic there.

\item[(ii)] When $x_*$ is a right edge point, we have as $\zeta \rightarrow \infty$,
\beq
\label{edge y large z}
\hat{y}_1(\zeta) = \zeta^{k+\frac{1}{2}}\sum_{l=0}^\infty \hat{q}_l \zeta^{-l}
\eeq
where $k\in \{-1\}\cup 2\mathbb{N}^0$. We therefore have as $\zeta \rightarrow \infty$,
\begin{align}
Y_j(\zeta) = -\frac{2}{2k+3} \zeta^{k+\frac{3}{2}+j} \sum_{l=0}^\infty q_{j,l} \zeta^{-l},
\end{align}
where $q_{j,l}:=(2k+3)/(2(2k+2j-2l+3))\hat{q}_l$ are constants. Note that the above expression could have in principle contained a constant term, however this would have been inconsistent with \eqref{1 cut laurent}. Using the above expression we obtain,
\begin{align}
n \xi(&x_*  + \zeta) =\\
&n \pi i \epsilon^+_{x_*} \theta(\zeta)-\frac{2}{2k+3} \sum_{l=0}^\infty (n^{\Delta_{x_*}} \zeta)^{k+\frac{3}{2}-l} \sum^\infty_{j=0} e_j q_{j,l} \zeta^{j}\nn.
\end{align}
Define $E_{k+1-l}(\zeta) := \sum^\infty_{j=0} e_j q_{j,l} \zeta^{j}$ and note that again it defines an analytic function at zero.

\item[(iii)] When $x_*$ is an exterior point, we have as $\zeta \rightarrow \infty$,
\beq
\label{ext y large z}
\hat{y}_1(\zeta) = \zeta^{2k-1} \sum_{l=0}^\infty \hat{q}_l \zeta^{-l}
\eeq
for $k \in \mathbb{N}$. We therefore have as $\zeta \rightarrow \infty$
\begin{align}
Y_j(\zeta) =-\frac{1}{2}\hat{q}_{2k+j} \log(\zeta)- \frac{1}{2k} \zeta^{2k+j} \sum_{l=0}^\infty q_{j,l} \zeta^{-l},
\end{align} 
where $q_{j,l} = 2k/(2k+j-l)\hat{q}_l$ for $l\neq 2k+j$ and $q_{j,2k+j}$ are constants. Using the above expression we obtain,
\begin{align}
n \xi(&x_*  + \zeta) =\\
&n \pi i \epsilon^+_{x_*} \theta(\zeta) - \log(n^{\Delta_{x_*}}\zeta) \sum^\infty_{j=0} \frac{e_j \hat{q}_{2k+j}}{2n^{j\Delta_{x_*}}} -\frac{1}{2k} \sum_{l=0}^\infty (n^{\Delta_{x_*}} \zeta)^{2k-l} \sum^\infty_{j=0} e_j q_{j,l} \zeta^{j}\nn.
\end{align}
Define $E_{2k-l}(\zeta) := \sum^\infty_{j=0} e_j q_{j,l} \zeta^{j}$ which we note again is analytic at zero. Finally, defining $c := -\sum^\infty_{j=0} \frac{1}{2}e_j \hat{q}_{2k+j} n^{-j\Delta_{x_*}}$ we see that by requiring consistency with \eqref{0 cut laurent} we have $c  = n \epsilon^\delta_{x_*}$. Let us also remark that using the definition of $\epsilon$ together with $(m_h+\frac{m_R-m_p}{2}+1)\Delta_{x_*} = 1$ we see that $n\epsilon^\delta_{x_*} = \bigO(1)$ as $n \rightarrow \infty$. 
\end{itemize}

To make statements concerning the properties of $E_l$ near zero recall that by \eqref{rho h R} we have,
\begin{align}
\label{y to rho}
&y_+(x) \in i \mathbb{R}^+, \qquad \mbox{if $x\in \C{S}$,}\\
&y(x) \in \mathbb{R}, \qquad \mbox{if $x\notin \C{S}$.} \nn
\end{align}
Note that the above implies $h_1(z)$, $R_1(z)$ and $E(z)$ are real functions for $z\in D_{x_*}$. Consider,
\beq
y(x_*+\zeta) = n^{\Delta_{x_*} -1} \hat{y}_1(n^{\Delta_{x_*}} \zeta) E(x_* + \zeta)
\eeq
for $x_* + \zeta$ in a compact subset $M$ of $D_{x_*}\setminus\{x_*\}$. Note that $E(x_*) = 1$ implies, for sufficiently small $D_{x_*}$, that $E(x_* + \zeta) \in \mathbb{R}^+$ for $\zeta \in \mathbb{R} \cap M$, hence we find for sufficiently large $n$ that,
\begin{align}
\label{y to rho 2}
&\hat{y}_{1+}(n^{\Delta_{x_*}} \zeta) \in i \mathbb{R}^+, \qquad \mbox{if $x_* +\zeta\in \C{S}$,}\\
&\hat{y}_{1}(n^{\Delta_{x_*}} \zeta) \in \mathbb{R}, \qquad \mbox{if $x_* +\zeta\notin \C{S}$.} \nn
\end{align}
Letting $\zeta \in M \cap \mathbb{R}^-$ and using the large $\zeta$ expressions for $\hat{y}_1$ in each case, gives,
\begin{itemize}
\item(i) when $x_*$ is an interior point, 
\beq
\zeta^{2k}\sum_{l=0}^\infty \hat{q}_l (n^{\Delta_{x_*}} \zeta)^{-l} \in i \mathbb{R}^+,
\eeq
which implies $\hat{q}_0 \in i \mathbb{R}^+$ and $\hat{q}_l \in i \mathbb{R}$. 
\item(ii) when $x_*$ is a right edge point, 
\beq
e^{\pi i (k+\frac{1}{2})}|\zeta|^{k+\frac{1}{2}}\sum_{l=0}^\infty \hat{q}_l (n^{\Delta_{x_*}} \zeta)^{-l} \in i \mathbb{R}^+,
\eeq
which implies for $k \in 2 \mathbb{N}^0$ that $\hat{q}_0 \in \mathbb{R}^+$ and $\hat{q}_l \in \mathbb{R}$, whereas for $k = -1$ it implies $\hat{q}_0 \in \mathbb{R}^-$ and $\hat{q}_l \in \mathbb{R}$.
\item(iii) when $x_*$ is an exterior point, 
\beq
e^{\pi i (2k + 1)}|\zeta|^{2 k+1}\sum_{l=0}^\infty \hat{q}_l (n^{\Delta_{x_*}} \zeta)^{-l} \in \mathbb{R},
\eeq
which implies $\hat{q}_l \in \mathbb{R}$.
\end{itemize}
Finally, using the definition of $E_l$ in terms of $\hat{q}_l$ and combining everything gives \eqref{0 cut asymp}.
\end{proof}

\subsection{Construction of $P^{(x_*)}$}

\begin{lemma}
The local parametrix can be written in the form,
\beq \label{Px*}
P^{(x_*)}(z) = \hat{E}(z-x_*) n^{\phi \sigma_3} \widehat{\Phi}(f(z)|I, B, \{\vec{\tau}_b\}, \vec{\tau}_\infty) e^{\pi i \sigma_3 \theta(z)\alpha^+_{x_*} },
\eeq
where $\widehat{\Phi}$ is a canonical model problem with open lenses, $\hat{E}$ is an analytic function at zero, $f$ is defined in Definition \ref{f def} and $\phi \in \mathbb{R}$. The value of the $n$-dependent parameters $I$, $B$, $ \{\vec{\tau}_b\}$ and $\vec{\tau}_\infty$ are given in the proof below.
\end{lemma}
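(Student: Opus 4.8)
The plan is to \emph{construct} $P^{(x_*)}$ by the formula \eqref{Px*} with a suitable analytic prefactor $\hat E$, a real exponent $\phi$, and the model data $I$, $B$, $\{\vec\tau_b\}$, $\vec\tau_\infty$ as prescribed in Definition \ref{model problem at xs}, and then to check that it solves the RH problem for $P^{(x_*)}$. First I would observe that the change of variable $w=f(z)=n^{\Delta_{x_*}}(z-x_*)$ (Definition \ref{f def}) carries the contours inside $D_{x_*}$, namely the parts of $\C{I}|_{x_*}$, the lens rays $\Sigma^\gamma$, and the $\Gamma_b$ with $b\in\C{B}|_{x_*}$, onto the model contours $I=\lim f(\C{I}|_{x_*})$, $\Sigma^L_\pm$ and $\bigcup_{b\in B}\Gamma_b$. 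Since the open-lens model problem $\widehat\Phi$ has jumps that are \emph{constant} and depend only on $\hat\alpha_\Gamma$, and since we identify $\hat\alpha_b=\alpha_b$ (hence $\hat\alpha_\Gamma=\alpha_\Gamma$), the jump relation $j_{P^{(x_*)}}=j_T$ reduces to a pointwise identity between constant matrices: the left factor $\hat E$ is analytic and so jump-free, $n^{\phi\sigma_3}$ is $z$-independent, and the right factor $e^{\pi i\sigma_3\theta(z)\alpha^+_{x_*}}$ contributes a jump only across $\mathbb R\cap D_{x_*}$ which is exactly what is needed to turn the model jump into the jump of $T$ on $\C{I}$ and on the $\Sigma^\gamma$ rays. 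Comparing the three possibilities (interior/edge/exterior) of part (b) of the $\widehat\Phi$ RH problem with the corresponding jumps $j_T$ confirms the match; that $\vec\tau_\infty$ is admissible for Theorem \ref{existence thm} (so that $\widehat\Phi$ exists) will follow from the sign properties $E_{2k+1}(0)>0$, $E_{k+1}(0)\gtrless 0$ established in Lemma \ref{xi asymptotics}.

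\textbf{Determining $\hat E$ and $\phi$.} The content is in the normalisation condition (d) of the local-parametrix RH problem, $P^{(x_*)}(z)=(I+o(1))P^{(\infty)}(z)e^{n\xi(z)\sigma_3}$ on $\partial D_{x_*}$. Solving \eqref{Px*} for the prefactor gives
\[
\hat E(z-x_*)=(I+o(1))\,P^{(\infty)}(z)\,e^{n\xi(z)\sigma_3}\,e^{-\pi i\sigma_3\theta(z)\alpha^+_{x_*}}\,\widehat\Phi(f(z))^{-1}\,n^{-\phi\sigma_3},
\]
and one must show the right-hand side extends analytically and invertibly to $\zeta=z-x_*=0$ and stays $\bigO(1)$ with $\bigO(1)$ inverse as $n\to\infty$. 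For this I would substitute: (a) the large-$w$ asymptotics of $\widehat\Phi$ from part (c) of its RH problem, which produce the prefactors $w^{-\sigma_3/4}N$ (edge), $w^{(\hat\alpha_{\mathrm{tot}}+\hat c)\sigma_3}$ (exterior), $Q(w)$ (interior) together with $e^{\mp\frac1{\ast}\hat P(w)\sigma_3}$; (b) the local form of $P^{(\infty)}$ near $x_*$ from Lemma \ref{Pinf asymptotics}, supplying $\zeta^{-\sigma_3/4}N e^{\pi i(\alpha^+_{x_*}-n\epsilon^+_{x_*})\theta(\zeta)\sigma_3}$, $Q(\zeta)$, or $e^{\pi i(\alpha^+_{x_*}-n\epsilon^+_{x_*})\theta(\zeta)\sigma_3}\zeta^{\beta\sigma_3}$; and (c) the local expansion of $n\xi$ near $x_*$ from Lemma \ref{xi asymptotics}. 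In each of the three cases the fractional powers ($\zeta^{\pm1/4}$, $\zeta^{\pm1/2}$), the $\log\zeta$ and $\zeta^{\beta\sigma_3}$ pieces, the matrices $N$ and $Q$, the $\theta(\zeta)$-dependent conjugations, and the essential singularities all cancel between the three inputs; the leftover overall scalings $n^{\Delta_{x_*}\cdot(\mathrm{power})}$ are collected into $n^{\phi\sigma_3}$, which \emph{defines} $\phi$ as a $\Delta_{x_*}$-weighted sum. Crucially, the prescription $\tau_{\infty,j}=E_j$ (with the $\tau_{\infty,0}$-shift involving $n i\xi_+(x_*)$ and $\pi\alpha^+_{x_*}$) in Definition \ref{model problem at xs} is exactly what makes $e^{n\xi(z)\sigma_3}$ cancel the $\widehat\Phi$-exponential up to $\bigO(\zeta^{-1})$: here one uses $\hat\xi=-\tfrac12\int_{\hat p}^\zeta\hat y$ and $\hat y=\lim n^{1-\Delta_{x_*}}y(x_*+n^{-\Delta_{x_*}}\zeta)(\cdots)$ (Definitions \ref{scaling y def}), so that the $E_j$ of Definition \ref{model problem at xs} are literally the constants produced by the $E_l(\zeta)$ of Lemma \ref{xi asymptotics}, and the remainder term is $I+o(1)$ uniformly on $\partial D_{x_*}$.

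\textbf{The singular points $b\in\C{B}|_{x_*}$ and the main obstacle.} It remains to verify condition (c) of the RH problem for $P^{(x_*)}$, i.e. matching the asymptotic series of $T$ at each $b\in\C{B}|_{x_*}$. The essential-singularity factor of $\widehat\Phi$ at $b'=f(b)$, namely $\exp[-\sum_{j=1}^{\hat d_{b'}-1}\tfrac1{2j}\tau_{b',j}^j(w-b')^{-j}\sigma_3][w-b']^{\hat\alpha_{b'}\sigma_3}$, becomes under $w-b'=n^{\Delta_{x_*}}(z-b)$ and the scaling choice $\vec\tau_b=\lim n^{\Delta_{x_*}}t_{f^{-1}(b)}$ of Definition \ref{model problem at xs} — which makes sense by Assumption \ref{assumption: scale appropriately} — precisely the singular factor of $\Psi$, hence of $T$, at $b$, up to the analytic $\hat E$ and the constant $n^{\phi\sigma_3}$; a short computation matching leading terms shows this is consistent, giving (c). The hard part will be the case-by-case bookkeeping in the second step: confirming that the $\hat E$ read off above is genuinely analytic and uniformly invertible at $x_*$, which requires tracking every branch structure ($\zeta^{1/2}$, $\zeta^{1/4}$, $\log\zeta$, $\zeta^{\beta\sigma_3}$), the $\theta$-jumps, the matrices $N$ and $Q$, and matching the polynomial exponents of $\hat P_{2k}$, $\hat P_{k+1}$, $\hat P_{2k+1}$ against those of $\hat\xi$ order by order; the $\bigO(1)$-in-$n$ control of $\hat E$ and $\hat E^{-1}$ rests entirely on the appropriate-scaling Assumption \ref{assumption: scale appropriately} and the fact that $n\epsilon^\delta_{x_*}=\bigO(1)$ noted in Lemma \ref{xi asymptotics}.
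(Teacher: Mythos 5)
Your overall strategy — construct $P^{(x_*)}$ by the formula, transport contours with $f$, identify $\hat\alpha_b=\alpha_{f^{-1}(b)}$ and $\vec\tau_b$ per Definition \ref{model problem at xs}, then use Lemmas \ref{Pinf asymptotics} and \ref{xi asymptotics} to read off $\hat E$ and $\phi$ from the boundary condition on $\partial D_{x_*}$, and finally invoke Remark \ref{admissible vec} for solvability — is the same route the paper takes, and the structure (contours / jump matrices / boundary matching / essential singularities) mirrors the paper's three labelled steps.

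There is, however, one localized inaccuracy in the jump-matching step that you should fix, because it is the whole point of the $e^{\pi i\sigma_3\theta(z)\alpha^+_{x_*}}$ factor. You write ``since we identify $\hat\alpha_b=\alpha_b$ (hence $\hat\alpha_\Gamma=\alpha_\Gamma$)'' and then in the next clause say the right factor's jump ``is exactly what is needed.'' These two claims are incompatible: if $\hat\alpha_\Gamma(f(z))$ really equalled $\alpha_\Gamma(z)$ on $\mathbb R\cap D_{x_*}$, the conjugation by $e^{\pi i\sigma_3\theta(z)\alpha^+_{x_*}}$ would introduce a \emph{spurious} shift of $\pm\alpha^+_{x_*}$ rather than a compensating one. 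The correct identity is
\beq
\hat\alpha_\Gamma(f(z)) = \alpha_\Gamma(z)-\alpha^+_{x_*}, \qquad z\in\mathbb R\cap D_{x_*},
\eeq
together with $\hat\alpha_\Gamma(f(z))=\alpha_\Gamma(z)$ on $D_{x_*}\cap\bigcup_{b\in\C B|_{x_*}}\Gamma_b\setminus\mathbb R$. The discrepancy on the real axis arises because $\hat\alpha_\Gamma$ sums only over $b\in B=f(\C B|_{x_*})$, whereas $\alpha_\Gamma(z)$ also receives a constant contribution $\alpha^+_{x_*}$ from those $b'\notin\C B|_{x_*}$ whose half-strips $\Gamma_{b'}$ cross $D_{x_*}\cap\mathbb R$. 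It is precisely this offset that the conjugation by $e^{\pi i\sigma_3\theta(z)\alpha^+_{x_*}}$ absorbs: conjugating the upper-triangular model jump on $\mathbb R$ by $e^{\pi i\alpha^+_{x_*}\sigma_3}$ on each side multiplies the diagonal entries by $e^{\pm2\pi i\alpha^+_{x_*}}$ and leaves the off-diagonal $\chi_I$ entry unchanged, so that $j_{P^{(x_*)}}=j_T$ then holds identically. With that correction your argument agrees with the paper's; the rest of your outline (reading off $\hat E$ from the $o(1)$-matching, the three interior/edge/exterior cases, the $\bigO(1)$-control of $\hat E$ and $n\epsilon^\delta_{x_*}$, and admissibility of $\vec\tau_\infty$) is sound.
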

\begin{proof}

For clarity we split the construction into a number of distinct steps.

\subsubsection*{Matching the jump contours and singularities.}
First note that if $\Phi$ has a jump on a contour $C$ then $P^{(x_*)}(z)$ will have a jump on $f^{-1}(C)$. We therefore choose $I = f(\C{I}|_{x_*})$ and the lens contours $\Sigma^\gamma_\pm$ to be such that $\Sigma^L_\pm = f(\Sigma^\gamma_\pm)$ for $z \in D_{x_*}$. Similarly $P^{(x_*)}(z)$ will have singularities at $f^{-1}(b)$ for $b \in B$. To match the behaviour of $T$ we therefore must choose $B = f(\C{B}|_{x_*})$ and $\tau_{b,j} = n^{\Delta_{x_*}} t_{f^{-1}(b),j}$ for $j>0$. This choice means that $\tau_{b,j}$ have behaviour $\bigO(1)$ as $n\rightarrow \infty$ due to Assumption \ref{assumption: scale appropriately}. Note also that this choice is consistent with $P^{(x_*)}(z)$ having jumps on $\Gamma_b$ for $b \in \C{B}|_{x_*}$ since it can be verified that $f^{-1}(\Gamma_b) = \Gamma_{f^{-1}(b)}$.

\subsubsection*{Matching the jump matrices}
We now consider the jumps of $P^{(x_*)}(z)$. By direct computation we find the jumps of \eqref{Px*} will match those of $T$ if, for $z \in D_{x_*}$, we have,
\beq
\label{alpha 1}
\hat{\alpha}_\Gamma(f(z)) = \begin{cases} \alpha_\Gamma(z) - \alpha^+_{x_*} & z \in \mathbb{R} \cap D_{x_*}, \\ 
\alpha_\Gamma(z) & z \in D_{x_*} \cap \bigcup_{b\in\C{B}|_{x_*}} \Gamma_b \setminus \mathbb{R}.  \end{cases}
\eeq
By Definition \ref{alpha def} we have that for $z \in D_{x_*}$,
\beq
\alpha_{\Gamma}(z) = \begin{cases} \alpha^+_{x_*}  +\sum_{b \in \C{B}|_{x_*} } \chi_{\Gamma_{b}}(z)\alpha_{b}, & z \in \mathbb{R} \cap  D_{x_*} \\
\sum_{b \in \C{B}|_{x_*} } \chi_{\Gamma_{b}}(z) \alpha_{b}, & z \in   D_{x_*} \cap \bigcup_{b \in \C{B}|_{x_*}} \Gamma_{b} \setminus \mathbb{R},
\end{cases}
\eeq
and hence,
\beq
\hat{\alpha}_\Gamma(z) = \begin{cases} \sum_{b \in \C{B}|_{x_*} } \chi_{\Gamma_{b}}(f^{-1}(z))\alpha_{b} & z \in \mathbb{R}, \\ 
\sum_{b \in \C{B}|_{x_*} } \chi_{\Gamma_{b}}(f^{-1}(z)) \alpha_{b} & z \in \bigcup_{b\in  \C{B}|_{x_*}} \Gamma_b \setminus \mathbb{R}.  \end{cases}
\eeq
Using the fact that $\chi_{\Gamma_{b}}(f^{-1}(z)) = \chi_{\Gamma_{f(b)}}(z)$ and changing the summation dummy variable to $b \in f(\C{B}|_{x_*})$ we find that the jumps match if we choose $\hat{\alpha}_b = \alpha_{f^{-1}(b)}$ for $b \in B$. 

We now have that \eqref{Px*} fulfils conditions (a), (b) and (c) of the RH problem for $P^{(x_*)}$.

\subsubsection*{Matching the boundary conditions on $\partial D_{x_*}$}

We now turn our attention to condition (d) of the RH problem for $P^{(x_*)}$. We accomplish this by use of lemmas \ref{Pinf asymptotics} and \ref{xi asymptotics}. We now consider each case in turn:
\begin{itemize}
\item[(i)] If $x_* \in \C{S}_*$, i.e. an interior point, then for $z = x_* + \zeta \in \partial D_{x_*}$ we have,
\beq
P^{(\infty)}(z) e^{n \xi(z) \sigma_3} = E(\zeta) Q(\zeta) e^{ n\xi_+(x_*)\theta(\zeta)\sigma_3 - \frac{i}{2k+1}\sum^\infty_{l = 0} (n^{\Delta_{x_*}} \zeta)^{2k+1-l} E_l(\zeta) \theta(\zeta) \sigma_3}.
\eeq
Taking $P^{(x_*)}$ to be of the form \eqref{Px*} we have,
\begin{align}
&P^{(x_*)}(x_*+\zeta) e^{-n \xi(x_* + \zeta) \sigma_3} P^{(\infty)}(z)^{-1}= \hat{E}(z) n^{\phi \sigma_3}\times \\
& \left(I+\bigO(n^{-\Delta_{x_*}} \zeta^{-1})\right)Q(\zeta) e^{\frac{i}{2k+1}\sum^\infty_{l = 2k+2} (n^{\Delta_{x_*}} \zeta)^{2k+1-l} E_l(\zeta) \theta(\zeta) \sigma_3}Q(\zeta)^{-1} E(\zeta)^{-1}.\nn\\
&=\hat{E}(\zeta) n^{\phi \sigma_3}\left(I+\bigO(n^{-\Delta_{x_*}})\right)Q(\zeta) \left(I+\bigO(n^{-\Delta_{x_*}})\right) Q(\zeta)^{-1} E(\zeta)^{-1},\\
&=\left(I+\bigO(n^{-\Delta_{x_*}})\right).
\end{align}
In the first line of the above we have used the asymptotics for $\widehat{\Phi}(z)$ as $z \rightarrow \infty$ with $\tau_{\infty,j}= E_j(\zeta)$ and $\tau_{\infty,0} = E_{0}(\zeta) + (2k+1) ( n i \xi_+(x_*) + \pi  \alpha^+_{x_*} )$. In the second line we have set $\hat{E}(\zeta) = E(\zeta)$ and $\phi = 0$.

\item[(ii)] If $x_* \in \partial\C{S}_*$ with $x_*$ forming the right edge of an interval in $\partial\C{S}_*$, i.e. a right edge point, then for $z = x_* + \zeta \in \partial D_{x_*}$ we have,
\beq
P^{(\infty)}(z) e^{n \xi(z) \sigma_3} = E(\zeta) \zeta^{-\frac{\sigma_3}{4}}N  e^{\pi i \alpha^+_{x_*}\theta(\zeta) \sigma_3 - \frac{2}{2k+3}\sum^\infty_{l = 0} (n^{\Delta_{x_*}} \zeta)^{k+\frac{3}{2}-l} E_l(\zeta)\sigma_3}.
\eeq
Taking $P^{(x_*)}$ to be of the form \eqref{Px*} we have,
\begin{align}
&P^{(x_*)}(x_*+\zeta) e^{-n \xi(x_* + \zeta) \sigma_3} P^{(\infty)}(z)^{-1}= \hat{E}(\zeta) n^{\phi \sigma_3}\times \\
& \left(I+\bigO(n^{-\Delta_{x_*}})\right) (n^{\Delta_{x_*}} \zeta)^{-\frac{\sigma_3}{4}} N  e^{\frac{2}{2k+3}\sum^\infty_{l = k+2} (n^{\Delta_{x_*}} \zeta)^{k+\frac{3}{2}-l} E_l(\zeta)\sigma_3}
N^{-1} \zeta^{\frac{\sigma_3}{4}} E(\zeta)^{-1}.\nn\\
&=\hat{E}(\zeta) \left(I+\bigO(n^{-\frac{\Delta_{x_*}}{2}})\right)\zeta^{-\frac{\sigma_3}{4}} N\left(I+\bigO(n^{-\frac{\Delta_{x_*}}{2}})\right) N^{-1} \zeta^{\frac{\sigma_3}{4}} E(\zeta)^{-1},\\
&=\left(I+\bigO(n^{-\frac{\Delta_{x_*}}{2}})\right).
\end{align}
In the first line of the above we have set $\phi = \Delta/4$ and used the asymptotics for $\widehat{\Phi}(z)$ as $z \rightarrow \infty$ with $\tau_{\infty,j} = E_j(\zeta)$. In the second line we have set $\hat{E}(\zeta) = E(\zeta)$.

\item[(iii)] If $x_* \in \C{E}_*$ then, for $z = x_* + \zeta \in \partial D_{x_*}$ we have,
\beq
P^{(\infty)}(z) e^{n \xi(z) \sigma_3} =  E(\zeta)\zeta^{\beta \sigma_3} e^{\pi i \alpha^+_{x_*} \theta(\zeta) \sigma_3 + c \sigma_3 \log\left(n^{\Delta_{x_*}} \zeta \right) - \frac{1}{2k}\sum^\infty_{l = 0} (n^{\Delta_{x_*}} \zeta)^{2k-l} E_l(\zeta)\sigma_3}.
\eeq
Taking $P^{(x_*)}$ to be of the form \eqref{Px*} we have,
\begin{align}
&P^{(x_*)}(x_*+\zeta) e^{-n \xi(x_* + \zeta) \sigma_3} P^{(\infty)}(z)^{-1}= \hat{E}(\zeta) n^{\phi \sigma_3}\times \\
& \left(I+\bigO(n^{-\Delta_{x_*}})\right) n^{\Delta_{x_*}(\hat{\alpha}_\mathrm{tot} + \hat{c} -c) \sigma_3} \zeta^{(\hat{\alpha}_\mathrm{tot} + \hat{c} - c-\beta)\sigma_3} e^{\frac{1}{2k}\sum^\infty_{l = 2k+1} (n^{\Delta_{x_*}} \zeta)^{2k-l} E_l(\zeta)\sigma_3} E(\zeta)^{-1}.\nn\\
&= \hat{E}(\zeta) n^{\phi \sigma_3}\left(I+\bigO(n^{-\Delta_{x_*}})\right) n^{\Delta_{x_*}(\hat{\alpha}_\mathrm{tot} + \hat{c} -c) \sigma_3}
\zeta^{(\hat{c} - \C{N})\sigma_3} \left(I+\bigO(n^{-\Delta_{x_*}})\right)
 E(\zeta)^{-1},\nn\\
&=\left(I+\bigO(n^{|2\Delta_{x_*}(\hat{\alpha}_\mathrm{tot} -n \tilde{\epsilon}^\delta_{x_*})|-\Delta_{x_*}})\right).
\end{align}
In the first line of the above we have used the asymptotics for $\widehat{\Phi}(z)$ as $z \rightarrow \infty$ with $\tau_{\infty,j}(z) = E_j(z)$. In the second line we have used that $\beta := \alpha^\delta_{x_*} - n\tilde{\epsilon}^\delta_{x_*}$, $\hat{\alpha}_\mathrm{tot} = \alpha^\delta_{x_*}$,  $c = n \epsilon^\delta_{x_*}$, $n\tilde{\epsilon}^\delta_{x_*} = n \epsilon^\delta_{x_*} - \C{N}$ and $\hat{E}(z) = E(z)$. In the third line we have chosen $\phi = -\Delta_{x_*}(\hat{\alpha}_\mathrm{tot} + \hat{c} -c) = -\Delta_{x_*}(\hat{\alpha}_\mathrm{tot} -n \tilde{\epsilon}^\delta_{x_*})$. Note that in order for condition (d) to be satisfied in this case we require $|\hat{\alpha}_\mathrm{tot} -n \tilde{\epsilon}^\delta_{x_*}| < 1/2$.
\end{itemize}
Let us note that the choices made for $\tau_{\infty,j}$ in each case are consistent with the existence of a solution to the model problem due to the properties, shown in Lemma \ref{xi asymptotics}, of the $E_l$ in a small neighbourhood around zero. In particular we see that if $D_{x_*}$ is taken sufficiently small then $\vec{\tau}_{\infty}$ will be in the neighbourhood of an admissible vector and therefore the RH is solvable (see Remark \ref{admissible vec}) and \eqref{Px*} is well defined for our choice of $\vec{\tau}_\infty$.
\end{proof}

\begin{remark}
Note that we have again omitted the left edge case. The local parametrix at a left edge can be constructed in terms of the edge type model problem as follows,
\beq 
P^{(x_*)}(z) = \hat{E}(z) n^{\frac{1}{4}\Delta_{x_*} \sigma_3} \sigma_3 \widehat{\Phi}(-f(z)) \sigma_3 e^{\pi i \sigma_3 \theta(z)\alpha^-_{x_*} }.
\eeq
Checking the above satisfies the conditions of the local parametrix follows the case of $x_*$ being a right edge point.
\end{remark}

\section{Asymptotics for $K_n(x,y)$: Proof of theorem \ref{Kthm}}
We begin by expressing the correlation kernel in terms of the $Y$ RH problem,
\beq
\label{KYeqn}
K_n(x,y) = \frac{1}{2\pi i} \frac{\sqrt{w(x)w(y)}}{x-y} \begin{pmatrix}0&1\end{pmatrix} Y_+(y)^{-1}Y_+(x) \begin{pmatrix}1\\0\end{pmatrix}.
\eeq
The proof proceeds by inverting the sequence of transformations $Y \mapsto \Psi \mapsto S \mapsto T\mapsto R$ in the steepest descent analysis. The result is that for $z \in D_{x_*}$ we have,
\beq
Y(z) = e^{-\frac{n \ell \sigma_3}{2}} R(z) E(z) n^{\phi \sigma_3} \Phi(f(z)) e^{\pi i \alpha^+_{x_*} \theta(z) \sigma_3} \bar{w}(z)^{-\frac{\sigma_3}{2}}.
\eeq
Note that in the above equation we have used the fact that $ e^{\pi i \alpha^+_{x_*} \theta(z) \sigma_3} K(z)  e^{-\pi i \alpha^+_{x_*} \theta(z) \sigma_3} = \hat{K}(f(z))$. By substituting the above equation into \eqref{KYeqn} we arrive at,
\begin{align}
&K_n(x,y) = -\frac{e^{2\pi i \alpha^+_{x_*}}}{2\pi i (x-y)} \sqrt{\frac{w_\mathrm{br}(x)w_\mathrm{br}(y)}{\bar{w}_\mathrm{br}(x) \bar{w}_\mathrm{br}(y)}} \times \\
&\begin{pmatrix}-\phi_2(f(y)), & \phi_1(f(y))\end{pmatrix} n^{-\phi \sigma_3}E(y)^{-1} R(y)^{-1} R(x) E(x) n^{\phi \sigma_3} \begin{pmatrix}\phi_1(f(x)), & \phi_2(f(x))\end{pmatrix}^T \nn,
\end{align}
where we have introduced the functions $\phi_i(z)$ from Definition \eqref{phi def}. Noting that,
\beq
\frac{w_\mathrm{br}(x)}{\bar{w}_\mathrm{br}(x)} = e^{-2\pi i \alpha_\Gamma(x)},
\eeq
we obtain,
\begin{align}
&K_n(x,y) = -\frac{e^{\pi i (2 \alpha^+_{x_*} - \alpha_\Gamma(x) - \alpha_\Gamma(y))}}{2\pi i (x-y)}  \times \\
&\begin{pmatrix}-\phi_2(f(y)), & \phi_1(f(y))\end{pmatrix} n^{-\phi \sigma_3}E(y)^{-1} R(y)^{-1} R(x) E(x) n^{\phi \sigma_3} \begin{pmatrix}\phi_1(f(x)), & \phi_2(f(x))\end{pmatrix}^T. \nn
\end{align}
Next, using the properties of $R$, we have that,
\beq
R(x_* + n^{-\Delta_{x_*}} v)^{-1} R(x_* + n^{-\Delta_{x_*}} u) = I + \bigO(\frac{u-v}{n^{\Delta_{x_*}}}).
\eeq
Similarly we have that,
\beq
E(x_* + n^{-\Delta_{x_*}} v)^{-1} E(x_* + n^{-\Delta_{x_*}} u) = I + \bigO(\frac{u-v}{n^{\Delta_{x_*}}}).
\eeq
Using the above expression together with those for $\bar{w}_\mathrm{br}$ we obtain,
\begin{align}
K_n &(x_* + n^{-\Delta_{x_*}} u, x_* + n^{-\Delta_{x_*}}v) = -n^{\Delta_{x_*}}e^{\pi i (2 \alpha^+_{x_*} - \alpha_\Gamma(x_* + n^{-\Delta_{x_*}} u) - \alpha_\Gamma(x_* + n^{-\Delta_{x_*}} v))}  \times \nn \\
&\frac{\phi_1(u)\phi_2(v) - \phi_1(v)\phi_2(u)}{2\pi i (u-v)}(1 + \bigO(n^{-\Delta_{x_*} + 2|\phi|})),
\end{align}
as $n \rightarrow \infty$. Finally we note,
\begin{align}
\alpha_\Gamma(x_* + n^{-\Delta_{x_*}} u) &= \sum_{b \in \C{B}} \chi_b(\Re (x_* + n^{-\Delta_{x_*}} u)) \alpha_b \nn \\
&=  \sum_{b \in \C{B}|_{x_*}} \chi_b(\Re (x_* + n^{-\Delta_{x_*}} u)) \alpha_b + \sum_{b \notin \C{B}|_{x_*}} \chi_b(\Re (x_* + n^{-\Delta_{x_*}} u)) \alpha_b \nn\\
&= \alpha^+_{x_*} + \sum_{b \in \C{B}|_{x_*}} \chi_b(\Re (x_* + n^{-\Delta_{x_*}} u)) \alpha_b \nn \\
&= \alpha^+_{x_*} + \sum_{b \in B} \chi_b(\Re u) \alpha_b.
\end{align}
We therefore obtain,
\begin{align}
n^{-\Delta_{x_*}} K_n &(x_* + n^{-\Delta_{x_*}} u, x_* + n^{-\Delta_{x_*}}v) = \\
&-\frac{e^{-\pi i (\hat{\alpha}_\Gamma(u) + \hat{\alpha}_\Gamma(v))}}{2\pi i } \frac{\phi_1(u)\phi_2(v) - \phi_1(v)\phi_2(u)}{u-v}(1 + \bigO(n^{-\Delta_{x_*} + 2|\phi|}))\nn,
\end{align}
from which the theorem follows. The data for the model problem comes directly from the assignments made in the construction of the local parametrix. This is easy in the case of $I$, $B$ and $\{\vec{\tau}_b\}$. For $\vec{\tau}_\infty$ note that from the construction of the local parametrix in the exterior and edge case we have that at finite $n$, $\tau_{\infty,l} = E_l(n^{-\Delta_{x_*}} f(z))$. In the case of an interior point we have $\tau_{\infty,l}(z) = E_l(n^{-\Delta_{x_*}} f(z))$ with the exception of $\tau_{\infty,0}$ for which we have,
\beq 
\tau_{\infty,0}(z) = E_{0}( n^{-\Delta_{x_*}} f(z)) + (2k+1) ( n i \xi_+(x_*) + \pi  \alpha^+_{x_*} ).
\eeq
The vector $\vec{\tau}_\infty$ appearing in the model problem for the limiting kernel will be given by $\lim_{n\rightarrow \infty} \vec{\tau}_\infty(x_* + n^{-\Delta_{x_*}} u)$, which we now compute.

For the edge and exterior case we have,
\beq
\label{lim tau inf}
\lim_{n\rightarrow \infty} \vec{\tau}_\infty(x_* + n^{-\Delta_{x_*}} u) = \lim_{n\rightarrow \infty} E_l(0)
\eeq
and for the interior case we have \eqref{lim tau inf} for $l \neq 0$ and,
\beq
\lim_{n\rightarrow \infty} \tau_{\infty,0}(x_* + n^{-\Delta_{x_*}} u) = \lim_{n\rightarrow \infty}  (E_{0}(0) + (2k+1) ( n i \xi_+(x_*) + \pi  \alpha^+_{x_*} )).
\eeq
Now note that,
\beq
\hat{\xi}(\zeta) = \lim_{n \rightarrow \infty} Y_0(\zeta)
\eeq
which, using $E_l(0) = q_{0,l}$, can be written for large $\zeta$ as follows,
\begin{itemize}
\item[(i)] if $x_*$ is an interior point of order $k$ we have,
\beq
\hat{\xi}(\zeta)  = -\frac{i}{2k+1} \sum^{2k+1}_{j = 0} E_j(0)\zeta^j + \bigO(\zeta^{-1}),
\eeq
\item[(ii)] if $x_*$ is an edge point of order $k$ we have,
\beq
\zeta^{-\frac{1}{2}}\hat{\xi}(\zeta)  =  -\frac{2}{2k+3} \sum^{k+1}_{j = 0} E_j(0) \zeta^j +\bigO(\zeta^{-1}),
\eeq
\item[(iii)] if $x_*$ is an exterior point of order $k$ we have,
\beq
\hat{\xi}(\zeta) = - \hat{c} \log\zeta - \frac{1}{2k}\sum^{2k}_{j = 0} E_j(0) \zeta^j + \bigO(\zeta^{-1}).
\eeq
\end{itemize}
This completes the final part of the proof showing that the model problem in Definition \ref{model problem at xs} is the one appearing in Theorem \ref{Kthm}.
\qed

\section{Example applications}
\subsection{The Painlev\'{e} II kernel in the bulk}
We consider the model,
\beq
\frac{1}{Z_n} e^{-n \Tr V(M)} dM,
\eeq
where $V(M) = \frac{t}{2} M^2 + \frac{1}{4}M^4$. This model possess two distinct phases; for $t > -2$ the equilibrium measure support is connected, whereas for $ t< -2$ the support consists of two disjoint intervals. The phase transition between these two phases was first analysed in \cite{2002PIIBI}, in which it was shown that as the two cuts meet at the origin, the kernel there may be expressed in terms of Painlev\'{e} II transcendents. We now will see how this result follows from Theorem \ref{Kthm}.

From \cite{2002PIIBI} we have the spectral curve,
\beq
y(x) = (x^2 + 2c)\sqrt{x^2 - a^2}
\eeq
with $a = (\frac{1}{3}(-2t + 2(t^2 + 12)^\frac{1}{2}))^\frac{1}{2}$ and $c=\frac{1}{3}(t + (\frac{1}{4}t^2 + 3)^\frac{1}{2})$.
Consider the spectral curve when $t = -2$,
\beq
y(x) = x^2\sqrt{x^2 - 4}.
\eeq
Now let $t \rightarrow -2$ as $n \rightarrow \infty$. In this limit we have $\C{S}_* = [-2,2]$ and $\C{H}_* = \{0\}$. Using Defintion \ref{hatH} we therefore see that $\hat{\C{H}}_* = \{0\}$. So $x_* = 0$ is in $\C{P}_*$ and from Definition \ref{point types} is an interior point of order $k = 1$. From Definition \ref{Delta def} we obtain $\Delta_{x_*} = \frac{1}{3}$. For this model we have $\C{I} = \mathbb{R}$ and $\C{B} = \varnothing$ which implies, using Definition \ref{model problem at xs}, $I = \mathbb{R}$ and $B = \varnothing$.

Now consider the limit $n \rightarrow \infty$ with $t=-2 + n^{-\frac{2}{3}}\tau$, this gives $c = \frac{1}{4} \tau n^{-\frac{2}{3}}$ and we note that this means $\sqrt{c}$ scales appropriately. Using Definition \ref{scaling y def} we therefore have,
\beq
\hat{y}(\zeta) = i(2\zeta^2 + \tau)
\eeq
and
\beq
\hat{\xi}(\zeta) = -i(\frac{1}{3}\zeta^3 + \tau \zeta).
\eeq
Hence, from Definition \ref{model problem at xs}, we have $\tau_{\infty,3} = 1$, $\tau_{\infty,2} = 0$ and $\tau_{\infty,1} = 3\tau$. This completes the construction of the data for the canonical model problem. Note that
this model problem is related, up to a rescaling of $z$, to the standard model problem for PII by \eqref{PIIrel}. Indeed we have,
$\Psi^{PII}(\bar{c} z) =  e^\frac{i \tau_{\infty,0}}{3}\widehat{\Phi}(z)$ for $z$ in the upper half plane and some constant $\bar{c}$. Note that the factor $e^\frac{i \tau_{\infty,0}}{3}$ cancels out in the expression for the kernel.

From Theorem \ref{Kthm} we have that there exists a constant $c$ such that,
\beq
\lim_{n\to\infty}c n^{-\frac{1}{3}}K_n\left(n^{-\frac{1}{3}} c u, n^{-\frac{1}{3}} c v\right) = -\frac{\phi_1(u)\phi_2(v) - \phi_1(v)\phi_2(u)}{2 \pi i (u-v)},
\eeq
where,
\beq
\begin{pmatrix} \phi_1(z) \\ \phi_2(z)\end{pmatrix} := \Psi^{PII}(z) \begin{pmatrix}1 \\ 0 \end{pmatrix},
\eeq
for $z$ outside the lens.

\subsection{A new kernel at a hard edge}
Consider the model,
\beq
\frac{1}{Z_n} |\det M|^{2\alpha_1} |\det(M-t)|^{2\alpha_2} e^{n M} dM,
\eeq
where $M$ is a negative definite hermitian matrix. We are going to consider the regime close to the origin
as we allow $t \rightarrow 0$ as $n \rightarrow \infty$. The matrix is taken to be negative definite so that the hard edge is a right edge point, which simplifies the construction.

The spectral curve is the standard Marchenko-Pastur curve associated with the Laguerre unitary ensemble,
\beq
y(x) = -\sqrt{\frac{x+4}{x}}.
\eeq
From the expression for $y$ we have $\C{S}_* = [-4,0]$ and $\C{H}_* = \{0\}$. Using Defintion \ref{hatH} we see that $\hat{\C{H}}_* = \{0\}$ and therefore that the origin is a hard edge point with $k = -1$. From Definition \ref{Delta def} we obtain $\Delta_{x_*} = 2$. For this model we have $\C{I} = \mathbb{R}^{-}$ and $\C{B} = \{0, t\}$ with $\alpha_0 = \alpha_1$ and $\alpha_t = \alpha_2$. We require $t$ scales appropriately and therefore we set $t = n^{-2} \tau$. This implies we have $I = \mathbb{R}^-$ and $B = \{0,\tau\}$.
The scaling limit of $y$ is,
\beq
\hat{y}(\zeta) = -2\zeta^{-\frac{1}{2}},
\eeq
and therefore,
\beq
\hat{\xi} = 2\zeta^\frac{1}{2},
\eeq
which gives $\tau_{\infty,0} = -1$.

This completes the construction of the data for the canonical model problem. To summarise, the model problem that will appear in the kernel is,

\begin{figure}[t]
\centering 
\includegraphics[scale=0.4]{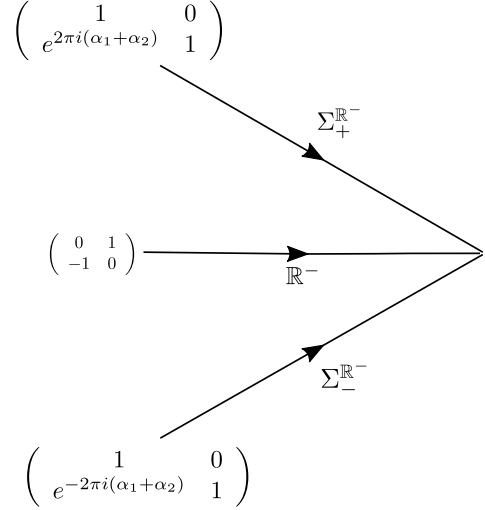}
\caption{The jump contours for the RH problem from which the new kernel at the hard-edge is constructed.}
\label{PVjumps1}
\end{figure}

\subsubsection*{RH for $\widehat{\Phi}$}
\begin{itemize}
\item[(a)] $\widehat{\Phi} : \mathbb{C} \setminus \mathbb{R}^- \setminus \Sigma^{\mathbb{R}^-}_+\setminus \Sigma^{\mathbb{R}^-}_- \rightarrow \mathbb{C}^{2 \times 2}$ is analytic in $z$.
\item[(b)] $\widehat{\Phi}$ has the jumps shown in Figure \ref{PVjumps1}
\item[(c)] As $z \rightarrow \infty$,
\beq
\widehat{\Phi}(z)=\left(I+\bigO(z^{-1})\right) z^{-\frac{\sigma_3}{4}} N e^{2 z^\frac{1}{2}\sigma_3}.
\eeq
\item[(d)] As $z \rightarrow 0$,
\beq
\widehat{\Phi}(z) = \bigO(1)(z-b)^{\frac{\alpha_1}{2}  \sigma_3} \hat{K}(z).
\eeq
This can be written as,
\beq
\widehat{\Phi}(z)=\bigO \begin{pmatrix}z^\frac{\alpha_1}{2} & z^{-\frac{\alpha_1}{2}}\\  z^\frac{\alpha_1}{2} & z^{-\frac{\alpha_1}{2}} \end{pmatrix},
\eeq
for $z$ outside the lens and
\beq
\widehat{\Phi}(z)=\bigO \begin{pmatrix}z^{-\frac{\alpha_1}{2}} & z^{-\frac{\alpha_1}{2}}\\  z^{-\frac{\alpha_1}{2}} & z^{-\frac{\alpha_1}{2}} \end{pmatrix},
\eeq
for $z$ inside the lens.

As $z \rightarrow \tau$,
\beq
\widehat{\Phi}(z)=\bigO \begin{pmatrix}(z-\tau)^{-\frac{\alpha_2}{2}} & (z-\tau)^{-\frac{\alpha_2}{2}}\\  (z-\tau)^{-\frac{\alpha_2}{2}} & (z-\tau)^{-\frac{\alpha_2}{2}} \end{pmatrix}.
\eeq
\end{itemize}
Let us remark that this RH problem leads to Lax pairs which produce a second order ODE and therefore
will be one of the standard Painlev\'{e} equations.

From Theorem \ref{Kthm} we have,
\beq
\lim_{n\to\infty}n^{-\frac{1}{3}}K_n\left(n^{-\frac{1}{3}} u, n^{-\frac{1}{3}} v\right) = -\frac{\phi_1(u)\phi_2(v) - \phi_1(v)\phi_2(u)}{2 \pi i (u-v)},
\eeq
where,
\beq
\begin{pmatrix} \phi_1(z) \\ \phi_2(z)\end{pmatrix} := \Phi(z) \begin{pmatrix}1 \\ 0 \end{pmatrix},
\eeq
for $z \in \mathbb{H} \setminus I \setminus \cup_{b \in B} \Gamma_{b}$.

\bibliographystyle{hieeetr}
\bibliography{../researchbib}

\begin{thebibliography}{10}

\bibitem{jumpHypergeometric}
A.~{Its} and I.~{Krasovsky}, ``{Hankel determinant and orthogonal polynomials
  for the Gaussian weight with a jump},'' {\em ArXiv e-prints}, pp.~215--248,
  June 2007, 0706.3192.

\bibitem{ClaeysBirthofCut}
T.~Claeys, ``Birth of a cut in unitary random matrix ensembles,'' {\em
  International Mathematics Research Notices}, vol.~2008, 2008,
  http://imrn.oxfordjournals.org/content/2008/rnm166.full.pdf+html.

\bibitem{GMreview}
P.~{Ginsparg} and G.~{Moore}, ``{Lectures on 2D gravity and 2D string theory
  (TASI 1992)},'' {\em ArXiv High Energy Physics - Theory e-prints}, Apr. 1993,
  hep-th/9304011.

\bibitem{RMTQCDreview}
J.~J.~M. {Verbaarschot} and T.~{Wettig}, ``{Random Matrix Theory and Chiral
  Symmetry in QCD},'' {\em Annual Review of Nuclear and Particle Science},
  vol.~50, pp.~343--410, 2000, hep-ph/0003017.

\bibitem{BFB1}
P.~W. Brouwer, K.~M. Frahm, and C.~W.~J. Beenakker, ``Quantum mechanical
  time-delay matrix in chaotic scattering,'' {\em Phys. Rev. Lett.}, vol.~78,
  pp.~4737--4740, Jun 1997.

\bibitem{BFB2}
P.~W. Brouwer, K.~M. Frahm, and C.~W.~J. Beenakker, ``Distribution of the
  quantum mechanical time-delay matrix for a chaotic cavity,'' {\em Waves in
  Random Media}, vol.~9, no.~2, pp.~91--104, 1999.

\bibitem{FZZTbranesBanksDouglas}
T.~Banks, M.~R. Douglas, N.~Seiberg, and S.~H. Shenker, ``Microscopic and
  macroscopic loops in non-perturbative two dimensional gravity,'' {\em Physics
  Letters B}, vol.~238, no.~2‚Äì4, pp.~279 -- 286, 1990.

\bibitem{MT}
C.~{Texier} and S.~N. {Majumdar}, ``{{Wigner Time-Delay Distribution in Chaotic
  Cavities and Freezing Transition}},'' {\em Physical Review Letters},
  vol.~110, p.~250602, June 2013, 1302.1881.

\bibitem{GT}
A.~{Grabsch} and C.~{Texier}, ``{{Capacitance and charge relaxation resistance
  of chaotic cavities - Joint distribution of two linear statistics in the
  Laguerre ensemble of random matrices}},'' {\em ArXiv e-prints}, July 2014,
  1407.3302.

\bibitem{CI}
Y.~Chen and A.~Its, ``Painlev\'{e} {III} and a singular linear statistics in
  hermitian random matrix ensembles, {I},'' {\em Journal of Approximation
  Theory}, vol.~162, no.~2, pp.~270 -- 297, 2010.

\bibitem{FyodorovKeating2}
Y.~V. {Fyodorov} and J.~P. {Keating}, ``{Freezing transitions and extreme
  values: random matrix theory, and disordered landscapes},'' {\em Royal
  Society of London Philosophical Transactions Series A}, vol.~372, p.~20503,
  Dec. 2013, 1211.6063.

\bibitem{BMM}
L.~Brightmore, F.~Mezzadri, and M.~Mo, ``A matrix model with a singular weight
  and painlev\'{e} {III},'' {\em Communications in Mathematical Physics},
  vol.~333, no.~3, pp.~1317--1364, 2015.

\bibitem{XDZ}
S.-X. Xu, D.~Dai, and Y.-Q. Zhao, ``Critical edge behavior and the bessel to
  airy transition in the singularly perturbed laguerre unitary ensemble,'' {\em
  Communications in Mathematical Physics}, vol.~332, no.~3, pp.~1257--1296,
  2014.

\bibitem{XDZ2}
S.-X. Xu, D.~Dai, and Y.-Q. Zhao, ``Painlev\'{e} {III} asymptotics of hankel
  determinants for a singularly perturbed laguerre weight,'' {\em Journal of
  Approximation Theory}, vol.~192, no.~0, pp.~1 -- 18, 2015.

\bibitem{ACM}
M.~R. {Atkin}, T.~{Claeys}, and F.~{Mezzadri}, ``{Random matrix ensembles with
  singularities and a hierarchy of Painlev\'{e} {III} equations},'' {\em ArXiv
  e-prints}, Jan. 2015, 1501.04475.

\bibitem{CIK}
T.~Claeys, I.~Krasovsky, and A.~Its, ``{Higher-order analogues of the
  Tracy-Widom distribution and the Painlev\'{e} {II} hierarchy},'' {\em
  Communications on Pure and Applied Mathematics}, vol.~63, no.~3,
  pp.~362--412, 2010.

\bibitem{TRreview}
B.~{Eynard} and N.~{Orantin}, ``{Algebraic methods in random matrices and
  enumerative geometry},'' {\em ArXiv e-prints}, Nov. 2008, 0811.3531.

\bibitem{claeysPII2005}
T.~Claeys and A.~B.~J. Kuijlaars, ``Universality of the double scaling limit in
  random matrix models,'' {\em Communications on Pure and Applied Mathematics},
  vol.~59, no.~11, pp.~1573--1603, 2006.

\bibitem{claeys2008multi}
T.~Claeys, A.~B. Kuijlaars, and M.~Vanlessen, ``{Multi-critical unitary random
  matrix ensembles and the general Painlev{\'e} {II} equation},'' {\em Annals
  of Mathematics}, pp.~601--641, 2008.

\bibitem{DKM}
P.~Deift, T.~Kriecherbauer, and K.-R. McLaughlin, ``New results on the
  equilibrium measure for logarithmic potentials in the presence of an external
  field,'' {\em Journal of Approximation Theory}, vol.~95, no.~3, pp.~388 --
  475, 1998.

\bibitem{DKMVZ2}
P.~Deift, T.~Kriecherbauer, K.~T.-R. McLaughlin, S.~Venakides, and X.~Zhou,
  ``Uniform asymptotics for polynomials orthogonal with respect to varying
  exponential weights and applications to universality questions in random
  matrix theory,'' {\em Communications on Pure and Applied Mathematics},
  vol.~52, no.~11, pp.~1335--1425, 1999.

\bibitem{Vanlassen0305044}
A.~B.~J. {Kuijlaars} and M.~{Vanlessen}, ``{Universality for Eigenvalue
  Correlations at the Origin of the Spectrum},'' {\em Communications in
  Mathematical Physics}, vol.~243, pp.~163--191, 2003, math-ph/0305044.

\bibitem{IKO}
A.~R. Its, A.~B.~J. Kuijlaars, and J.~Östensson, ``Critical edge behavior in
  unitary random matrix ensembles and the thirty-fourth painlev\'{e}
  transcendent,'' {\em International Mathematics Research Notices}, vol.~2008,
  2008, http://imrn.oxfordjournals.org/content/2008/rnn017.full.pdf+html.

\bibitem{claeysPIsol}
T.~Claeys and M.~Vanlessen, ``{The existence of a real pole-free solution of
  the fourth order analogue of the Painlev\'{e} I equation},'' {\em
  Nonlinearity}, vol.~20, no.~5, p.~1163, 2007.

\bibitem{FokasItsKitaev}
A.~Fokas, A.~Its, and A.~Kitaev, ``{The isomonodromy approach to matrix models
  in 2D quantum gravity},'' {\em Communications in Mathematical Physics},
  vol.~147, no.~2, pp.~395--430, 1992.

\bibitem{2002PIIBI}
P.~M. Bleher and A.~R. Its, ``{Double scaling limit in the random matrix model:
  The Riemann-Hilbert approach},'' {\em Communications on Pure and Applied
  Mathematics}, vol.~56, no.~4, pp.~433--516, 2003.

\end{thebibliography}

\end{document}